\newtheorem{theorem}{\color{black}{\bfseries Theorem}}
\newtheorem{remark}{\color{black}{\bfseries Remark}}
\definecolor{applegreen}{rgb}{0.55, 0.71, 0.0}
\definecolor{bazaar}{rgb}{0.5, 1.0, 0.83}
\definecolor{bazaar}{rgb}{0.43, 0.5, 0.5}
\definecolor{bazaar}{rgb}{0.6, 0.47, 0.48}
\newcolumntype{|}{!{\vrule width 1.2pt}} 
\newcommand{\customCline}[3]{%
  \noalign{\global\arrayrulewidth=#1}%
  \cline{#2-#3}%
  \noalign{\global\arrayrulewidth=.4pt}%
}
\begin{document}
\medmuskip=0mu
\thinmuskip=0.5mu
\thickmuskip=0.5mu 
\nulldelimiterspace=0pt
\scriptspace=0pt
\bstctlcite{IEEEexample:BSTcontrol}
\title{Rigid Communication Topologies: Impact on Stability, Safety, Energy Consumption, Passenger Comfort, and Robustness of Vehicular Platoons}
  \author{Amir~Zakerimanesh$^{1,2}$,
      Tony~Zhijun~Qiu$^{2}$, Mahdi~Tavakoli$^{1}$

\thanks{
$^{1}$ Department of Electrical and Computer Engineering, University of Alberta, Edmonton, Alberta, Canada.

$^{2}$ Center for Smart Transportation, Department of Civil and Environmental Engineering, University of Alberta, Edmonton, Alberta, Canada.

}}

\markboth{	
}{A. Zakerimanesh \MakeLowercase{\textit{et al.}}:Rigid Communication Topologies: Impact on Stability, Safety, Energy Consumption, and Robustness of Vehicular Platoons}
\maketitle
\begin{abstract}
This paper investigates the impact of rigid communication topologies (RCTs) on the performance of  vehicular platoons, aiming to identify beneficial features in RCTs that enhance vehicles behavior. We introduce four performance metrics, focusing on safety, energy  consumption, passenger comfort, and robustness of vehicular platoons. The safety metric is based on momentary distances between neighboring vehicles, their relative velocities, and relative accelerations. Thus, to have access to these relative values, we formulate the coupled dynamics between pairs of neighboring vehicles, considering initial conditions (position, velocity, acceleration), leader vehicle's velocity/acceleration trajectory, deployed RCT, and vehicles' parity/disparity. By decoupling the dynamics using a mapping matrix structured on deployed RCT, vehicles' features, and control gains, precise formulations for distance errors, relative velocities, and relative accelerations between all neighboring vehicles, over the travel time, are obtained. Comparing performance metric results across RCTs highlights that downstream information transmission—from vehicles ahead, particularly the leader vehicle, to vehicles behind—significantly enhances platoon stability, safety, energy  consumption, and passenger comfort metrics. Conversely, receiving state information from vehicles behind degrades metrics, compromising safety, increasing energy  consumption, and reducing passenger comfort. These findings underscore that forward-looking, leader-centric communications between vehicles markedly enhance platoon efficiency and safety.
\end{abstract}
\begin{IEEEkeywords}
Vehicular Platoon, V2V, Safety, Stability, Passenger Comfort, Robustness, Energy Consumption
\end{IEEEkeywords}
%
\IEEEpeerreviewmaketitle
\section{Introduction}
Vehicular platooning, the coordinated movement of a group of autonomous or semi-autonomous vehicles, has garnered significant attention in recent years due to its potential to enhance road safety, improve traffic flow, and reduce fuel consumption and emissions \cite{alam2013cooperative,xu2014impact}. Central to the successful implementation of vehicular platooning is the communication topology employed, which dictates how information is exchanged among vehicles within the platoon \cite{balador2022survey,xu2014communication,taylor2022vehicular,razzaghpour2022impact}. The advent of Connected and Automated Vehicles (CAVs), equipped with Vehicle to Vehicle (V2V) communication capabilities, marks a crucial step toward the future, promising substantial improvements in traffic safety \cite{amoozadeh2015platoon,razzaghpour2021impact}.

Recent studies have increasingly focused on investigating the impact of communication topologies on various aspects of vehicular platooning. For instance, the role of V2V communication in enhancing safety within platooning systems is explored in \cite{sidorenko2021safety}, demonstrating that emergency braking strategies utilizing V2V communication substantially outperform radar-based automatic emergency braking systems (AEBS) in terms of permissible intervehicle distances and response times. Similarly, \cite{prayitno2021v2v} discusses how communication impacts disturbance propagation and overshoot magnitude in both downstream and upstream directions.

The unreliability of V2V communication in platoons, due to factors such as hardware damage, packet loss, or cyber-attacks, can lead to instability and increased collision risks. For example, in cooperative adaptive cruise control (CACC) systems, vehicles employing the predecessor-following (PF) strategy revert to adaptive cruise control (ACC) during communication interruptions. In such cases, the sensors of the ACC system must replace V2V communication for information collection, which can compromise safety and increase the risk of collisions \cite{milanes2013cooperative,milanes2014modeling}. To mitigate these risks, a dynamic weights optimization-based CAV following model (DWOC) with a multi-predecessor following (MPF) topology is proposed in \cite{wang2024stability}, which adaptively adjusts information weights to enhance stability under varying communication conditions.

Other strategies have also been proposed to enhance stability and reliability under communication challenges. In \cite{he2021distributed}, the compromised communication in vehicle platooning is tackled by designing a observer-based distributed controller that identifies sensors under attack, ensuring the system's resilience and stability despite disrupted or malicious communication. Communication strategies combining synchronized communication slots with transmit power adaptation are studied in \cite{segata2015toward}, proving their suitability for cooperative driving applications even in highly congested freeway scenarios. The impact of random link interruptions and topology changes caused by packet loss and erasure channels, modeled using Markov chains, is analyzed in \cite{nguyen2017impact} to understand their effect on vehicle platoon formation and robustness. Additionally, the impact of communication delays on vehicle platoon safety and stability has been a significant focus, as explored in \cite{khattak2023impact,zhang2023impacts}. Studies like \cite{sun2020impacts} have investigated how attacks on communication and sensing systems can introduce delays and false data, potentially driving platoons into unsafe states and increasing the risk of high-velocity crashes through reachability analysis.

Furthermore, the impact of different communication technologies on platoon performance has been examined. For example, the impact of Dedicated Short-Range Communication (DSRC) and Fifth-Generation (5G) cellular communication on energy consumption and stability in electric truck platoon formations is analyzed in \cite{devika2023impact}. Additionally, \cite{qin2019impact} studies how optimizing traffic flow stability through V2V communication can enhance passenger comfort in CAVs.

While existing literature has made significant strides in examining the adversarial effects of non-ideal communications in vehicular platoons, there is still much to explore regarding the distinct characteristics of different communication topologies (CTs) and their effects on platoon performance. Insights into the positive and negative features arising from the communication structure itself can help researchers develop more efficient and optimized vehicular platoon systems for real-world applications. Recently, studies have begun addressing these gaps. For example, \cite{pirani2022impact} applies a general graph theory framework to explore the impact of connectivity measures within CTs on the performance of distributed algorithms, assessing their ability to mitigate communication disruptions, detect cyber-attacks, and maintain resilience. The study reveals that traditional platoon topologies based on nearest-neighbor interactions are highly vulnerable in terms of performance and security, while k-nearest neighbor topologies can achieve better security and performance levels. Furthermore, \cite{ruan2022impacts} explores distinctions among three unidirectional communication topologies (UCTs) and their implications for stability, robustness, safety, and emissions in vehicle platoons, using safety metrics such as maximum time to collision (MTTC) and deceleration rate to avoid a crash (DRAC).

This paper introduces a comprehensive analytical framework for evaluating rigid communication topologies (RCTs) in vehicular platoons. Our approach formulates the coupled dynamics between neighboring vehicles, incorporating initial conditions (position, velocity, and acceleration), the leader vehicle's trajectory, vehicle heterogeneity, and the deployed RCT. By decoupling these dynamics using a mapping matrix, we derive precise formulations for distance errors (from desired intervehicle distances), relative velocities, and relative accelerations between neighboring vehicles. Based on these relative measurements, we define and analyze four key performance metrics—Safety, Energy Consumption, Passenger Comfort, and Robustness—to evaluate the impact of different RCTs on platoon performance. 

Our study reveals that forward-looking communication strategies enhance safety, reduce energy consumption, and improve robustness and passenger comfort compared to topologies where information is primarily received from vehicles behind. Specifically, topologies such as TPFL, MPF, PFL, and BDL significantly outperform those common communications where information is predominantly received from vehicles behind, such as BD and SPTF. The latter result in degraded performance, increased safety risks, higher energy consumption, and reduced passenger comfort. These findings indicate that for larger platoon sizes, forward-looking, leader-centric communications like TPFL and PFL generally offer superior performance. Additionally, the study models platoon closed-loop dynamics based on relative measurements for both homogeneous and heterogeneous platoons, using these dynamics to analyze platoon stability.

The paper is organized as follows: Section \ref{section2} covers Preliminaries, outlining key concepts. Section \ref{section3} details Problem Formulation. Section \ref{platoonStability} examines Platoon Stability. Section \ref{pmetrcis} introduces Performance Metrics. Section \ref{simpre} presents Simulations and Results. Finally, Section \ref{conclusions} summarizes the findings and drawn conclusions.
\section{\textbf{Preliminaries}}
\label{section2}
For the sake of conciseness, we generally omit the explicit indication of the time argument in signal functions, except where clarity requires it. Additionally, we use semicolons and colons to differentiate elements of vertical and horizontal vectors, respectively. For example, $[.;.;.]$ denotes a 3-by-1 vector, while $[.,.,.]$ denotes a 1-by-3 vector.  Figure \ref{platoon} depicts a platoon including the leader vehicle, designated by $0$, and $n$ follower vehicles (FVs) where are labeled by $1,\dots,i-1,i,\dots,n-1,n$. The $x$ axis shows the movement direction of vehicles, and the notation $x_{0}$ indicates the front position of the leader vehicle, and $x_{i}$; $i=1,\dots,n$, show the front positions of the FVs, respectively. Also, $L_{i}$; $i=0,\dots,n$, show the length of the vehicles. Other notations will be touched on later.
\begin{figure}[htbp!]
\begin{center}
\resizebox{1\hsize}{!}{\includegraphics*{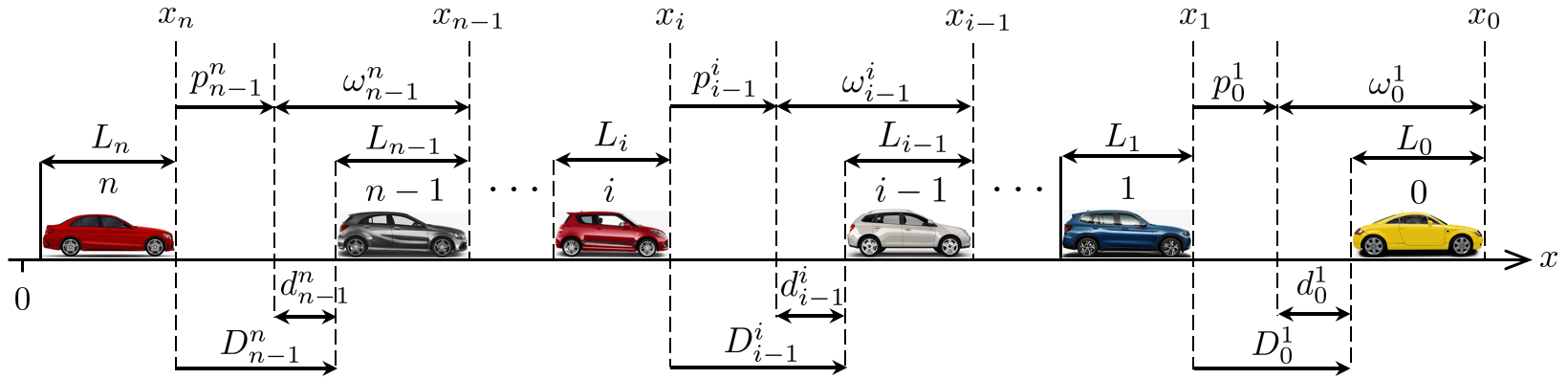}}
\caption{Schematic of a heterogeneous platoon.}
\label{platoon}
\end{center}
\end{figure}

Platooning is to uphold desired distances between vehicles while ensuring that FVs gradually align their velocities ($v_{i}$) and accelerations ($a_{i}$) with the leader's values ($v_{0}$ and $a_{0}$). Thus. the goal is to achieve
\begin{equation}
\begin{cases}
v_{i}=v_{0}; & i=1,\dots,n\\
a_{i}=a_{0}; & i=1,\dots,n
\\
        x_{i-1}-x_{i}=\omega_{i-1}^{i}; & i=1,\dots,n
\end{cases}
\label{aims}
\end{equation}
in which $\omega_{i-1}^{i}\triangleq L_{i-1}+d_{i-1}^{i}$, and the preset value $d_{i-1}^{i}$ represents the desired distance between vehicles $i-1$ and $i$.
\subsection{\textbf{Vehicle Dynamics}}
The leader vehicle undergoes no control process, but its position, velocity, and acceleration are utilized to control the FVs. Given that said, the behavior of each FV is characterized by the following formulation \cite{huang1998longitudinal}:
\begin{equation}
    \dot{a}_{i}=f_{i}\left(v_{i},a_{i}\right)+g_{i}\left(v_{i}\right)c_{i};
    \hspace{0.8cm} i=1,\dots,n
    \label{vehicle_dynamics}
\end{equation}
where $v_{i}$, $a_{i}$, and $c_{i}$ are velocity, acceleration, and engine input of the $i^{th}$ follower. 
Also, $f_{i}\left(v_{i},a_{i}\right)$ and $g_{i}\left(v_{i}\right)$ are defined as:
\begin{equation}
\begin{cases}    
\begin{split}
&f_{i}\left(v_{i},a_{i}\right)=-\frac{1}{\tau_{i}}\left(a_{i}+\frac{\sigma A_{cs_{i}}C_{d_{i}}v_{i}^{2}}{2m_{i}}+\frac{d_{m_{i}}}{m_{i}}\right)-\frac{\sigma A_{cs_{i}}C_{d_{i}}v_{i}a_{i}}{m_{i}}
\end{split}
\\
\begin{split}
&g_{i}\left(v_{i}\right)=\frac{1}{\tau_{i}m_{i}}
\end{split}
\end{cases}
\end{equation}
where the parameters $\sigma$, $A_{cs_{i}}$, $C_{d_{i}}$, $d_{m_{i}}$, $m_{i}$, and $\tau_{i}$ are the specific mass of air, vehicle cross-sectional area, drag coefficient, mechanical drag, mass, and engine time constant, respectively. The engine input $c_{i}$ is governed by the following feedback linearization controller:
\begin{equation}
c_{i}=u_{i}m_{i}+0.5\sigma A_{cs_{i}}C_{d_{i}}v_{i}^{2}+d_{m_{i}}+\tau_{i}\sigma A_{cs_{i}}C_{d_{i}}v_{i}a_{i}
\label{flc}
\end{equation}
where by substituting it into \eqref{vehicle_dynamics}, each FV within the platoon can mathematically described by a third-order linear model \cite{zheng2015stabilitym,zheng2015stability,li2017distributed,ge2022scalable,lan2021data,huang2022design}:
\begin{equation}
\tau_{i}\dot{a}_{i}+a_{i}=u_{i};
    \hspace{0.8cm} i=1,\dots,n
\label{fl}
\end{equation}
in which $u_{i}$ is the linear controller needs to be designed properly. Let ${\mathbf{X}_{i}}\triangleq\left[x_{i};\;\dot{x}_{i};\;\ddot{x}_{i}\right]$ describe the state of the $i^{th}$ follower such that $\dot{x}_{i}=v_{i}$ and $\ddot{x}_{i}=a_{i}$. Then, given \eqref{fl}, the state-space model for the $i^{th}$ follower can be expressed as follows:
\begin{equation}
\dot{\mathbf{X}}_{i} = \underbrace{\begin{bmatrix}
0&1&0\\
0&0&1\\
0&0&-\frac{1}{\tau_{i}}
\end{bmatrix}}_{\mathbf{A}_{i}}
\mathbf{X}_{i}+
\underbrace{\begin{bmatrix}
0\\
0\\
\frac{1}{\tau_{i}}
\end{bmatrix}}_{\mathbf{B}_{i}}u_{i};\quad i=1,\dots,n
\label{trss}
\end{equation}
\subsection{\textbf{Intervehicle States}}
\label{pstate}
Here we define platoon's states based on coupled states of neighboring vehicles $(i-1,i)$; $i$ indexed from $1$ to $n$.  Later, we will develop the dynamic model for the platoon using these states. Thus, let the coupled states of the pair $(i-1,i)$ be defined as $\tilde{\mathbf{X}}_{i-1}^{i}\triangleq [p_{i-1}^{i};\; v_{i-1}^{i};\; a_{i-1}^{i}]$ and so $\dot{\tilde{\mathbf{X}}}_{i-1}^{i} \triangleq [v_{i-1}^{i};\; a_{i-1}^{i};\; \mathbb{j}_{i-1}^{i}]$, where:
\begin{equation}
\begin{cases}
p_{i-1}^{i}(t)\triangleq x_{i-1}(t)-x_{i}(t)-\omega_{i-1}^{i}\\
v_{i-1}^{i}(t)\triangleq v_{i-1}(t)-v_{i}(t)\\
a_{i-1}^{i}(t)\triangleq a_{i-1}(t)-a_{i}(t)\\
\mathbb{j}_{i-1}^{i}(t)\triangleq \mathbb{j}_{i-1}(t)-\mathbb{j}_{i}(t)
\end{cases}
\label{newstates}
\end{equation}
in which, for $j\in\{i-1,i\}$, $v_{j}(t)=\dot{x}_{j}(t)$,  $a_{j}(t)=\ddot{x}_{j}(t)$, and $\mathbb{j}_{j}(t)=\dddot{x}_{j}(t)$ are the velocity, acceleration, and jerk (i.e., the derivative of acceleration) of the $j^{th}$ vehicle, respectively. It is worth noting that $\dot{p}_{i-1}^{i}=v_{i-1}^{i}$, $\dot{v}_{i-1}^{i}=a_{i-1}^{i}$, and $\dot{a}_{i-1}^{i}=\mathbb{j}_{i-1}^{i}$. The interpretations of the  coupled states are:

1. The \textbf{distance error} ($p_{i-1}^{i}(t)$) between neighboring vehicles: Given Fig. \ref{platoon}, the distances between neighboring vehicles are denoted as $D_{i-1}^{i}(t)$ where $D_{i-1}^{i}(t)=x_{i-1}(t)-x_{i}(t)-L_{i-1}=p_{i-1}^{i}(t)+d_{i-1}^{i}$ in which $p_{i-1}^{i}(t)$ are actually distance errors from the desired values $d_{i-1}^{i}$. Consequently, when $p_{i-1}^{i}(t)\rightarrow 0$, the distances will converge to their desired values.

2. The \textbf{relative velocity} ($v_{i-1}^{i}(t)$) between neighboring vehicles: The state indicates the velocity of the preceding vehicle with respect to its immediate FV.
Thus, if $v_{i-1}^{i}\rightarrow 0$, then the neighboring vehicles' velocities will be synchronized.

3. The \textbf{relative acceleration} ($a_{i-1}^{i}(t)$) between neighboring vehicles: The state quantifies the acceleration of the preceding vehicle with respect to its immediate FV.
Thus, if $a_{i-1}^{i}\rightarrow 0$, then the neighboring vehicles' accelerations will be synchronized.
\begin{remark}
The rationale behind our definitions is rooted in safety considerations. The distance errors, relative velocities and relative accelerations form the basis for the safety metrics defined later in this paper for the platoon. Later we will find the formulations of these intervehicle coupled states [see \eqref{impuleseRA}-\eqref{impulese}].
\label{re1re}
\end{remark}
\subsection{\textbf{Distributed Linear Controller}}
The controller $u_{i}$ in \eqref{fl}-\eqref{trss}, to fulfill the aims in \eqref{aims}, is defined as \cite{zakerimanesh2021heterogeneous,zakerimanesh2024stability}:
\begin{equation}
u_{i}=-\sum\limits_{\mathclap{j\in\mathbb    {I}_{i}}} \mathbf{K}\tilde{\mathbf{X}}_{i}^{j};
    \hspace{0.8cm} i=1,\dots,n
\label{contr}
\end{equation}
in which the set $\mathbb{I}_{i}$ shows the vehicles from which the $i^{th}$ follower gets information,  $\mathbf{K}\triangleq[k,\;b,\;h]$ is the control gain vector (CGV), and $\tilde{\mathbf{X}}_{i}^{j}\triangleq[p_{i}^{j};\;v_{i}^{j};\;a_{i}^{j}]$ such that $v_{i}^{j}\triangleq v_{i}-v_{j}$, $a_{i}^{j}\triangleq a_{i}-a_{j}$, and $p_{i}^{j}\triangleq x_{i}-x_{j}-d_{i,j}$ where
\begin{equation}
d_{i,j}\triangleq-sgn(i-j)\sum_{\kappa=\min(i,j)+1}^{\max(i,j)}L_{\kappa-1}+d_{\kappa-1}^{\kappa};\quad i=1,\dots,n
\label{contrdipj}
\end{equation}
where $d_{\kappa-1}^{\kappa}$ denotes the desired distance between neighboring vehicles and $L_{\kappa-1}$ is the length of the $(\kappa-1)^{th}$ vehicle.
\subsection{\textbf{Stable and Unstable Platoon}}
\label{safesa}
A key goal of platooning is to keep vehicles at safe distances from each other. To ensure safety, neighboring vehicles should not get closer than a predefined safe distance, $d_{i-1,i}^{s}$, which is less than the desired distance. The relation between intervehicle distances (IDs) and distance errors is linear, as given by $D_{i-1}^{i}(t)=p_{i-1}^{i}(t)+d_{i-1}^{i}$. Figure \ref{area_diagram} illustrates various stable scenarios.

1. \textbf{Stable-Safe Platoon}: In this scenario, all IDs, during the travel time, remain larger than the safe distances. In other words, for $i=1,\dots,n$, $D_{i-1}^{i}(t)>d_{i-1,i}^{s}$ or equivalently  $p_{i-1}^{i}(t)>d_{i-1,i}^{s}-d_{i-1}^{i}$ hold true. The portions of the $p_{i-1}^{i}(t)$ and $D_{i-1}^{i}(t)$ axes that meet this condition in the coordinate plane $(p_{i-1}^{i}(t),D_{i-1}^{i}(t))$, shown in Fig. \ref{area_diagram}, are depicted in green.

2. \textbf{Stable-Unsafe Platoon}: In this scenario, during the travel time, there exists at least one pair of neighboring vehicles where the ID becomes less than the safe distance. In other words, if $0 < D_{i-1}^{i}(t) < d_{i-1,i}^{s}$ occurs at any point during the travel time, we classify the platoon as unsafe. This condition can be equivalently checked by verifying whether $-d_{i-1}^{i} < p_{i-1}^{i}(t) < d_{i-1,i}^{s} - d_{i-1}^{i}$ occurs for at least one pair of neighboring vehicles over the travel time. The portions of the $p_{i-1}^{i}(t)$ and $D_{i-1}^{i}(t)$ axes that satisfy this condition in the coordinate plane $(p_{i-1}^{i}(t),D_{i-1}^{i}(t))$, as depicted in Fig. \ref{area_diagram}, are colored blue. 

3. \textbf{Stable-Colliding Platoon}: In this scenario, a collision occurs between at least one pair of adjacent vehicles. In other words, if $D_{i-1}^{i}(t) \leq 0$ or equivalently $p_{i-1}^{i}(t) \leq -d_{i-1}^{i}$ occurs at any point during the travel time, then a collision has occurred within the platoon. The parts of the $p_{i-1}^{i}(t)$ and $D_{i-1}^{i}(t)$ axes that satisfy this condition in the coordinate plane $(p_{i-1}^{i}(t),\;D_{i-1}^{i}(t))$, depicted in Fig. \ref{area_diagram}, are marked in red.

4. \textbf{Unstable Platoon}: If the internal stability conditions are not satisfied, then the platoon would be unstable [see Section  \ref{platoonStability}: Platoon Stability].
\begin{figure}[htbp!]
\begin{center}
\resizebox{1\hsize}{!}{\includegraphics*{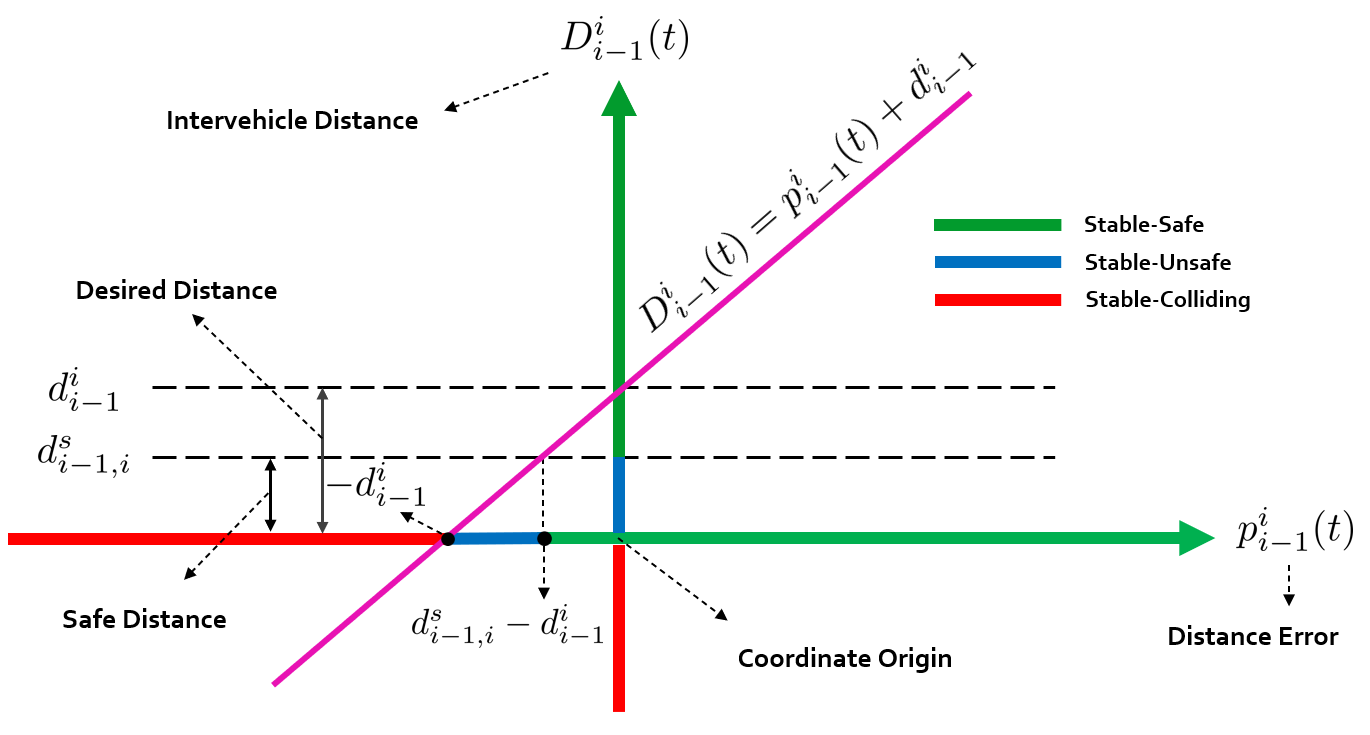}}
\caption{Illustrating Safe, Unsafe, and Colliding scenarios.}
\label{area_diagram}
\end{center}
\end{figure}
\begin{remark}
The reason for defining these scenarios is to identify later, based on different communication topologies, which configurations lead to better IDs within the platoon. Later,the distance error formulation will be provided [see equation \eqref{impulese}].  
\end{remark}
\subsection{\textbf{Connection Types Between Vehicles}}
\label{zedtaii}
Fig. \ref{interv-conn} illustrates the possible connections between adjacent vehicles. For later usage, we define:
\begin{equation}
\zeta_{i-1}^{i} = \begin{cases}
 0 & \text{ if }\hspace{1cm} z_{i-1}^{i}=0\text{ and } z_{i}^{i-1}=1 \\
 z_{i-1}^{i} & \text{ if }\hspace{1cm} z_{i-1}^{i}=1 \text{ and } z_{i}^{i-1}=1
 \\
 0 & \text{ if }\hspace{1cm} z_{i-1}^{i}=0 \text{ and } z_{i}^{i-1}=0 \\
 z_{i-1}^{i} & \text{ if }\hspace{1cm} z_{i-1}^{i}=1\text{ and } z_{i}^{i-1}=0
\end{cases}  
\label{zetaii}
\end{equation}
where, for example, $z_{i-1}^{i}=1$ indicates that the $(i-1)^{th}$ vehicle receives information from the $i^{th}$ vehicle, while $z_{i-1}^{i}=0$ indicates that it does not.
\begin{figure}[htbp!]
\begin{center}
\resizebox{0.6\hsize}{!}{\includegraphics*{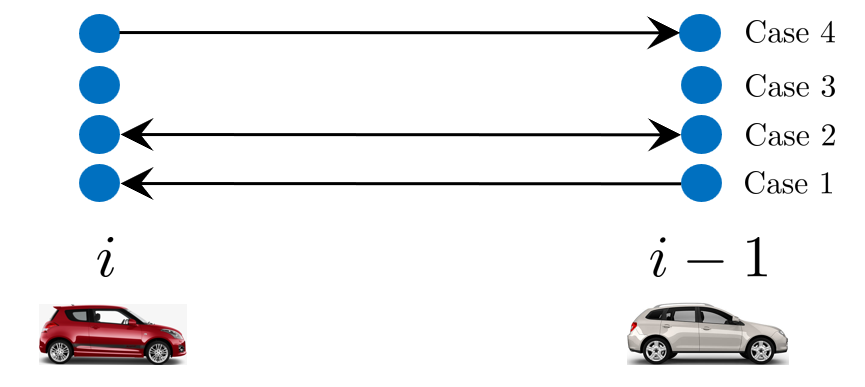}}
\caption{Connection types between neighboring vehicles.}
\label{interv-conn}
\end{center}
\end{figure}   
\subsection{\textbf{Rigid Communication Topologies (RCTs)}}
\label{comms}
Rigid communication topologies (RCTs) refer to fixed, predefined structures that dictate how vehicles within the platoon communicate with each other. These topologies define the pathways for information flow and control signals between the vehicles. We split RCTs into three categories:

1. \textbf{Typical Unidirectional RCTs (TURCTs)}: In this category, all FVs receive information only from vehicles ahead. Figs. \ref{typicalu} shows  TURCTs, in each of which there is a certain practice in the communication between vehicles. For instance, in two-predecessor-following (TPF) topology, each vehicle sends its information to its two immediate followers.
\begin{figure}[htbp!]
\begin{center}
\resizebox{0.85\hsize}{!}{\includegraphics*{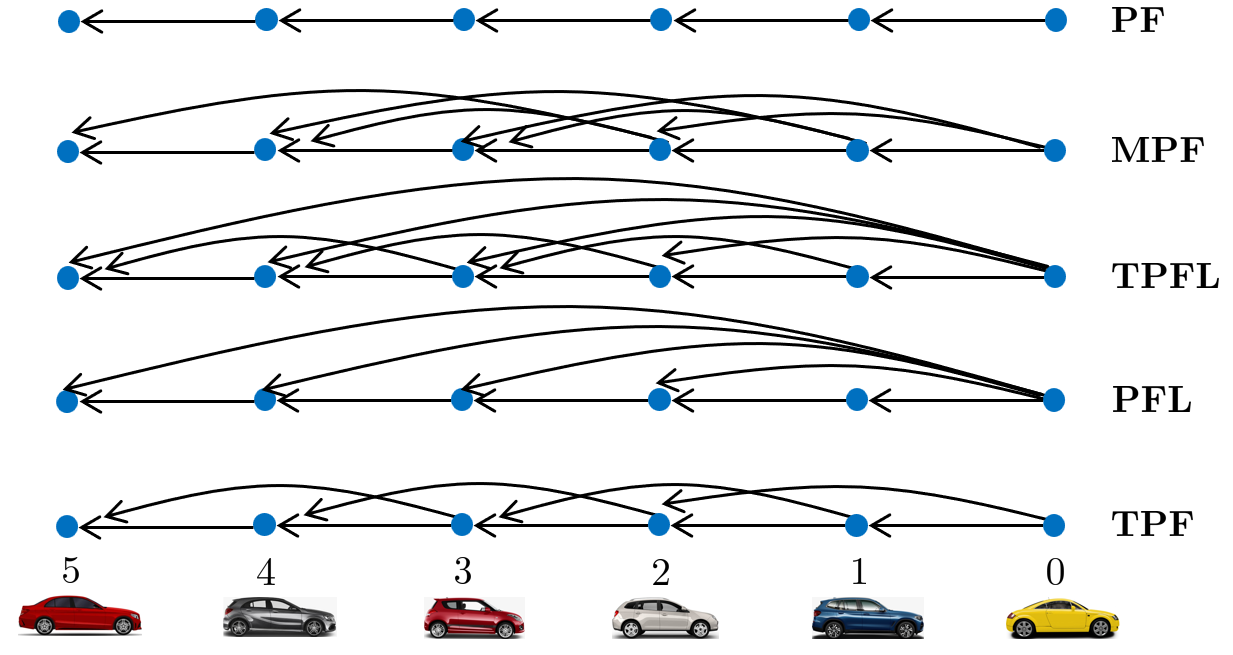}}
\caption{Some common TURCTs between vehicles.}
\label{typicalu}
\end{center}
\end{figure}

2. \textbf{Typical Bidirectional RCTs (TBRCTs)}: 
In TBRCTs, FVs can receive information from both vehicles behind and ahead. Figs. \ref{typicalb} shows TBRCTs, in each of which there is a certain pattern in the communication between vehicles. For instance, in the two-predecessor-single-following (TPSF) topology, each FV receives information from its two immediate predecessors and one immediate follower.
\begin{figure}[htbp!]
\begin{center}
\resizebox{0.85\hsize}{!}{\includegraphics*{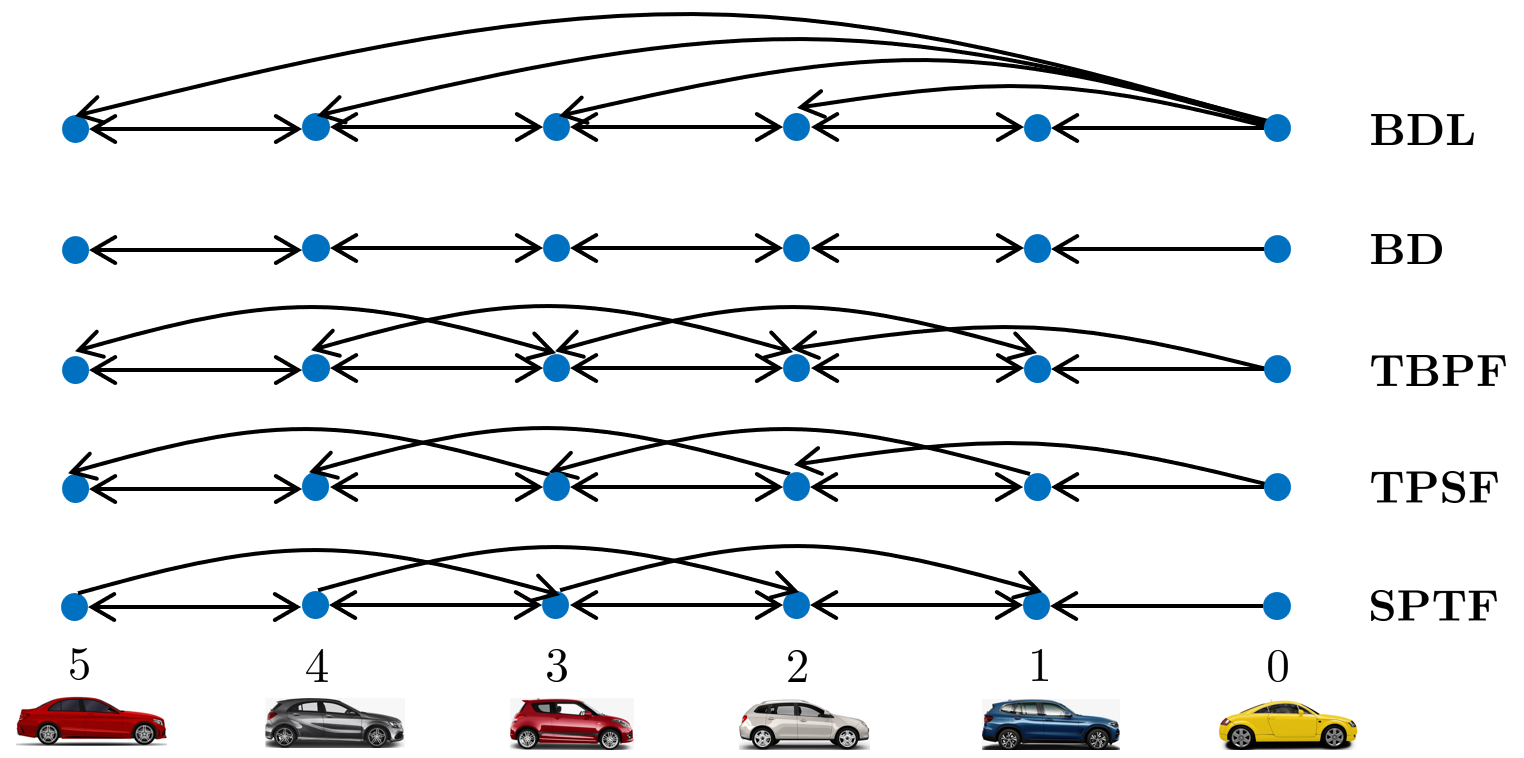}}
\caption{Some common TBRCTs between vehicles.}
\label{typicalb}
\end{center}
\end{figure}

3. \textbf{Nontypical RCTs (NRCTs)}: In NRCTs, there is no specific pattern in the communication between vehicles [see Fig. \ref{nontypical}].
\begin{figure}[htbp!]
\begin{center}
\resizebox{0.85\hsize}{!}{\includegraphics*{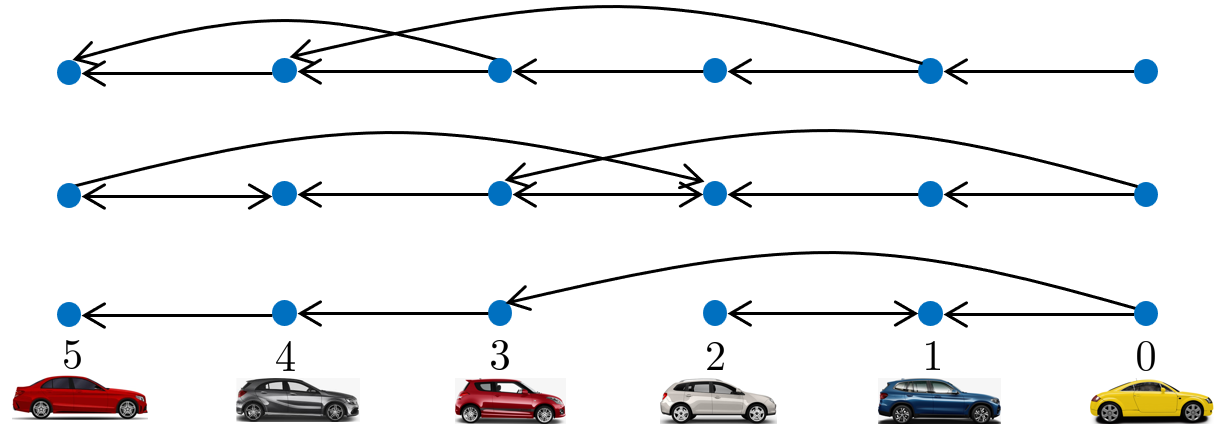}}
\caption{Some arbitrary NRCTs between vehicles.}
\label{nontypical}
\end{center}
\end{figure}
\subsection{\textbf{Definition of Sets $\mathbb{R}_{i-1},\;\mathbb{R}_{i},\;\mathbb{I}_{i-1}$, and $\mathbb{I}_{i}$}} For later usage, we introduce the following sets associated with the pair $(i-1,i)$:
\begin{itemize}[left=0pt]
\item $\mathbb{I}_{i}$: Represents vehicles from which the $i^{th}$ follower receives information.  
\item $\mathbb{I}_{i-1}$: Represents vehicles from which the $(i-1)^{th}$ follower gathers information.
\item $\mathbb{R}_{i}$: Denotes vehicles, excluding vehicle $i-1$, that share information with the $i^{th}$ follower.
\item $\mathbb{R}_{i-1}$: Denotes vehicles, excluding vehicle $i$, that provide information to the $(i-1)^{th}$ follower.
\end{itemize}
For instance, in TPLF topology from Fig. \ref{typicalu} and for the pair $(2,3)$, $\mathbb{I}_{3} = \{2,\;1,\;0\}$, $\mathbb{I}_{2} = \{1,\;0\}$, $\mathbb{R}_{3} = \{1,\;0\}$, and $\mathbb{R}_{2} = \{1,\;0\}$.
\subsection{\textbf{Accumulative CGs of Neighboring Vehicles}}
\label{accum}
For later usage, under any RCT shown in Figs. \ref{typicalu}-\ref{nontypical}, let
\begin{equation}
\begin{cases}
\begin{split}
&\bar{k}_{i-1}^{i}\triangleq\left(\frac{1}{\tau_{i}}|\mathbb{I}_{i}^{\leq i-1}|+\frac{1}{\tau_{i-1}}|\mathbb{I}_{i-1}^{\geq i}|\right)k \\&
\bar{b}_{i-1}^{i}\triangleq\left(\frac{1}{\tau_{i}}|\mathbb{I}_{i}^{\leq i-1}|+\frac{1}{\tau_{i-1}}|\mathbb{I}_{i-1}^{\geq i}|\right)b\\&
\bar{h}_{i-1}^{i}\triangleq\left(\frac{1}{\tau_{i}}|\mathbb{I}_{i}^{\leq i-1}|+\frac{1}{\tau_{i-1}}|\mathbb{I}_{i-1}^{\geq i}|\right)h+\frac{1}{\tau_{i}}
\end{split}
\end{cases}
\label{pbc}
\end{equation}
be defined as the accumulative  control gains, where the sets $\mathbb{I}_{i}^{\leq i-1}$ and $\mathbb{I}_{i-1}^{\geq i}$ are defined as $\mathbb{I}_{i}^{\leq i-1}\triangleq\{j|j\in\mathbb{I}_{i} \text{ \& }j\leq i-1\}$ and $\mathbb{I}_{i-1}^{\geq i}\triangleq\{j|j\in\mathbb{I}_{i-1} \text{ \& }j\geq i\}$, respectively. The notation $|.|$ shows the cardinality of the relevant sets. 
\section{\textbf{Problem Formulation}}
\label{section3}
For the compactness of later presentations, $\tilde{\mathbf{X}}_{i}^{j}$ [see \eqref{contr}] elements [see \eqref{newstates}] can be rewritten as $p_{i}^{j}=\tilde{x}_{i}-\tilde{x}_{j}$, $v_{i}^{j}=\dot{\tilde{x}}_{i}-\dot{\tilde{x}}_{j}$, and $a_{i}^{j}=\ddot{\tilde{x}}_{i}-\ddot{\tilde{x}}_{j}$ where, for $m\in\{j,i\}$,
$\tilde{x}_{m}\triangleq x_{m}- x_{0}+\sum_{\kappa=1}^{m}L_{\kappa-1}+d_{\kappa-1}^{\kappa}$, $\dot{\tilde{x}}_{m}\triangleq \dot{x}_{m}-v_{0}=\dot{x}_{m}-\dot{x}_{0}$, and $\ddot{\tilde{x}}_{m}\triangleq \ddot{x}_{m}-a_{0}=\ddot{x}_{m}-\ddot{x}_{0}$. Accordingly, $\tilde{\mathbf{X}}_{m}\triangleq [\tilde{x}_{m};\; \dot{\tilde{x}}_{m};\;\ddot{\tilde{x}}_{m}]$,  $\tilde{\mathbf{X}}_{i}^{j}= \tilde{\mathbf{X}}_{i}-\tilde{\mathbf{X}}_{j}$, and $\dot{\tilde{\mathbf{X}}}_{i}^{j}\triangleq[v_{i}^{j};\;a_{i}^{j};\;\mathbb{j}_{i}^{j}]$ where  $\tilde{\mathbb{j}}_{i}^{j}\triangleq \dddot{\tilde{x}}_{i}-\dddot{\tilde{x}}_{j}$. Note that for $m\in\{j,i\}$, $\dddot{\tilde{x}}_{m}\triangleq  \dddot{x}_{m}-\dot{a}_{0}=\dddot{x}_{m}-\dddot{x}_{0}$. Now, given   $\ddot{x}_{i}=\ddot{\Tilde{x}}_{i}+a_{0}$ and $\dddot{x}_{i}=\dddot{\Tilde{x}}_{i}+\dot{a}_{0}$,  plugging \eqref{contr} into \eqref{fl} and doing some mathematical rearrangements yields:
\begin{equation}
    \begin{split}
        \dddot{\Tilde{x}}_{i}&=-\frac{1}{\tau_{i}}\mathbb{K}_{i}\Tilde{\mathbf{X}}_{i}        +\sum_{j\in\mathbb    {I}_{i}}\frac{1}{\tau_{i}}\mathbf{K}\Tilde{\mathbf{X}}_{j}+\epsilon_{i};\quad i=1,\dots,n
    \end{split}
    \label{ssmx}
\end{equation}
in which $\epsilon_{i}\triangleq-\frac{1}{\tau_{i}}a_{0}(t)-\dot{a}_{0}(t)$ and $\mathbb{K}_{i}\triangleq\left[\bar{k}_{i},\;\bar{b}_{i},\;\bar{h}_{i}\right]$  such that  
\begin{equation}
\begin{split}
\bar{k}_{i}\triangleq |\mathbb{I}_{i}|k,\hspace{0.2cm} \bar{b}_{i}\triangleq |\mathbb{I}_{i}|b,\hspace{0.2cm} \bar{h}_{i}\triangleq  1+|\mathbb{I}_{i}|h; \quad i=1,\dots,n
\end{split}    
\label{kmathbb}
\end{equation} 
where $|\mathbb{I}_{i}|$ denotes the cardinality of the set $\mathbb{I}_{i}$. 
\subsection{\textbf{Coupled Dynamics of Neighboring Vehicles}}
\label{cdyn}
Recalling the section \ref{pstate}, the following theorem outlines the intervehicle coupled dynamics:
\begin{theorem}
\label{th1}
Given preliminaries \ref{comms} and \ref{accum}, assuming the relative acceleration as the output of each coupled dynamics (let it be denoted as $\tilde{y}_{i-1}^{i}$), the state-space representation of the intervehicle dynamics, under any RCT, is governed by
\begin{equation}
\begin{cases}    
\dot{\Tilde{\mathbf{X}}}_{i-1}^{i}=\underbrace{\begin{bmatrix}
0 & 1 & 0\\
0 & 0 & 1\\
-\bar{k}_{i-1}^{i} & -\bar{b}_{i-1}^{i} & -\bar{h}_{i-1}^{i}
\end{bmatrix}}_{\triangleq \mathbf{A}_{i-1}^{i}} \Tilde{\mathbf{X}}_{i-1}^{i}+\underbrace{\begin{bmatrix}
0\\0\\1    
\end{bmatrix}}_{\triangleq\mathbf{B}_{i-1}^{i}}\Tilde{u}_{i-1}^{i}\\
\tilde{y}_{i-1}^{i}=\underbrace{[0,\;0,\;1]}_{\triangleq\mathbf{C}_{i-1}^{i}}\tilde{\mathbf{X}}_{i-1}^{i}
\end{cases}
\label{col_dyn}
\end{equation}
in which $\tilde{u}_{i-1}^{i}$, the input of coupled dynamics,  is according to
\begin{equation}
\begin{split}
\tilde{u}&_{i-1}^{i}
=\frac{1}{\tau_{i-1}}
\sum_{j\in\mathbb{R}_{i-1}^{<i-1}}\sum_{\kappa=j+1}^{i-1}\mathbf{K}\Tilde{\mathbf{X}}_{\kappa-1}^{\kappa}-\frac{1}{\tau_{i-1}}
\sum_{j\in\mathbb{R}_{i-1}^{>i}}\sum_{\kappa=i+1}^{j}\mathbf{K}\Tilde{\mathbf{X}}_{\kappa-1}^{\kappa}\\&+\frac{1}{\tau_{i}}
\sum_{j\in\mathbb{R}_{i}^{>i}}\sum_{\kappa=i+1}^{j}\mathbf{K}\Tilde{\mathbf{X}}_{\kappa-1}^{\kappa}
-\frac{1}{\tau_{i}}
\sum_{j\in\mathbb{R}_{i}^{<i-1}}\sum_{\kappa=j+1}^{i-1}\mathbf{K}\Tilde{\mathbf{X}}_{\kappa-1}^{\kappa}-U(\gamma-i)\epsilon_{i} \\&-U(i-\gamma) \frac{1}{\tau_{i-1}}\mbox{\boldmath$\tau$}_{i-1}^{i}\sum_{\kappa=1}^{i-1}\Tilde{\mathbf{X}}_{\kappa-1}^{\kappa}+U(i-\gamma)\frac{\tau_{i-1}-\tau_{i}}{\tau_{i-1}\tau_{i}}a_{0}
\end{split}
\label{inputi-i+1}
\end{equation}
where in the $U(\gamma - i)$ and $U(i - \gamma)$, the parameter $\gamma$ is a real number between one and two, i.e., $1 < \gamma < 2$, where $U(\cdot)$ denotes the unit step function. Also, $\mbox{\boldmath$\tau$}_{i-1}^{i}=\left[0,\;0,\;\frac{\tau_{i-1}-\tau_{i}}{\tau_{i}}\right]$. 
\end{theorem}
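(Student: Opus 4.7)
The first two rows of \eqref{col_dyn} are kinematic tautologies stemming from $\dot{p}_{i-1}^{i}=v_{i-1}^{i}$ and $\dot{v}_{i-1}^{i}=a_{i-1}^{i}$ in Section \ref{pstate}, so the entire content is the third row, and the plan is to obtain it by computing $\mathbb{j}_{i-1}^{i}=\dddot{\tilde{x}}_{i-1}-\dddot{\tilde{x}}_{i}$ from \eqref{ssmx} and then reorganising every difference that appears into (a) a multiple of $\tilde{\mathbf{X}}_{i-1}^{i}$ producing the $-\mathbf{A}_{i-1}^{i}$ row and (b) the external terms that must match $\tilde{u}_{i-1}^{i}$. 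First I use \eqref{kmathbb} to write $\mathbb{K}_{m}\tilde{\mathbf{X}}_{m}=|\mathbb{I}_{m}|\mathbf{K}\tilde{\mathbf{X}}_{m}+\ddot{\tilde{x}}_{m}$, which recasts \eqref{ssmx} for $m\in\{i-1,i\}$ as $\dddot{\tilde{x}}_{m}=-\tau_{m}^{-1}\ddot{\tilde{x}}_{m}+\tau_{m}^{-1}\sum_{j\in\mathbb{I}_{m}}\mathbf{K}(\tilde{\mathbf{X}}_{j}-\tilde{\mathbf{X}}_{m})+\epsilon_{m}$, and then I subtract.

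Upon subtraction, the free-acceleration piece splits via the identity $-\tau_{i-1}^{-1}\ddot{\tilde{x}}_{i-1}+\tau_{i}^{-1}\ddot{\tilde{x}}_{i}=-\tau_{i}^{-1}a_{i-1}^{i}+\frac{\tau_{i-1}-\tau_{i}}{\tau_{i-1}\tau_{i}}\ddot{\tilde{x}}_{i-1}$: the first summand supplies the stray $+1/\tau_{i}$ inside $\bar{h}_{i-1}^{i}$, and the second becomes the $\mbox{\boldmath$\tau$}_{i-1}^{i}$-sum in $\tilde{u}_{i-1}^{i}$ once I rewrite $\ddot{\tilde{x}}_{i-1}=-\sum_{\kappa=1}^{i-1}a_{\kappa-1}^{\kappa}=-[0,0,1]\sum_{\kappa=1}^{i-1}\tilde{\mathbf{X}}_{\kappa-1}^{\kappa}$. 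Likewise $\epsilon_{i-1}-\epsilon_{i}=\frac{\tau_{i-1}-\tau_{i}}{\tau_{i-1}\tau_{i}}a_{0}$ falls out by direct substitution of $\epsilon_{m}=-a_{0}/\tau_{m}-\dot{a}_{0}$.

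The combinatorial heart of the argument is the two inner sums over $\mathbb{I}_{i-1}$ and $\mathbb{I}_{i}$. I peel off the ``inside-the-pair'' contributions (the $j=i$ term in the first sum, the $j=i-1$ term in the second), each producing $\pm\tilde{\mathbf{X}}_{i-1}^{i}$ directly, and for every remaining $j\in\mathbb{R}_{i-1}\cup\mathbb{R}_{i}$ I invoke the telescoping identity $\tilde{\mathbf{X}}_{j}-\tilde{\mathbf{X}}_{l}=-sgn(j-l)\sum_{\kappa=\min(j,l)+1}^{\max(j,l)}\tilde{\mathbf{X}}_{\kappa-1}^{\kappa}$. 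Each $j$ that lies on the side of the pair opposite to its correspondent (i.e., $j\in\mathbb{I}_{i}^{\leq i-1}$ reporting to vehicle $i$, or $j\in\mathbb{I}_{i-1}^{\geq i}$ reporting to vehicle $i-1$) contributes one extra $\pm\tilde{\mathbf{X}}_{i-1}^{i}$ on top of its external telescoped tail indexed by $\kappa\notin\{i-1,i\}$. A crossing count then shows that the assembled coefficient of $\tilde{\mathbf{X}}_{i-1}^{i}$ is exactly $-[\bar{k}_{i-1}^{i},\bar{b}_{i-1}^{i},\bar{h}_{i-1}^{i}]$ per \eqref{pbc}, while the external tails, with signs dictated by whether telescoping runs left or right of the pair, line up termwise with the four double-sums over $\mathbb{R}_{i-1}^{<i-1}$, $\mathbb{R}_{i-1}^{>i}$, $\mathbb{R}_{i}^{>i}$, $\mathbb{R}_{i}^{<i-1}$ in \eqref{inputi-i+1}.

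The boundary case $i=1$ is handled last: since vehicle $0$ has no dynamics, $\ddot{\tilde{x}}_{0}=\dddot{\tilde{x}}_{0}=0$ and there is no $\epsilon_{0}$, so the heterogeneity correction and the $\epsilon_{i-1}-\epsilon_{i}$ piece vanish and only $-\epsilon_{1}$ survives; for $i\geq 2$ the opposite regime prevails. Packaging these mutually exclusive regimes through $U(\gamma-i)$ and $U(i-\gamma)$ for any $\gamma\in(1,2)$ produces exactly \eqref{inputi-i+1}. The step I expect to be the main obstacle is the four-index book-keeping in the telescoping: each external $\tilde{\mathbf{X}}_{\kappa-1}^{\kappa}$ must pick up precisely the prefactor asserted by \eqref{inputi-i+1}, and the crossings must aggregate without residual to the accumulative gains \eqref{pbc}, which requires tracking the families $\mathbb{R}_{i-1}^{<i-1}$, $\mathbb{R}_{i-1}^{>i}$, $\mathbb{R}_{i}^{<i-1}$, $\mathbb{R}_{i}^{>i}$ simultaneously without sign or range slips.
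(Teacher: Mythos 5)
Your proposal is correct and follows essentially the same route as the paper's Appendix~A proof: subtract the linearized per-vehicle dynamics \eqref{ssmx}, telescope every difference $\tilde{\mathbf{X}}_{j}-\tilde{\mathbf{X}}_{m}$ into the coupled states $\tilde{\mathbf{X}}_{\kappa-1}^{\kappa}$, collect the crossings of the pair $(i-1,i)$ into the accumulative gains \eqref{pbc}, express the heterogeneity residue through $\ddot{\tilde{x}}_{i-1}=-\sum_{\kappa=1}^{i-1}a_{\kappa-1}^{\kappa}$, and treat $i=1$ as a separate regime encoded by the unit steps. The only differences are organizational (your per-vehicle rewrite via $\mathbb{K}_{m}\tilde{\mathbf{X}}_{m}=|\mathbb{I}_{m}|\mathbf{K}\tilde{\mathbf{X}}_{m}+\ddot{\tilde{x}}_{m}$ and explicit crossing count, versus the paper's successive re-centering identities in \eqref{xx01}--\eqref{784poaq}), and the identities you invoke all check out, so the sketch is sound.
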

\begin{proof}
Please check Appendix \ref{appendixA}.    
\end{proof}
\subsection{\textbf{Initial Conditions}}
\label{xcde}
\label{expl_inclu}Here, the aim is to extract the role of initial conditions in the input signal [see  \eqref{inputi-i+1}] of the coupled dynamics. At the time of platooning triggering, each vehicle has its own initial position, velocity, and acceleration. Thus, given \eqref{newstates}, the initial coupled states associated with relative positions ($x_{i-1}^{i}=x_{i-1}-x_{i}$), relative velocities, and relative accelerations of the adjacent vehicles are defined as $\varrho_{i-1}^{i}\triangleq x_{i-1}^{i}(0)$, $\nu_{i-1}^{i}\triangleq v_{i-1}^{i}(0)$, and $\varphi_{i-1}^{i}\triangleq a_{i-1}^{i}(0)$, respectively. Accordingly,  $\mathbf{\Xi}_{i-1}^{i}\triangleq[\varrho_{i-1}^{i};\nu_{i-1}^{i};\varphi_{i-1}^{i}]$, and thus in the Laplacian domain we get:
\begin{equation}
\begin{split}
\Tilde{\mathbf{X}}_{i-1}^{i}(s)
  =\underbrace{\begin{bmatrix}
  \frac{1}{s^{2}}\\ \frac{1}{s} \\1
  \end{bmatrix}}_{\triangleq\mathbf{T}_{1}}a_{i-1}^{i}(s)+\underbrace{\begin{bmatrix}
      \frac{1}{s}&\frac{1}{s^{2}}&0\\0&\frac{1}{s}&0\\0&0&0
  \end{bmatrix}}_{\triangleq \mathbf{T}_{2}}\mathbf{\Xi}_{i-1}^{i}-\underbrace{\begin{bmatrix}
  \frac{1}{s}
      \\
      0
      \\
      0
  \end{bmatrix}}_{\triangleq \mathbf{T}_{3}} \omega_{i-1}^{i}
\end{split}
\label{tr}
\end{equation}
using which, the input for the coupled dynamics [see \eqref{inputi-i+1}], in Laplacian domain, would be as
\begin{equation}
\begin{split}
&\Tilde{u}_{i-1}^{i}(s)=\frac{hs^2+bs+k}{s^2}\Pi_{i-1}^{i}(s)+\frac{\Theta_{i-1}^{i}s+\Gamma_{i-1}^{i}}{s^{2}}\\&-\frac{U(i-\gamma)(\tau_{i-1}-\tau_{i})}{\tau_{i-1}\tau_{i}}\left(\mathcal{M}_{i-1}^{i}(s)-a_{0}(s)\right)+\frac{U(\gamma-i)(1+\tau_{i} s)a_{0}(s)}{\tau_{i}}
\end{split}
\label{err32}
\end{equation} 
where $\mathcal{M}_{i-1}^{i}(s)=\sum_{\kappa=1}^{i-1}\Tilde{a}_{\kappa-1}^{\kappa}(s)$ and
\begin{equation}
 \begin{split}
\Pi_{i-1}^{i}(s)&\triangleq \frac{1}{\tau_{i-1}}\left[\sum_{j\in\mathbb{R}_{i-1}^{<i-1}}\sum_{\kappa=j+1}^{i-1}a_{\kappa-1}^{\kappa}(s)
-\sum_{j\in\mathbb{R}_{i-1}^{>i}}\sum_{\kappa=i+1}^{j}a_{\kappa-1}^{\kappa}(s)\right]
\\&+\frac{1}{\tau_{i}}\left[\sum_{j\in\mathbb{R}_{i}^{>i}}\sum_{\kappa=i+1}^{j} a_{\kappa-1}^{\kappa}(s)
-\sum_{j\in\mathbb{R}_{i}^{<i-1}}\sum_{\kappa=j+1}^{i-1}a_{\kappa-1}^{\kappa}(s)\right] 
\end{split}
\label{piiip}
\end{equation}
Also, in \eqref{err32}, the scalar values $\Theta_{i-1}^{i}$ and $\Gamma_{i-1}^{i}$ are defined as   
$\Theta_{i-1}^{i}\triangleq  k\theta_{t}^{i-1,i}+b\nu_{t}^{i-1,i}$ and $\Gamma_{i-1}^{i}\triangleq k\nu_{t}^{i-1,i}$
in which
\begin{equation}
\begin{split}
\theta_{t}^{i-1,i}&\triangleq \frac{1}{\tau_{i-1}}\left[\sum_{j\in\mathbb{R}_{i-1}^{<i-1}}\sum_{\kappa=j+1}^{i-1}\theta_{\kappa-1}^{\kappa}
-\sum_{j\in\mathbb{R}_{i-1}^{>i}}\sum_{\kappa=i+1}^{j}\theta_{\kappa-1}^{\kappa}\right]
\\&+\frac{1}{\tau_{i}}\left[\sum_{j\in\mathbb{R}_{i}^{>i}}\sum_{\kappa=i+1}^{j}\theta_{\kappa-1}^{\kappa}
-\sum_{j\in\mathbb{R}_{i}^{<i-1}}\sum_{\kappa=j+1}^{i-1}\theta_{\kappa-1}^{\kappa}\right] 
\end{split}
\label{erty3}
\end{equation}
and
\begin{equation}
\begin{split}
\nu_{t}^{i-1,i}&\triangleq \frac{1}{\tau_{i-1}}\left[\sum_{j\in\mathbb{R}_{i-1}^{<i-1}}\sum_{\kappa=j+1}^{i-1}\nu_{\kappa-1}^{\kappa}
-\sum_{j\in\mathbb{R}_{i-1}^{>i}}\sum_{\kappa=i+1}^{j}\nu_{\kappa-1}^{\kappa}\right]
\\&+\frac{1}{\tau_{i}}\left[\sum_{j\in\mathbb{R}_{i}^{>i}}\sum_{\kappa=i+1}^{j}\nu_{\kappa-1}^{\kappa}
-\sum_{j\in\mathbb{R}_{i}^{<i-1}}\sum_{\kappa=j+1}^{i-1}\nu_{\kappa-1}^{\kappa}\right] 
\end{split}
\label{erty32}
\end{equation}
where $\theta_{\kappa-1}^{\kappa}\triangleq\varrho_{\kappa-1}^{\kappa}-\omega_{\kappa-1}^{\kappa}$. Note that, for the pair $(0,1)$, $\mathbb{R}_{i-1}^{<i-1}=\mathbb{R}_{i-1}^{>i}=\mathbb{R}_{i}^{<i-1}=\emptyset$ in which $\emptyset$ denotes the empty set.
\subsection{\textbf{Distance Error, Relative Velocity, and Relative Acceleration Between Neighboring Vehicles}}
\label{CDPSs}
Given the coupled dynamics for neighboring vehicles [see section \ref{cdyn}] and the their input signals [see section \ref{expl_inclu}], in this section we formulate the coupled dynamics that dictate the relative acceleration; $a_{i-1}^{i}(s)$, relative velocity; $v_{i-1}^{i}(s)$, and distance error; $p_{i-1}^{i}(s)$, between the adjacent vehicles. Before that, please note that given \eqref{col_dyn}, the transfer function, denoted by $\mathcal{G}_{i-1}^{i}(s)$, between the input and output of the coupled dynamics would be $\mathcal{G}_{i-1}^{i}(s)=\mathbf{C}_{i-1}^{i}(s\mathbf{I}_{3}-\mathbf{A}_{i-1}^{i})^{-1} \mathbf{B}_{i-1}^{i}=s^{2}\Upsilon_{i-1}^{i}(s)$  
where $\Upsilon _{i-1}^{i}(s)\triangleq 1/(s^{3}+\bar{h}_{i-1}^{i}s^{2}+\bar{b}_{i-1}^{i}s+\bar{k}_{i-1}^{i})$.
\begin{theorem}
For $i=1,\dots,n$, $a_{i-1}^{i}(s)$, $v_{i-1}^{i}(s)$, and $p_{i-1}^{i}(s)$ are governed according to  
\begin{equation}
\begin{split}
&a_{i-1}^{i}(s)=\underbrace{(hs^2+bs+k)\Pi_{i-1}^{i}(s)\Upsilon_{i-1}^{i}(s)}_{\text{out of the coupled states where the pair$\neq (i-1,i)$}}\\&-\underbrace{\frac{U(i-\gamma)(\tau_{i-1}-\tau_{i})}{\tau_{i-1}\tau_{i}}\left(\mathcal{M}_{i-1}^{i}(s)-a_{0}(s)\right)s^{2}\Upsilon_{i-1}^{i}(s)}_{\text{out of the heterogeneity of the pair $(i-1,i)$}}\\&+\underbrace{\mathcal{H}_{i-1,i}^{a}(s)\Upsilon _{i-1}^{i}(s)}_{\text{ out of all initial conditions}}+\underbrace{\frac{U(\gamma-i)(1+\tau_{i} s)a_{0}(s)s^{2}\Upsilon_{i-1}^{i}(s)}{\tau_{i}}}_{\text{ out of the leader's acceleration/jerk}}
\end{split}
\label{err3c2}
\end{equation} 
$v_{i-1}^{i}(s)= (a_{i-1}^{i}(s)+\nu_{i-1}^{i})/s$, and $p_{i-1}^{i}(s)=(v_{i-1}^{i}(s)+\theta_{i-1}^{i})/s$ where $\mathcal{H}_{i-1,i}^{a}(s)\triangleq\Lambda_{2,a}^{{i-1,i}}s^2+\Lambda_{1,a}^{i-1,i}s+\Lambda_{0,a}^{i-1,i}$ in which 
\begin{equation}
\begin{cases}    
\Lambda_{2}^{i-1,i}\triangleq\varphi_{i-1}^{i}\\
\Lambda_{1}^{i-1,i}\triangleq\Theta_{i-1}^{i}-\theta_{i-1}^{i}\bar{k}_{i-1}^{i} -\nu_{i-1}^{i}\bar{b}_{i-1}^{i}
\\
\Lambda_{0}^{i-1,i}\triangleq \Gamma_{i-1}^{i}-\nu_{i-1}^{i}\bar{k}_{i-1}^{i}
\end{cases}
\label{ini_param}
\end{equation}
and $\Theta_{i-1}^{i}$ and $\Gamma_{i-1}^{i}$ are already defined [see \ref{xcde}]. Also, $\bar{k}_{i-1}^{i}$ and $\bar{b}_{i-1}^{i}$ are defined in subsection \ref{accum}.
\end{theorem}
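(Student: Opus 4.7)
The plan is to directly Laplace-transform the state-space model \eqref{col_dyn} with nonzero initial state and then identify the resulting four contributions one-to-one with the four terms of \eqref{err3c2}. Writing
\begin{equation*}
a_{i-1}^{i}(s) = \mathbf{C}_{i-1}^{i}(s\mathbf{I}_{3} - \mathbf{A}_{i-1}^{i})^{-1}\tilde{\mathbf{X}}_{i-1}^{i}(0) + \mathcal{G}_{i-1}^{i}(s)\,\tilde{u}_{i-1}^{i}(s),
\end{equation*}
with $\tilde{\mathbf{X}}_{i-1}^{i}(0) = [\theta_{i-1}^{i};\nu_{i-1}^{i};\varphi_{i-1}^{i}]$ (noting that $p_{i-1}^{i}(0) = \varrho_{i-1}^{i} - \omega_{i-1}^{i} = \theta_{i-1}^{i}$) cleanly splits $a_{i-1}^{i}(s)$ into a zero-input and a zero-state response, which I would then match against the four labelled terms in the statement.

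For the forced (zero-state) piece, I would exploit the already established identity $\mathcal{G}_{i-1}^{i}(s) = s^{2}\Upsilon_{i-1}^{i}(s)$ and substitute the four summands of \eqref{err32} one at a time. The $1/s^{2}$ denominators in the first summand and in the $(\Theta_{i-1}^{i} s + \Gamma_{i-1}^{i})$ summand cancel the leading $s^{2}$ of $\mathcal{G}_{i-1}^{i}$, yielding $(hs^{2}+bs+k)\Pi_{i-1}^{i}(s)\Upsilon_{i-1}^{i}(s)$ and $(\Theta_{i-1}^{i} s + \Gamma_{i-1}^{i})\Upsilon_{i-1}^{i}(s)$, respectively. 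The heterogeneity summand and the leader-centric summand each inherit an additional $s^{2}\Upsilon_{i-1}^{i}(s)$ factor, which reproduces the second and fourth labelled terms of \eqref{err3c2} verbatim.

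For the zero-input piece and the final assembly of $\mathcal{H}_{i-1,i}^{a}(s)$, I would exploit the companion-form structure of $\mathbf{A}_{i-1}^{i}$: one has $\det(s\mathbf{I}_{3}-\mathbf{A}_{i-1}^{i}) = 1/\Upsilon_{i-1}^{i}(s)$, and a single cofactor expansion gives the third row of the adjugate as $\bigl[-\bar{k}_{i-1}^{i} s,\; -(\bar{b}_{i-1}^{i} s + \bar{k}_{i-1}^{i}),\; s^{2}\bigr]$. Applying this row to $[\theta_{i-1}^{i};\nu_{i-1}^{i};\varphi_{i-1}^{i}]$ produces $\Upsilon_{i-1}^{i}(s)\bigl(\varphi_{i-1}^{i} s^{2} - (\bar{k}_{i-1}^{i}\theta_{i-1}^{i} + \bar{b}_{i-1}^{i}\nu_{i-1}^{i})s - \bar{k}_{i-1}^{i}\nu_{i-1}^{i}\bigr)$. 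Adding this to the $(\Theta_{i-1}^{i} s + \Gamma_{i-1}^{i})\Upsilon_{i-1}^{i}(s)$ already isolated in the forced response collapses the coefficients into exactly $\Lambda_{2}^{i-1,i}$, $\Lambda_{1}^{i-1,i}$, $\Lambda_{0}^{i-1,i}$ of \eqref{ini_param}, so the combined contribution equals $\mathcal{H}_{i-1,i}^{a}(s)\Upsilon_{i-1}^{i}(s)$ as claimed.

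The remaining formulas $v_{i-1}^{i}(s) = (a_{i-1}^{i}(s)+\nu_{i-1}^{i})/s$ and $p_{i-1}^{i}(s) = (v_{i-1}^{i}(s)+\theta_{i-1}^{i})/s$ then follow at once from Laplace-transforming $\dot v_{i-1}^{i} = a_{i-1}^{i}$ and $\dot p_{i-1}^{i} = v_{i-1}^{i}$ with initial values $\nu_{i-1}^{i}$ and $\theta_{i-1}^{i}$, respectively. The main obstacle is bookkeeping rather than analytic difficulty: the quantities $\Theta_{i-1}^{i}$ and $\Gamma_{i-1}^{i}$ themselves already encode initial-condition data of \emph{other} pairs via \eqref{erty3}--\eqref{erty32}, so I must keep the initial-condition contribution arising from $\tilde{\mathbf{X}}_{i-1}^{i}(0)$ strictly separate from the inherited contributions carried inside $\tilde{u}_{i-1}^{i}$ until the final combination into the $\Lambda_{j}^{i-1,i}$ form, and I must also verify that the boundary pair $(0,1)$ is handled consistently via the empty-set convention stated after \eqref{erty32}.
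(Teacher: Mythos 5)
Your proposal is correct and follows essentially the same route as the paper's Appendix B: splitting $a_{i-1}^{i}(s)$ into zero-input and zero-state responses, using $\mathcal{G}_{i-1}^{i}(s)=s^{2}\Upsilon_{i-1}^{i}(s)$ on the four summands of the input, and merging the $(\Theta_{i-1}^{i}s+\Gamma_{i-1}^{i})\Upsilon_{i-1}^{i}(s)$ piece with the zero-input response to obtain $\mathcal{H}_{i-1,i}^{a}(s)\Upsilon_{i-1}^{i}(s)$. Your explicit cofactor computation of the third row of the adjugate is just the unpacked form of the paper's compact expression $\mathbf{K}_{i-1,i}^{ini}\mathbf{T}_{1}s^{2}\Upsilon_{i-1}^{i}(s)$, so there is no substantive difference.
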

\begin{proof}
Please check Appendix \ref{appendixB}.   
\end{proof}
\subsection{\textbf{Decoupled Dynamics of Distance Error, Relative Velocity, and Relative Acceleration Between Neighboring Vehicles}} In section \ref{CDPSs}, we formulated the coupled dynamics. To shed more light on the them, note that in \eqref{err3c2}, the calculation of $a_{i-1}^{i}(s)$ itself is dependant on knowing the term $\Pi_{i-1}^{i}(s)$ [see \eqref{piiip}] which comes from the relative accelerations of the other pairs. Thus, we need to decouple the relative accelerations of all pairs. To this end,  in \eqref{err3c2},  we define $\Psi_{i-1,i}^{a}(s)$ and $\mho_{i-1,i}^{a}(s)$, as the responses to initial conditions and the leader vehicle's acceleration/jerk, according to $\Psi_{i-1,i}^{a}(s)\triangleq\mathcal{H}_{i-1,i}^{a}(s)\Upsilon _{i-1}^{i}(s)
$ and 
\begin{equation}
\mho_{i-1,i}^{a}(s)=\begin{cases}
\begin{split}  
\frac{U(\gamma-i)(1+\tau_{i} s)a_{0}(s)s^{2}\Upsilon_{i-1}^{i}(s)}{\tau_{i}}
\end{split}; &\text{ if }\quad i=1
\\
\begin{split}    
\frac{U(i-\gamma)(\tau_{i-1}-\tau_{i})a_{0}(s)s^{2}\Upsilon_{i-1}^{i}(s)}{\tau_{i-1}\tau_{i}}
\end{split}
; &\text{ if }\quad i\neq 1
\end{cases}   
\end{equation}
Expanding \eqref{err3c2} for $i=1,\dots,n$ and constructing them all in a matrix form, a mapping can be derived as shown below:
\begin{equation}
\begin{cases}
\mathbf{a}(s) \triangleq \mathbf{Q}^{-1}(s)\left(\mathbf{\Psi}_{a}(s)+\mbox{\boldmath$\mho$}_{t}(s)\right)
\\
 \mathbf{v}(s)=\frac{1}{s}(\mathbf{a}(s)+\mbox{\boldmath$\nu$}) \\
  \mathbf{p}(s)=\frac{1}{s}(\mathbf{v}(s)+\mbox{\boldmath$\theta$})=\frac{1}{s^{2}}\mathbf{a}(s)+\frac{1}{s^{2}}\mbox{\boldmath$\nu$}+\frac{1}{s}\mbox{\boldmath$\theta$}
 \end{cases}
\label{weror}    
\end{equation}
where $\mathbf{\Psi}_{a}(s)\triangleq\left[\Psi_{0,1}^{a}(s);\; \Psi_{1,2}^{a}(s);\;\dots;\; \Psi_{n-1,n}^{a}(s)\right]$ and $\mathbf{a}(s)\triangleq\left[a_{0}^{1}(s);\; a_{1}^{2}(s);\;\dots;\; a_{n-1}^{n}(s)\right]$. Also, $\mathbf{Q}(s)\in\mathbb{C}^{n\times n}$ is the mapping matrix whose elements are according to ($i,\kappa=1,\dots,n$)
\begin{equation}
 q_{i\kappa}(s)=\begin{cases}
-\mathbf{K}_{i-1,i}^{\kappa -}\mathbf{T}_{1}\mathcal{G}_{i-1}^{i}(s)    & \text{ if  }\hspace{0.6cm} i>\kappa\\ 
1    & \text{ if  }\hspace{0.6cm} i=\kappa\\
-\mathbf{K}_{i-1,i}^{\kappa +}\mathbf{T}_{1}\mathcal{G}_{i-1}^{i}(s)    & \text{ if  }\hspace{0.6cm} i<\kappa
 \end{cases}  
 \label{qmatrix}
\end{equation}
in which 
\begin{equation}
\begin{cases}
\begin{split}
\mathbf{K}_{i-1,i}^{\kappa-}\triangleq\frac{1}{\tau_{i-1}}|\mathbb{R}_{i-1}^{\leq \kappa-1}|\mathbf{K}-\frac{1}{\tau_{i}}|\mathbb{R}_{i}^{\leq \kappa-1}|\mathbf{K}-\frac{U(i-\gamma)}{\tau_{i-1}}\mbox{\boldmath$\tau$}_{i-1}^{i}
\end{split}
\\
\begin{split}
\mathbf{K}_{i-1,i}^{\kappa +}\triangleq\frac{1}{\tau_{i}}|\mathbb{R}_{i}^{\geq \kappa}|\mathbf{K}-\frac{1}{\tau_{i-1}}|\mathbb{R}_{i-1}^{\geq \kappa}|\mathbf{K}
\end{split}
\end{cases}
\label{kiip}
\end{equation}
where $|.|$ denotes the cardinality of sets, and considering the pair 
$(i-1,i)$, $\mathbb{R}_{i-1}^{\leq \kappa-1}\triangleq\{\epsilon|\epsilon\in\mathbb{R}_{i-1} \text{ \& } \epsilon\leq\kappa-1\}$,  $\mathbb{R}_{i}^{\leq \kappa-1}\triangleq\{\epsilon|\epsilon\in\mathbb{R}_{i} \text{ \& } \epsilon\leq\kappa-1\}$, $\mathbb{R}_{i}^{\geq \kappa}\triangleq\{\epsilon|\epsilon\in\mathbb{R}_{i} \text{ \& } \epsilon\geq\kappa\}$, and $\mathbb{R}_{i-1}^{\geq \kappa}\triangleq\{\epsilon|\epsilon\in\mathbb{R}_{i-1} \text{ \& } \epsilon\geq\kappa\}$. Also, in \eqref{weror}, $\mbox{\boldmath$\mho$}_{t}(s)\in\mathbb{C}^{n\times 1}$ such that $\mbox{\boldmath$\mho$}_{t}(s) \triangleq [\mho_{0,1}^{a}(s);\mho_{1,2}^{a}(s);\dots;\mho_{n-1,n}^{a}(s)]$.
Moreover, in \eqref{weror}, the second and third relations are due to the fact that $v_{i-1}^{i}(s)=(a_{i-1}^{i}(s)+\nu_{i-1}^{i})/s$ and  $p_{i-1}^{i}(s)=(v_{i-1}^{i}(s)+\theta_{i-1}^{i})/s$. Finally, in \eqref{weror}, $\mathbf{v}(s)\triangleq[v_{0}^{1}(s);\; v_{1}^{2}(s);\;\dots;\; v_{n-1}^{n}(s)]$,  $\mbox{\boldmath$\nu$}\triangleq[\nu_{0}^{1};\; \nu_{1}^{2};\;\dots;\; \nu_{n-1}^{n}]$, $\mathbf{p}(s)\triangleq[p_{0}^{1}(s);\; p_{1}^{2}(s);\;\dots;\; p_{n-1}^{n}(s)]$, and $\mbox{\boldmath$\theta$}\triangleq[\theta_{0}^{1};\; \theta_{1}^{2};\;\dots;\; \theta_{n-1}^{n}]$.
\subsubsection{\textbf{Formulations of Distance Error, Relative Velocity, and Relative Acceleration 
 Between Vehicles $i-1$ and $i$}} As mentioned earlier [see Remark \ref{re1re}], one of the main objectives of this work was to formulate distance errors, relative velocities, and relative accelerations between neighboring vehicles. Here, we present these formulations. Given \eqref{weror}, we get ($i=1,\dots,n$)
 \begin{equation}
\begin{split} 
a_{i-1}^{i}(s)=\nu_{i-1}^{i} +\sum_{\kappa=1}^{n}q^{-1}_{i\kappa }(s)\left(\Psi_{\kappa-1,\kappa}^{a}(s)+\mho_{\kappa-1,\kappa}^{a}(s)\right)
\end{split}
\label{impuleseRA}
\end{equation}
where $a_{i-1}^{i}(s)$ is the relative acceleration in the Laplacian domain. Therefore, it is possible to obtain $a_{i-1}^{i}(t)$ through the impulse response of the $a_{i-1}^{i}(s)$. Similarly, the relative velocity would be according to ($i=1,\dots,n$)
\begin{equation}
\begin{split} 
v_{i-1}^{i}(s)=\frac{1}{s}\left[\nu_{i-1}^{i} +\sum_{\kappa=1}^{n}q^{-1}_{i\kappa }(s)\left(\Psi_{\kappa-1,\kappa}^{a}(s)+\mho_{\kappa-1,\kappa}^{a}(s)\right)\right]
\end{split}
\label{impuleseRV}
\end{equation}
and the distance error is calculated by ($i=1,\dots,n$)
\begin{equation}
\begin{split} 
p_{i-1}^{i}(s)&=\frac{1}{s}\theta_{i-1}^{i}+\frac{1}{s^2}\left[\nu_{i-1}^{i} +\sum_{\kappa=1}^{n}q^{-1}_{i\kappa }(s)\left(\Psi_{\kappa-1,\kappa}^{a}(s)+\mho_{\kappa-1,\kappa}^{a}(s)\right)\right]
\end{split}
\label{impulese}
\end{equation}
where $q_{i\kappa}^{-1}(s)$ are the elements of the matrix $\mathbf{Q}^{-1}(s)$.
\subsubsection{\textbf{Affecting Parameters}}The mapping matrix, influenced by control gains, vehicle features, and communication topology, is also affected by initial conditions through $\mathbf{\Psi}_{a}(s)$. Additionally, distance errors, relative velocities, and accelerations are influenced by the leader vehicle's acceleration/jerk as well.
\subsection{\textbf{Vehicles' Control/Engine Inputs, and Accelerations/Jerks}}
\label{CEIAJ}
To analyze energy consumption and passenger comfort, we need to formulate vehicle control/engine inputs, accelerations, and jerks. By splitting $\mathbb{I}_{i}$ into $\mathbb{I}_{i}^{>i}$ (vehicles ahead) and $\mathbb{I}_{i}^{<i}$ (vehicles behind), the control input can be rewritten as\begin{equation}
u_{i} = -\mathbf{K}\sum_{j\in\mathbb{I}_{i}^{>i}}\sum_{\kappa=i+1}^{j}\Tilde{\mathbf{X}}_{\kappa-1}^{\kappa}+\mathbf{K}\sum_{j\in\mathbb{I}_{i}^{<i}}\sum_{\kappa=j+1}^{i}\Tilde{\mathbf{X}}_{\kappa-1}^{\kappa}
\label{ytt5}
\end{equation}
By $\Tilde{\mathbf{X}}_{t}(s)\triangleq\left[\Tilde{\mathbf{X}}_{0}^{1}(s);\; \Tilde{\mathbf{X}}_{1}^{2}(s);\;\dots;\; \Tilde{\mathbf{X}}_{n-1}^{n}(s)\right]$ and noting \eqref{tr}, we get 
\begin{equation}
\Tilde{\mathbf{X}}_{t}(s)=\left(\mathbf{I}_{n}\otimes\mathbf{T}_{1}\right)\mathbf{a}(s)-\left(\mathbf{I}_{n}\otimes\mathbf{T}_{2}\right)\mathbf{\Xi}_{t}-\left(\mathbf{I}_{n}\otimes\mathbf{T}_{3}\right)\mbox{\boldmath$\omega$}_{t}
\label{eeeqq}
\end{equation}
in which $\mathbf{\Xi}_{t}\triangleq\left[\mathbf{\Xi}_{0}^{1};\mathbf{\Xi}_{1}^{2};\dots;\mathbf{\Xi}_{n-1}^{n}\right]$ and $\mbox{\boldmath$\omega$}_{t}\triangleq\left[\omega_{0}^{1};\omega_{1}^{2};\dots;\omega_{n-1}^{n}\right]$. Now, substituting $\mathbf{a}(s)$ from  \eqref{weror} into \eqref{eeeqq}, yields
\begin{equation}
\begin{split}  
\Tilde{\mathbf{X}}_{t}(s)&=\left(\mathbf{I}_{n}\otimes\mathbf{T}_{1}\right)\mathbf{Q}^{-1}(s)\left(\mathbf{\Psi}_{a}(s)+\mbox{\boldmath$\mho$}_{t}(s)\right)\\&-\left(\mathbf{I}_{n}\otimes\mathbf{T}_{2}\right)\mathbf{\Xi}_{t}-\left(\mathbf{I}_{n}\otimes\mathbf{T}_{3}\right)\mbox{\boldmath$\omega$}_{t}
\end{split}
\label{xs}
\end{equation}
using which and defining $\mathbf{I}^{\kappa}\in R^{3\times 3n}$ as the matrix comprised of $(3\kappa-2)^{th}$ till $(3\kappa)^{th}$ rows of the matrix $\mathbf{I}_{3n}$, then for $\kappa=1,\dots,n$ we get $\Tilde{\mathbf{X}}_{\kappa-1}^{\kappa}(s)=\mathbf{I}^{\kappa}\Tilde{\mathbf{X}}_{t}(s)$, plugging which into \eqref{ytt5}, results in
\begin{equation}
u_{i}(s) = -\mathbf{K}\sum_{j\in\mathbb{I}_{i}^{>i}}\sum_{\kappa=i+1}^{j}\mathbf{I}^{\kappa}\Tilde{\mathbf{X}}_{t}(s)+\mathbf{K}\sum_{j\in\mathbb{I}_{i}^{<i}}\sum_{\kappa=j+1}^{i}\mathbf{I}^{\kappa}\Tilde{\mathbf{X}}_{t}(s)
\label{uis}
\end{equation}
Note that $\mathbf{I}_{3n}$ is the identity matrix of size $3n$. 
Given \eqref{fl} and \eqref{uis}, the acceleration/jerk of FVs would be $a_{i}(s) = \frac{1}{1+\tau_{i}s}u_{i}(s)$ and $\mathbb{j}_{i}(s) = \frac{s}{1+\tau_{i}s}u_{i}(s)$, respectively. Also, we can get vehicles' velocities by $v_{i}(s)=(a_{i}(s)+v_{i}(0))/s$. Given that $v_{i}(t)$ and $a_{i}(t)$ are the impulse responses of $v_{i}(s)$ and $a_{i}(s)$, respectively, vehicles' engine inputs can be derived using \eqref{flc}.
\section{\textbf{Platoon Stability}}
\label{platoonStability}
Let's define the platoon's state vector as $\Tilde{\mathbf{X}}_{t} \in \mathbb{R}^{3n \times 1} \triangleq [\Tilde{\mathbf{X}}_{0}^{1};\; \Tilde{\mathbf{X}}_{1}^{2}; \dots; \; \Tilde{\mathbf{X}}_{n-1}^{n}]$, where each element $\Tilde{\mathbf{X}}_{i-1}^{i} \in \mathbb{R}^{3 \times 1}$; $i = 1, \dots, n$, represents the relative measurements (i.e., the distance error, the relative velocity, and the relative acceleration) between neighboring vehicles $i-1$ and $i$ [see Section \ref{pstate}]. Considering the coupled dynamics of all neighboring vehicles [i.e., equations \eqref{col_dyn}-\eqref{inputi-i+1} for $i = 1, \dots, n$], we can formulate the closed-loop dynamics of the platoon based on these relative measurements. This leads to the following state-space representation of the closed-loop dynamic model for the platoon:
\begin{equation}
\dot{\Tilde{\mathbf{X}}}_{t}=\Tilde{\mathbf{A}}_{t}\Tilde{\mathbf{X}}_{t}+\mathbf{B}_{t}\mbox{\boldmath$\epsilon$}_{t}=
 \begin{bmatrix}
\mathbf{A}_{11}    &  \mathbf{A}_{12} & \dots & \mathbf{A}_{1n}
\\\\
\mathbf{A}_{21}  & \mathbf{A}_{22} & \dots & \mathbf{A}_{2n}
\\
\vdots & \dots &  \ddots & \vdots\\
\mathbf{A}_{n1} & \mathbf{A}_{n2} & \dots & \mathbf{A}_{nn}
\end{bmatrix} \Tilde{\mathbf{X}}_{t}+\mathbf{I}_{3n}\mbox{\boldmath$\epsilon$}_{t}  
\label{oiu2}
\end{equation}
in which $\mathbf{B}_{t}=\mathbf{I}_{3n}$ is the identity matrix of size $3n$ and $\mbox{\boldmath$\epsilon$}_{t}=[\mbox{\boldmath$\epsilon$}_{0}^{1};\;\dots;\;\mbox{\boldmath$\epsilon$}_{n-1}^{n}]$; $i=1,\dots,n$, in which  $\mbox{\boldmath$\epsilon$}_{i-1}^{i}=[0;\;0;\;\epsilon_{i-1}^{i}]$ and $\epsilon_{i-1}^{i}=\epsilon_{i-1}-\epsilon_{i}$ [see \eqref{ssmx}]. Also, $\Tilde{\mathbf{A}}_{t}\in R^{3n\times 3n}$ is the platoon closed-loop system matrix such that its elements $\mathbf{A}_{i\kappa}$ for $i,\kappa=1,\dots,n$ are according to [see \eqref{col_dyn}, \eqref{kiip}, and \eqref{trss}]: 
\begin{equation}
\mathbf{A}_{i\kappa}=
\begin{cases}
\mathbf{B}_{i-1}^{i}{\mathbf{K}_{i-1,i}^{\kappa-}} & \text{ if } \quad\quad\kappa<i
\\
\mathbf{B}_{i-1}^{i}\mathbf{K}_{i-1,i}^{t} &\text{ if } \quad\quad i=\kappa\\
\mathbf{B}_{i-1}^{i}{\mathbf{K}_{i-1,i}^{\kappa+}}&  \text{ if }\quad\quad \kappa>i
\end{cases}
\label{totsx2}
\end{equation}
where $\mathbf{K}_{i-1,i}^{t}\triangleq\left[\bar{k}_{i-1}^{i}, \bar{b}_{i-1}^{i}, \bar{h}_{i-1}^{i}\right]$. Note that $\mathbf{B}_{i-1}^{i}\mathbf{K}_{i-1,i}^{t}=\mathbf{A}_{i-1}^{i}$ [see \eqref{pbc} and \eqref{col_dyn}]. Given \eqref{oiu2}, the platoon will be internally stable if all eigenvalues of the system matrix $\tilde{\mathbf{A}}_{t}$ are negative.
\begin{remark}
For the case that vehicles have identical engine time constants, i.e., $\tau_{i-1}=\tau_{i}$, platoon's closed dynamics based on relative measurements is governed by [see \eqref{trss}]
\begin{equation}
\begin{split}
\dot{\Tilde{\mathbf{X}}}_{t}=\Tilde{\mathbf{A}}_{t}\Tilde{\mathbf{X}}_{t}=\left[\mathbf{I}_{n}\otimes \mathbf{A}-\bar{\mathbf{P}}\otimes\mathbf{BK}\right]\Tilde{\mathbf{X}}_{t}+\mathbf{I}_{3n}\mbox{\boldmath$\epsilon$}_{t}
\end{split}
\label{homo_dyn}
\end{equation}
in which $\mathbf{A}=\mathbf{A}_{i-1}=\mathbf{A}_{i}$, $\mathbf{B}=\mathbf{B}_{i-1}=\mathbf{B}_{i}$,  $\mbox{\boldmath$\epsilon$}_{t}=[\mbox{\boldmath$\epsilon$}_{0}^{1};\;\mathbf{0}_{3\times 1},\;\dots;\;\mathbf{0}_{3\times 1}]$ such that $\mbox{\boldmath$\epsilon$}_{0}^{1}=[0;\;0;\;\epsilon_{0}^{1}]$ and $\epsilon_{0}^{1}=-\epsilon_{1}$ [see \eqref{ssmx}], and $\mathbf{0}_{3\times 1}$ is the zero vector of size $3$. Also,  $\bar{\mathbf{P}}\in R^{n\times n}$ whose elements $\bar{p}_{i\kappa}$ are according to [see \eqref{pbc} and \eqref{kiip}]
\begin{equation}
\bar{p}_{i\kappa}=
\begin{cases}
|\mathbb{R}_{i-1}^{\leq \kappa-1}|-|\mathbb{R}_{i}^{\leq \kappa-1}| & \text{ if } \quad\quad \kappa<i
\\|\mathbb{I}_{i}^{\leq i-1}|+|\mathbb{I}_{i-1}^{\geq i}| & \text{ if } \quad\quad i=\kappa \\
|\mathbb{R}_{i}^{\geq \kappa}|-|\mathbb{R}_{i-1}^{\geq \kappa}|  & \text{ if } \quad\quad \kappa>i
\end{cases}   
\end{equation}
Note that $\left|.\right|$ denotes the cardinality of the relevant sets.
\end{remark} 
\begin{remark}
Please note that the closed-loop dynamic description in \eqref{homo_dyn} differs from those in the literature as we have defined the relative states between neighboring vehicles as the states of the platoon.     
\end{remark}
\section{\textbf{Metrics for Platoon Performance}}
\label{pmetrcis}
We present four groups of platoon performance metrics: Safe Control Gain Deficiency Index (SaCGDI) [see \eqref{SaCGDI}], Intervehicle Distance Safety (including AAPMTTC [see \eqref{AAPMTTC}] and AAMDRAC [see \eqref{AADRAC}]), Energy Consumption (AAMEEI [see \eqref{AAIE}]), and Passenger Comfort (including AAMEA [see \eqref{AAAE}] and AAMEJ [see \eqref{AAJE}]). These metrics will be used to quantitatively assess and compare the performance of platoon under different RCTs.
\subsection{\textbf{Safe Control Gain Deficiency Index (SaCGDI)}}
\label{SaCGDI}
Considering section \ref{safesa}, different communication topologies can lead to variations in the proportion of control gains that meet safety criteria. To study this, we define SaCGDI metric to evaluate the performance of vehicular platoons across various communication topologies. The metric is defined as follows: 
\begin{equation}
\text{SaCGDI}=\left(1-\frac{\#_{\text{SaCGVs}}}{\#_{\text{TCGVs}}}\right)\times 100  
\label{SaCGDI}
\end{equation}
which represents the percentage of Control Gain Vectors (CGVs) that do not maintain safe distances between vehicles. A safe distance between neghboring vehicles, i.e., $d_{i-1,i}^{s}$, is a preset value we consider for intervehicle distances, e.g., $d_{i-1,i}^{s}=3$ m. In \eqref{SaCGDI}, $\#_{\text{SaCGVs}}$ denotes the number of Safe CGVs and  $\#_{\text{TCGVs}}$ denotes the Total number of CGVs.
\begin{remark}
A lower SaCGDI value suggests that a higher proportion of control gains are safe, which indicates better performance under the given communication topology.
\end{remark}
\subsection{\textbf{Intervehicle Distance Safety}}
\label{IDS}
We introduce two metrics: Accumulative Average Penalty of Minimum Time to Collision (AAPMTTC) and  Accumulative Average  Minimum Deceleration Rate to Avoid Collision (AAMDRAC).

\textbf{AAPMTTC}: First, let's calculate Minimum Time to Collision between neighboring vehicles $i-1$ and $i$ at time $t=j\Delta t$; let it be denoted as $MTTC_{i}(t)$, where $\Delta t$ is the sampling time, and $j$ is time steps over the travel time. At time $t$ seconds after platooning triggered, based on the distance; $D_{i-1}^{i}(t)$, relative velocity; $v_{i-1}^{i}(t)$, and relative acceleration; $a_{i-1}^{i}(t)$, $MTTC_{i}(t)$ computes the minimum duration (in seconds) it would take for the following vehicle $i$ to traverse the intervehicle distance; $D_{i-1}^{i}(t)$, to its preceding vehicle, i.e., vehicle $i-1$. The MTTC value is calculated according to ($i=1,\dots,n-1$)
\begin{equation}
MTTC_{i}(t)=
\begin{cases}
\frac{-v_{i-1}^{i}(t)-\sqrt{\Delta_{i-1}^{i}(t)}}{a_{i-1}^{i}(t)}>0 & \text{ if }  \Delta_{i-1}^{i}(t)>0 
\\
+ \infty & \text{ } Otherwise
\end{cases} 
\label{mttc}
\end{equation}
where $\Delta_{i-1}^{i}(t)=(v_{i-1}^{i}(t))^{2}-2a_{i-1}^{i}(t)D_{i-1}^{i}(t)$. Note that the second case actually implies that mathematically there is no real positive value for $MTTC_{i}(t)$. In other words, given the values of $D_{i-1}^{i}(t)$, $v_{i-1}^{i}(t)$, and $a_{i-1}^{i}(t)$, theoretically the following vehicle FV cannot reach to its preceding vehicle, then we consider $MTTC_{i}(t)=+\infty$. Based on \eqref{mttc}, the Penalty of MTTC (PMTTC) for the $i^{th}$ FV, is defined as:
\begin{equation}
PMTTC_{i}(t) = 100 e^{\frac{-0.1MTTC_{i}(t)}{T_{s}}}; \hspace{0.5cm} i=1,\dots,n  
\label{pfunction}
\end{equation}
where $T_{s}\triangleq 1$ second. The penalty function implies that, for the pair $(i-1,i)$, the more the vehicle  $i$ becomes closer to vehicle $i-1$, the more it is penalized which, at maximum, would be approximately equal to $100$. Note that $MTTC_{i}(t)=0$ implies collision between the neighboring vehicles which is not the case here. On the other hand, if theoretically they cannot reach to each other (for instance, when simultaneously we have $v_{i-1}^{i}(t)>0$ and $a_{i-1}^{i}(t)>0$), then the FV will not be penalized, i.e.,  $MTTC_{i}(t)=+\infty$ and thus $PMTTC_{i}(t)=0$. The Average PMTTC (APMTTC) of platoon, under a given RCT and a set of Stable and Non-Colliding CGVs; let it be denotd as $\mathbf{K}_{s}$, is:
\begin{equation}
APMTTC(t) =\frac{1}{\#_{\mathbf{K}_{s}}}\sum_{\mathbf{K}_{s}}\sum_{i=1}^{n}PMTTC_{i}(t) 
\label{APMTTC}
\end{equation}
where  $\#_{\mathbf{K}_{s}}$ denotes the number of CGVs in the set $\mathbf{K}_{s}$. To quantify the safety for platoon over the travel time from $t=0$ till $t$ seconds, Accumulative APMTTC (AAPMTTC) is:
\begin{equation}
AAPMTTC(t) =\sum_{i=0}^{j} APMTTC(i\Delta t)
\label{AAPMTTC}
\end{equation}
where $t=j\Delta t$ and $j$ is time steps over the travel time.
\begin{remark}
Given \eqref{pfunction}, the penalty function values for $MTTC_{i}(t) \geq 60$ seconds are nearly zero, indicating that a 60-second time duration is considered safe for the follower vehicle, especially when compared to the average human reaction time of 1.5 seconds \cite{green2000long}. However, to select a penalty function that is more responsive to $MTTC_{i}(t) \geq 60$, we can use the alternative penalty function shown with a dashed line in Fig. \ref{pf_alt}. It is important to note that the penalty function represented by the solid line is more sensitive to smaller values of $MTTC_{i}(t)$, imposing greater penalties on FVs as they come closer. Therefore, we choose the solid line function.
\begin{figure}[htbp!]
\begin{center}
\resizebox{0.8\hsize}{!}{\includegraphics*{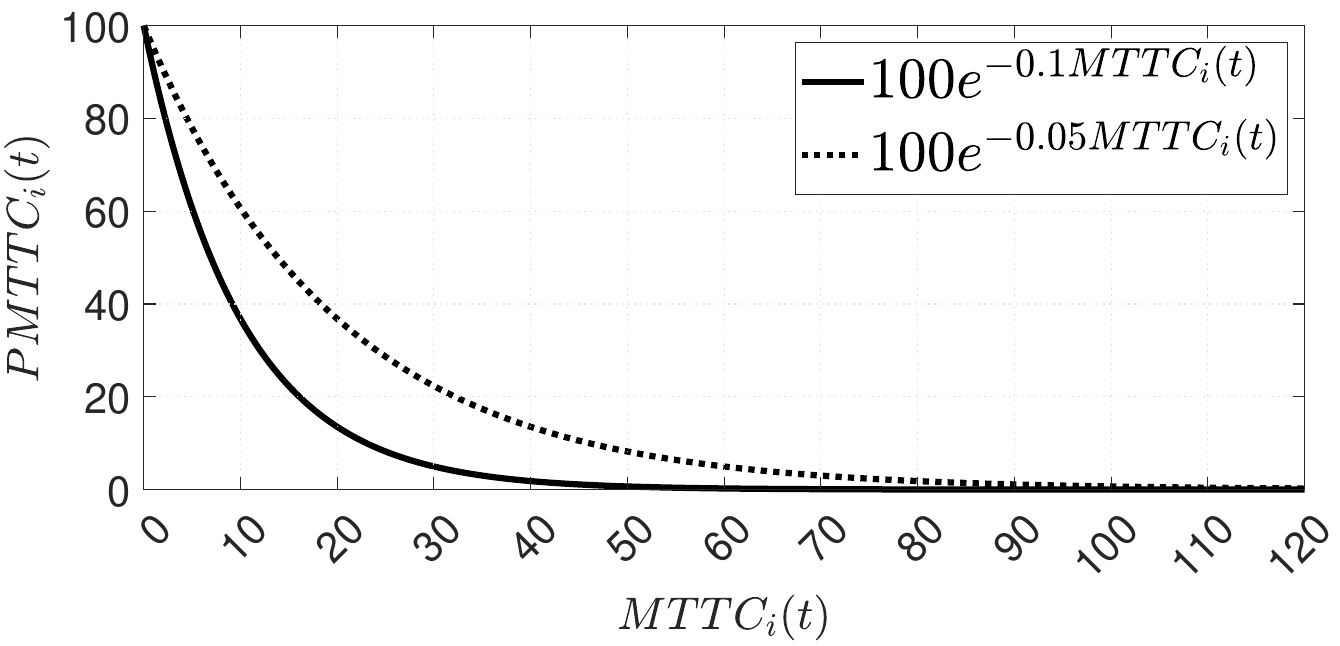}}
\caption{Penalty functions for MTTC.}
\label{pf_alt}
\end{center}
\end{figure}    
\end{remark}

\textbf{AAMDRAC}: Let's first elaborate on the calculation of Minimum Deceleration Rate to Avoid Collision (MDRAC) between neighboring vehicles $i-1$ and $i$, denoted as $MDRAC_{i}(t)$. Suppose that at time $t=j\Delta t$, we have the values of $D_{i-1}^{i}(t)$, $v_{i-1}^{i}(t)$, and $a_{i-1}^{i}(t)$ for the pair $(i-1,i)$ as the distance, relative velocity, and relative acceleration between the neghboing vehicles $i-1$ and $i$. With time paused, $MDRAC_{i}(t)$ computes the minimum constant deceleration rate the FV should apply to avoid collision with its preceding vehicle. The minimum value can be obtained considering that FV will needs zero velocity after traversing the intervehicle distance $D_{i-1}^{i}(t)$. If $v_{i-1}^{i}(t)\geq 0$ and $a_{i-1}^{i}(t)\geq 0$, the FV theoretically can not reach the preceding vehicle, and thus does not need any deceleration. However, if $v_{i-1}^{i}(t)\geq 0$ and $a_{i-1}^{i}(t)<0$, the FV can theoretically reach the preceding vehicle. Therefore, we set $MDRAC_{i}(t)=-a_{i-1}^{i}(t)/T_{a}$ which is the minimum deceleration rate to make the FV keep the same velocity difference from its preceding vehicle. Note that $T_{a}\triangleq 1$ $\mathrm{m/s^2}$. Also, for the case that $v_{i-1}^{i}(t)<0$, instant deceleration is needed. Thus, $MDRAC_{i}(t)$ for the FV $i$ would be ($i=1,\dots,n$):
\begin{equation}
MDRAC_{i}(t)=
\begin{cases}
\frac{(v_{i-1}^{i}(t))^{2}}{2D_{i-1}^{i}(t)T_{a}}; & \hspace{0.05cm}v_{i-1}^{i}(t)<0
\\
-a_{i-1}^{i}(t)/T_{a}; &  \hspace{0.05cm} v_{i-1}^{i}(t)>0 \text{ \& } a_{i-1}^{i}(t)<0\\
0; & \text{Otherwise}
\end{cases}   
\end{equation}
using which, and similiar to relations \eqref{APMTTC}-\eqref{AAPMTTC}, the  Average MDRAC (AMDRAC) of platoon is defined as:
\begin{equation}
AMDRAC(t) =\frac{1}{\#_{\mathbf{K}_{s}}}\sum_{\mathbf{K}_{s}}\sum_{i=1}^{n}MDRAC_{i}( t) 
\label{ADRAC}
\end{equation}
and thus Accumulative AMDRAC (AAMDRAC) for the platoon over the travel time can be derived as
\begin{equation}
AAMDRAC(t) =\sum_{i=0}^{j} AMDRAC(i\Delta t)
\label{AADRAC}
\end{equation}
where $t=j\Delta t$ and $j$ is time steps over the travel time.
\begin{remark}
The APMTTC and AMDRAC values offer momentary safety insights at any instant of travel time, while the  AAPMTTC and AAMDRAC metrics provide accumulative safety insights for the entire travel time. Smaller values  correspond to safer conditions. 
\end{remark}
\subsection{\textbf{Energy Consumption}}
\label{EC}
Rapid fluctuations in engine input typically lead to higher energy consumption \cite{al2021driving,ma2021eco,yao2020vehicle,sun2020optimal,yu2022eco}. Given \eqref{flc}, the engine input is directly related to the force applied, likely through a throttle or similar mechanism that regulates the power output of the engine. Thus, rapid  changes in this force input signal would  lead to increased energy consumption. To quantify these fluctuations, reminding  subsection \ref{CEIAJ} and equation \eqref{flc}, and similiar to relations \eqref{APMTTC}-\eqref{AAPMTTC}, we first calculate Average Momentary Energy of the Engine Input (AMEEI) according to:
\begin{equation}
AMEEI(t) =\frac{1}{\#_{\mathbf{K}_{s}}}\sum_{\mathbf{K}_{s}}\sum_{i=1}^{n}\left(c_{i}(t)/T_{f}\right)^{2}
\label{enCT}
\end{equation}
in which $T_{f}\triangleq 1$ $\mathrm{kg .m/s^2}$. Using \eqref{enCT}, Accumulative AMEEI (AAMEEI) for the platoon over the travel time would be:
\begin{equation}
AAMEEI(t) =\sum_{i=0}^{j} AMEEI(i\Delta t)
\label{AAIE}
\end{equation}
where $t=j\Delta t$ and $j$ is total time steps over the travel time.
\begin{remark}
The AMEEI value offers momentary energy efficiency  insight at any instance of travel time, while the AAMEEI  metric provides accumulative energy consumption insight for the entire travel time. Similar to previous metrics, smaller values for the momentary and accumulative values correspond to better platoon performance.   \end{remark}
\subsection{\textbf{Passenger Comfort}}
\label{PC}
Acceleration refers to the rate of change of velocity, and jerk is the rate of change of acceleration. Both metrics are commonly used to evaluate ride comfort in vehicles \cite{liu2021dynamic,mohajer2020enhancing,wang2020research,yan2022cooperative,hang2020integrated}. Smooth acceleration and minimal jerk contribute to a more comfortable ride, reducing discomfort caused by sudden or sharp movements. We use acceleration and jerk to measure movement smoothness and, consequently, driving comfort for passengers.  Thus, given subsection \ref{CEIAJ},  Average Momentary Energy of Acceleration and Jerk signals, i.e.,  AMEA and AMEJ, are calculated as follows:
\begin{subequations}
\begin{equation}   
AMEA_{i}(t)=\frac{1}{\#_{\mathbf{K}_{s}}}\sum_{\mathbf{K}_{s}}\sum_{i=1}^{n}\left(a_{i}(t)/T_{a}\right)^{2}
\label{enJA1}
\end{equation}
\begin{equation}   
AMEJ_{i}(t)=\frac{1}{\#_{\mathbf{K}_{s}}}\sum_{\mathbf{K}_{s}}\sum_{i=1}^{n}\left(\mathbb{j}_{i}(t)/T_{j}\right)^{2}
\label{enJA2}
\end{equation}
\label{enJA}
\end{subequations}
where $T_{a}\triangleq 1$ $\mathrm{m/s^{2}}$ and $T_{j}\triangleq 1$ $\mathrm{m/s^{3}}$. Using \eqref{enJA} Accumulative AMEA (AAMEA) and Accumulative AMEJ (AAMEJ) for the platoon are determined by:
\begin{subequations}
\begin{equation}
AAMEA(t) = \sum_{i=0}^{j} AMEA(i\Delta t)
\label{AAAE}
\end{equation}
\begin{equation}
AAMEJ(t) = \sum_{i=0}^{j} AMEJ(i\Delta t)
\label{AAJE}
\end{equation}
\end{subequations}
where $t=j\Delta t$ and $j$ is time steps over the travel time.
\begin{remark}
the AMEA and AMEJ values offer momentary comfort insights at any instant of travel time, while the  AAMEA and AAMEJ metrics provide accumulative comfort insights for the travel time. Smaller values for the momentary and accumulative values correspond to safer conditions.   
\end{remark}
\section{{\color{black}\textbf{Simulations and Results}}}
\label{simpre}
\subsection{\textbf{Preliminaries 1 (Simulations Setup)}}
\label{simprer}
We consider a platoon consisting of one leader and four FVs with the following considerations:

1. \textbf{Initial Conditions}: When the leader and FVs are moving, platooning is initiated, and thus each vehicle has its own initial position, velocity, and acceleration. This results in initial distance error $p_{i-1}^{i}(0)$, initial relative velocity $v_{i-1}^{i}(0)$, and initial relative acceleration $a_{i-1}^{i}(0)$ between neighboring vehicles. Initial conditions are summarized in Table \ref{table2}.
\begin{table}[h]
\centering
\renewcommand{\arraystretch}{1.5}
\caption{Initial conditions for individual and coupled states.}
\begin{adjustbox}{width=0.48\textwidth}
\centering
\begin{tabular}{|cccc|cccc|}
\hline
\multicolumn{4}{|c|}{Vehicles' Initial Conditions} &
  \multicolumn{4}{c|}{Coupled  Initial Conditions} \\ \Xhline{2pt}
\multicolumn{1}{|c|}{Vehicle} &
  \multicolumn{1}{c|}{\begin{tabular}[c]{@{}c@{}}Position (m)\end{tabular}} &
  \multicolumn{1}{c|}{\begin{tabular}[c]{@{}c@{}}Velocity ($\mathrm{m/s}$)\end{tabular}} &
  \begin{tabular}[c]{@{}c@{}}Acceleration ($\mathrm{m/s^{2}}$) \end{tabular} &
  \multicolumn{1}{c|}{\multirow{2}{*}{Pair}} &
  \multicolumn{1}{c|}{\multirow{2}{*}{\begin{tabular}[c]{@{}c@{}}Distance\\Error\end{tabular}}} &
  \multicolumn{1}{c|}{\multirow{2}{*}{\begin{tabular}[c]{@{}c@{}}Relative\\Velocity\end{tabular}}} &
  \multirow{2}{*}{\begin{tabular}[c]{@{}c@{}}Relative\\ Acceleration\end{tabular}} \\ \customCline{1.1pt}{1}{4}
\multicolumn{1}{|c|}{0} &
  \multicolumn{1}{c|}{2.832} &
  \multicolumn{1}{c|}{4.760} &
  4.000 &
  \multicolumn{1}{c|}{} &
  \multicolumn{1}{c|}{} &
  \multicolumn{1}{c|}{} &
   \\ \hline\customCline{1.1pt}{5}{8}
\multicolumn{1}{|c|}{1} &
  \multicolumn{1}{c|}{-11.424} &
  \multicolumn{1}{c|}{7.313} &
  5.841 &
  \multicolumn{1}{c|}{(0,1)} &
  \multicolumn{1}{c|}{5.256} &
  \multicolumn{1}{c|}{-2.553} &
  -1.841 \\ \hline
\multicolumn{1}{|c|}{2} &
  \multicolumn{1}{c|}{-28.065} &
  \multicolumn{1}{c|}{7.806} &
  6.405 &
  \multicolumn{1}{c|}{(1,2)} &
  \multicolumn{1}{c|}{7.640} &
  \multicolumn{1}{c|}{-0.493} &
  -0.563 \\ \hline
\multicolumn{1}{|c|}{3} &
  \multicolumn{1}{c|}{-41.661} &
  \multicolumn{1}{c|}{10.738} &
  8.533 &
  \multicolumn{1}{c|}{(2,3)} &
  \multicolumn{1}{c|}{4.597} &
  \multicolumn{1}{c|}{-2.932} &
  -2.128 \\ \hline
\multicolumn{1}{|c|}{4} &
  \multicolumn{1}{c|}{-57.081} &
  \multicolumn{1}{c|}{10.384} &
  9.599 &
  \multicolumn{1}{c|}{(3,4)} &
  \multicolumn{1}{c|}{6.420} &
  \multicolumn{1}{c|}{0.354} &
  -1.066 \\ \hline
\end{tabular}
\end{adjustbox}
\label{table2}
\end{table}

2. \textbf{Leader Vehicle's Acceleration}: We consider three acceleration trajectories for the leader after the platooning is triggered. These acceleration trajectories, with their corresponding velocity trajectories, are illustrated in Fig. \ref{leader_av}. The relevant functions are provided in \eqref{accvel}. Note that $v_{0}(0)=4.760$ $\mathrm{m/s}$.
\begin{equation}
\begin{aligned}[b]
a_{0}(s)=
\begin{cases}
   \frac{4s+14}{s^{2}+1.5s+1} & \quad\text{ Acc. 1}\\
   \frac{(4s+1)(s+1)}{(s+2)(s^{2}+2s+12)} &\quad \text{ Acc. 2}\\
   \frac{4s+1}{s^{2}+3s+2} &\quad \text{ Acc. 3}
\end{cases}
\end{aligned}
\label{accvel}
\end{equation}

Furthermore, note that the whole travel time is assumed to be $25$ s, and sampling time is considered as $\Delta t=0.01$ s.

3. \textbf{Engine Time Constants (ETCs)}: Three sets of different ETCs are considered, as shown in Table \ref{table1}.
\begin{table}[h]
\renewcommand{\arraystretch}{1}
\centering
\caption{Different sets of engine time constants (ETCS).}
\begin{adjustbox}{width=0.48\textwidth}
\begin{tabular}{c|cccc|}
\cline{2-5}
\multicolumn{1}{c|}{} &
  \multicolumn{1}{c|}{Follower 1} &
  \multicolumn{1}{c|}{Follower 2} &
  \multicolumn{1}{c|}{Follower 3} &
  Follower 4 \\ \customCline{1.1pt}{1}{5}
\rowcolor[HTML]{EAD1DC} 
\multicolumn{1}{|c|}{\cellcolor[HTML]{EAD1DC}Case 1} &
  \multicolumn{1}{c|}{\cellcolor[HTML]{EAD1DC}1} &
  \multicolumn{1}{c|}{\cellcolor[HTML]{EAD1DC}1} &
  \multicolumn{1}{c|}{\cellcolor[HTML]{EAD1DC}1} &
  1 \\ \hline
\rowcolor[HTML]{D9D2E9} 
\multicolumn{1}{|c|}{\cellcolor[HTML]{D9D2E9}Case 2} &
  \multicolumn{1}{c|}{\cellcolor[HTML]{D9D2E9}0.7} &
  \multicolumn{1}{c|}{\cellcolor[HTML]{D9D2E9}0.6} &
  \multicolumn{1}{c|}{\cellcolor[HTML]{D9D2E9}1} &
  0.9 \\ \hline
\rowcolor[HTML]{CFE2F3} 
\multicolumn{1}{|c|}{\cellcolor[HTML]{CFE2F3}Case 3} &
  \multicolumn{1}{c|}{\cellcolor[HTML]{CFE2F3}0.7} &
  \multicolumn{1}{c|}{\cellcolor[HTML]{CFE2F3}0.8} &
  \multicolumn{1}{c|}{\cellcolor[HTML]{CFE2F3}0.4} &
  0.5 \\ \hline
\end{tabular}
\end{adjustbox}
\label{table1}
\end{table}

4. \textbf{Physical Parameters}: Vehicles' masses ($m_{i}$), cross sectional areas ($A_{cs_{i}}$), drag coefficients ($C_{d_{i}}$), mechanical drags ($d_{m_{i}}$), and lengths ($L_{i}$) are provided in Table \ref{phpa}. Also, specific mass of air is $1.204$ $\mathrm{kg/m^{3}}$.
\begin{table}
\renewcommand{\arraystretch}{1.1}
\centering
\caption{Physical parameters.}
\begin{adjustbox}{width=0.48\textwidth}
\begin{tabular}{c|c|c|c|c|}
\cline{2-5}
\multicolumn{1}{c|}{}                                         & Follower 1 & Follower 2 & Follower 3 & Follower 4 \\ \Xhline{1.1pt}
\multicolumn{1}{|c|}{$m_{i}$ ($\mathrm{kg}$)}                    & 1900.258   & 1800.036   & 1950.98    & 2000.877   \\ \hline
\multicolumn{1}{|c|}{$A_{cs_{i}}$ ($\mathrm{m^{2}}$)} & 2.444      & 2.713      & 2.543      & 3.791      \\ \hline
\multicolumn{1}{|c|}{$C_{d_{i}}$}                        & 0.412      & 0.311      & 0.359      & 0.511      \\ \hline
\multicolumn{1}{|c|}{$d_{m_{i}}$ ($\mathrm{kg.m/s^{2}}$)} & 4.111      & 3.831      & 3.902      & 4.001      \\ \hline
\multicolumn{1}{|c|}{$L_{i}$ ($\mathrm{m}$)}                   & 4.000      & 4.000      & 4.000      & 4.000      \\ \hline
\end{tabular}
\end{adjustbox}
\label{phpa}
\end{table} 

5. \textbf{Desired and Safe Distances}: The  desired and safe distances between neighboring vehicles are set to $5$ m, and $3$ m, respectively, i.e., $d_{i-1}^{i}=5$ m and $d_{i-1,i}^{s}=3$ m; $i=1,\dots,4$.

6. \textbf{Control Gain Vectors (CGVs)}: The total CGVs involve the $k$ and $b$ parameters, which vary from $0.1$ to $20$ in increments of $0.5$. The parameter $h$ is set to $4$. Therefore, a total of $1600$ different sets of CGVs; 
$\mathbf{K}=[k,\;b,\;h]$, are utilized [see Table \ref{CGS}]. 
\begin{table}[]
\renewcommand{\arraystretch}{1.1}
\centering
\caption{Selected, in total, 1600 control gain vectors.}
\begin{adjustbox}{width=0.48\textwidth}
\begin{tabular}{c|ccc|}
 \cline{2-4} 
 & \multicolumn{3}{c|}{$\mathbf{K}= [k,\;b,\;h]$} \\ \customCline{1.1pt}{1}{4}
\multicolumn{1}{|c|}{Control Gain}                  & \multicolumn{1}{c|}{$k$}        & \multicolumn{1}{c|}{$b$}        & $h$ \\ \hline
\multicolumn{1}{|c|}{Selected Values}               & \multicolumn{1}{c|}{0.1:0.5:20} & \multicolumn{1}{c|}{0.1:0.5:20} & 4   \\ \hline
\multicolumn{1}{|c|}{Total Number of Control Gains} & \multicolumn{1}{c|}{40}         & \multicolumn{1}{c|}{40}         & 1   \\ \hline
\end{tabular}
\end{adjustbox}
\label{CGS}
\end{table}

7. \textbf{Types of CGVs}: Based on subsection \ref{safesa}, for each CGV, we can determine whether it results in an Unstable, Stable-Colliding, Stable-Unsafe, or Stable-Safe platoon. Thus, we categorize the corresponding CGVs as Unstable, Stable-Colliding, Stable-Unsafe, and Stable-Safe CGVs, respectively. The categories and their conditions are summarized in Table \ref{table-colors}. Note that we will refer to Stable-Safe and Stable-Unsafe CGVs as Non-Colliding CGVs. 
\begin{table}
\centering
\renewcommand{\arraystretch}{1.3}
\caption{Types of control gain vectors (CGVs) and relevant conditions. Note that $p_{i-1}^{i}(t)$ follows \eqref{impulese}, and colors are used later in the plot of CGVs [see Fig. \ref{homo_areas}-\ref{hete2_areas}].}
\begin{adjustbox}{width=0.48\textwidth}
\centering
\begin{tabular}{|c|c|c|}
\hline
$\mathbf{K}=[k,\;b,\;h]$               & \multicolumn{1}{c|}{Condition} & Color \\ \customCline{1.1pt}{1}{3}
Unstable        &  \text{at least one positive eigenvalue for system matrix}        & Yellow         \\ \hline
Stable-Colliding &   $\exists t$, $p_{i-1}^{i}(t)\leq-d_{i-1}^{i}$                       & Red            \\ \hline
Stable-Unsafe   &   $\exists t$, $-d_{i-1}^{i}<p_{i-1}^{i}(t)<d_{i-1,i}^{s}-d_{i-1}^{i}$    & Blue           \\ \hline
Stable-Safe     &  $\forall t$,  $p_{i-1}^{i}(t)>d_{i-1,i}^{s}-d_{i-1}^{i}$                      & Green          \\ \hline
\end{tabular}
\end{adjustbox}
\label{table-colors}
\end{table}

8. \textbf{Rigid Communication Topologies (RCTs)}: All formulations in this paper apply to any RCT, but for comparisons, we consider 10 common RCTs shown in Figs. \ref{typicalu}-\ref{typicalb}.

9. \textbf{Shared Control Gain Vectors (SCGVs)}:
\label{SCGVs}
Refer to the CGVs that fall into the same category [see Table \ref{table-colors}] across considered RCTs. For example, the green points in Fig. \ref{shared1} represent  the intersection of all Stable-Safe CGVs shown in Fig. \ref{homo_Acc1} (green points). In other words, they are CGVs that result in Stable-Safe platoons, under all considered RCTs.
\begin{remark}
In some cases [see Figs. \ref{homo_Acc2}, \ref{homo_Acc3}, \ref{hete1_Acc2}, and \ref{hete1_Acc3}], no SCGV exists between SPTF and other communications due to the absence of green points under SPTF. In such cases, SCGVs are identified without considering SPTF topology [see Figs. \ref{shared2}, \ref{shared3}, \ref{shared5}, and \ref{shared6}], and metrics for SPTF are marked as N.A. (Not Applicable, see Tables \ref{safety_table}, \ref{energy_table}, \ref{comfort_table}, and \ref{overall_table}) and corresponding PI and Average PI values (performance index, which will be discussed later), in Figs. \ref{barplots_aheadeffect}-\ref{barplots_leader_vs_ahead}, ared denoted as $\infty$ (i.e., a very large number). This indicates that SPTF topology is the worst among the considered RCTs, as no SCGV results in a Stable-Safe platoon under SPTF topology. 
\label{shcgvs}
\end{remark}
\begin{figure}[htbp!]
\begin{center}
\resizebox{0.8        \hsize}{!}{\includegraphics*{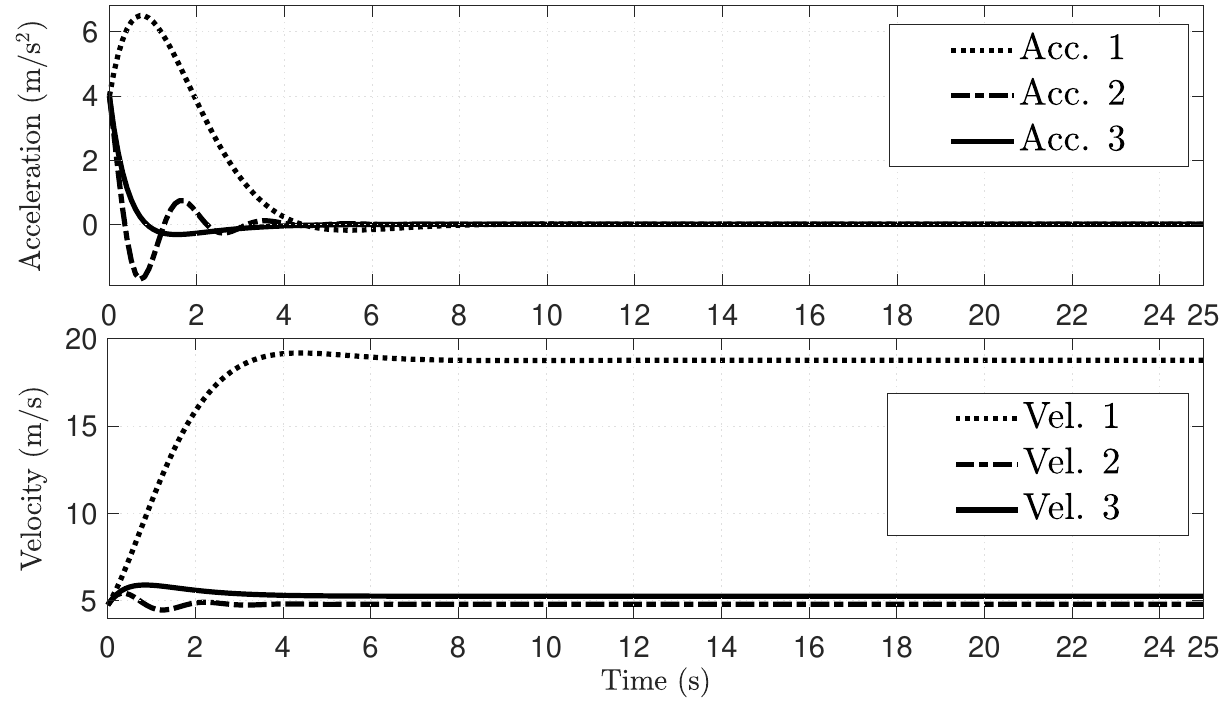}}
\caption{Leader's accelerations/velocities (see \eqref{accvel}). Note that $v_{0}(s) = (a_{0}(s) + v_{0}(0))/s$.}
\label{leader_av}
\end{center}
\end{figure} 
\begin{figure}[htbp!] 
    \centering
  \subfloat[Under Case 1 (see Table \ref{table1}) and Acc. 1 (see Fig. \ref{leader_av}).\label{homo_Acc1}]{%
       \includegraphics[width=0.8\linewidth]{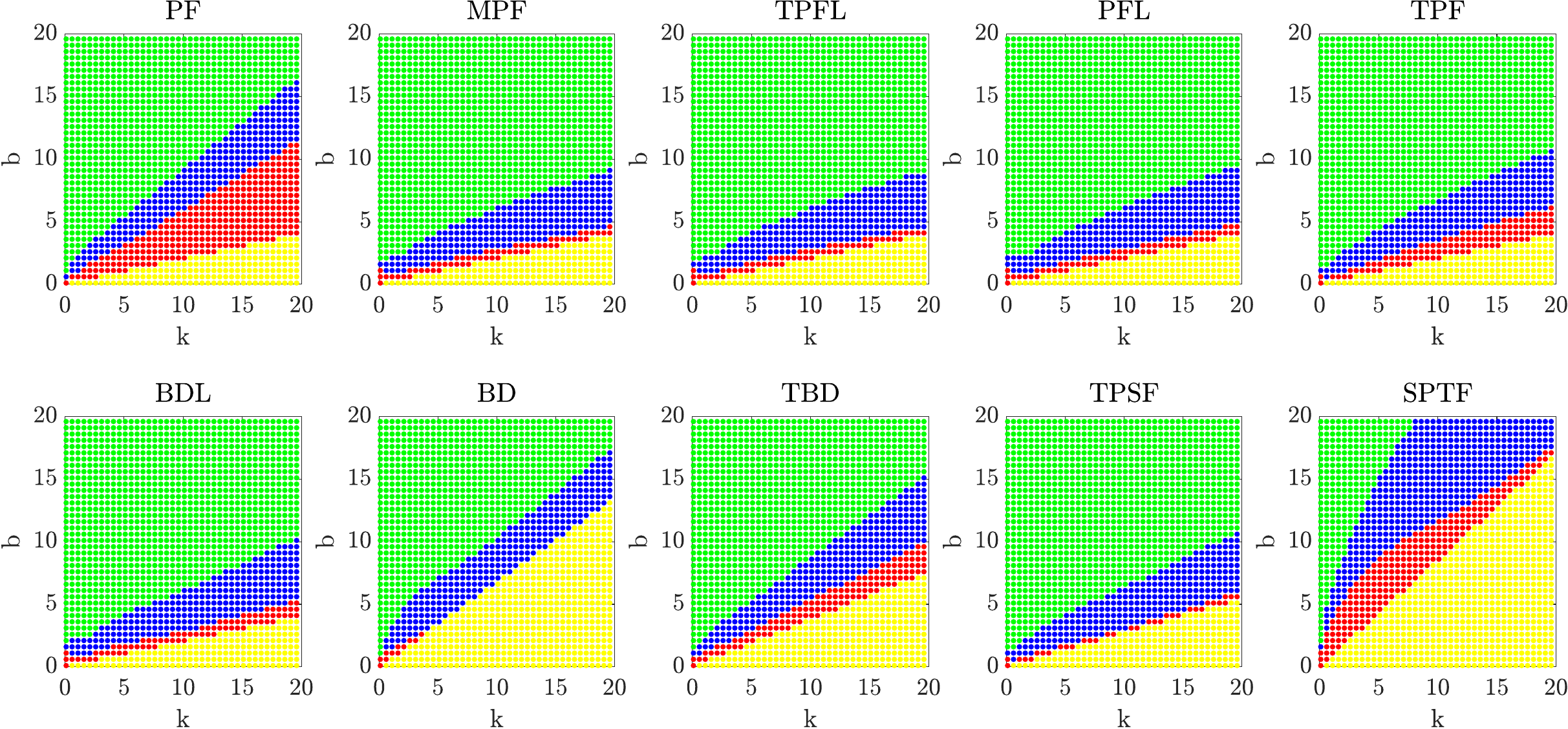}}
\vspace{0.4em}
\subfloat[Under Case 1 (see Table \ref{table1}) and Acc. 2 (see Fig. \ref{leader_av}). \label{homo_Acc2}]{%
        \includegraphics[width=0.8\linewidth]{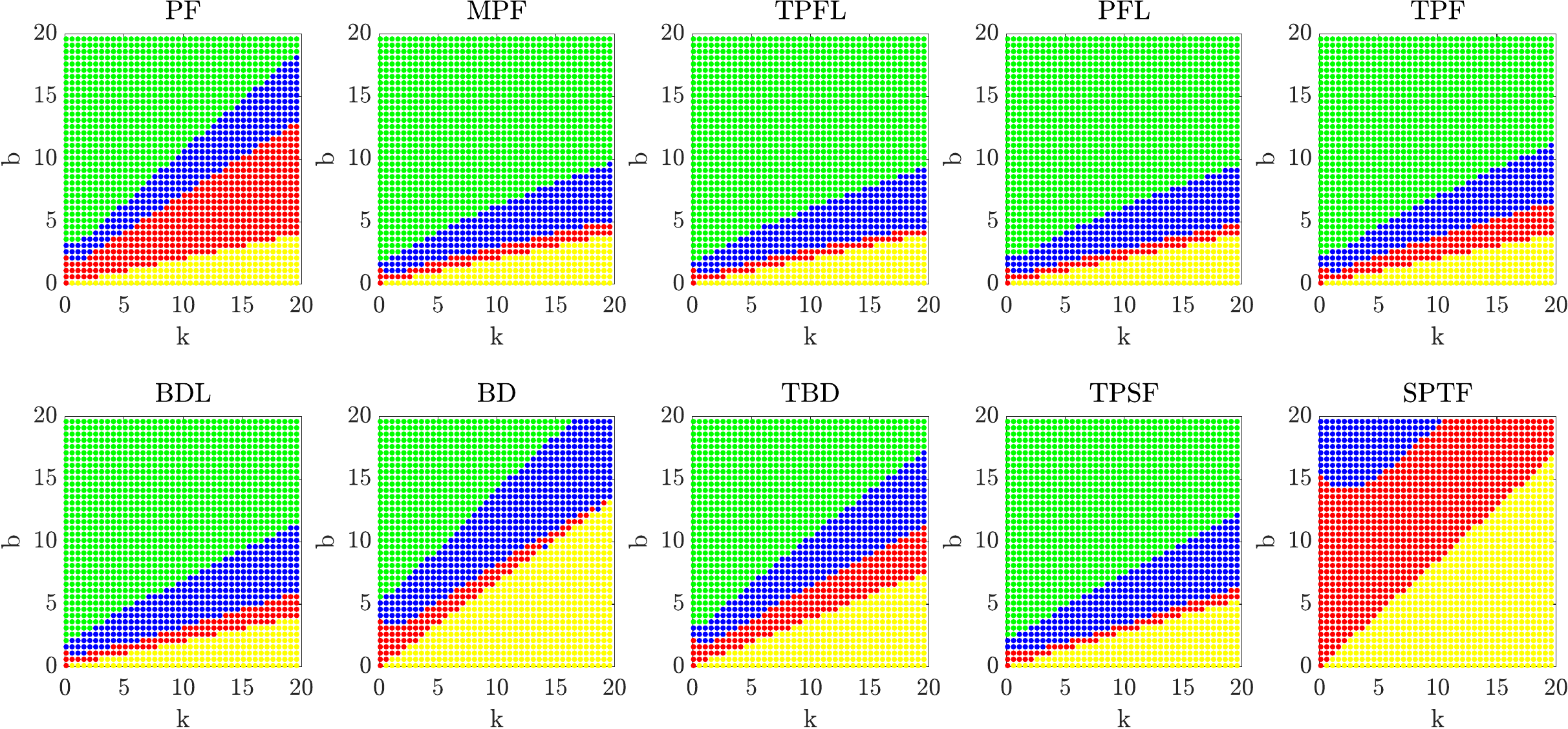}}
\vspace{0.4em}
  \subfloat[Under Case 1 (see Table \ref{table1}) and Acc. 3 (see Fig. \ref{leader_av}). \label{homo_Acc3}]{%
        \includegraphics[width=0.8\linewidth]{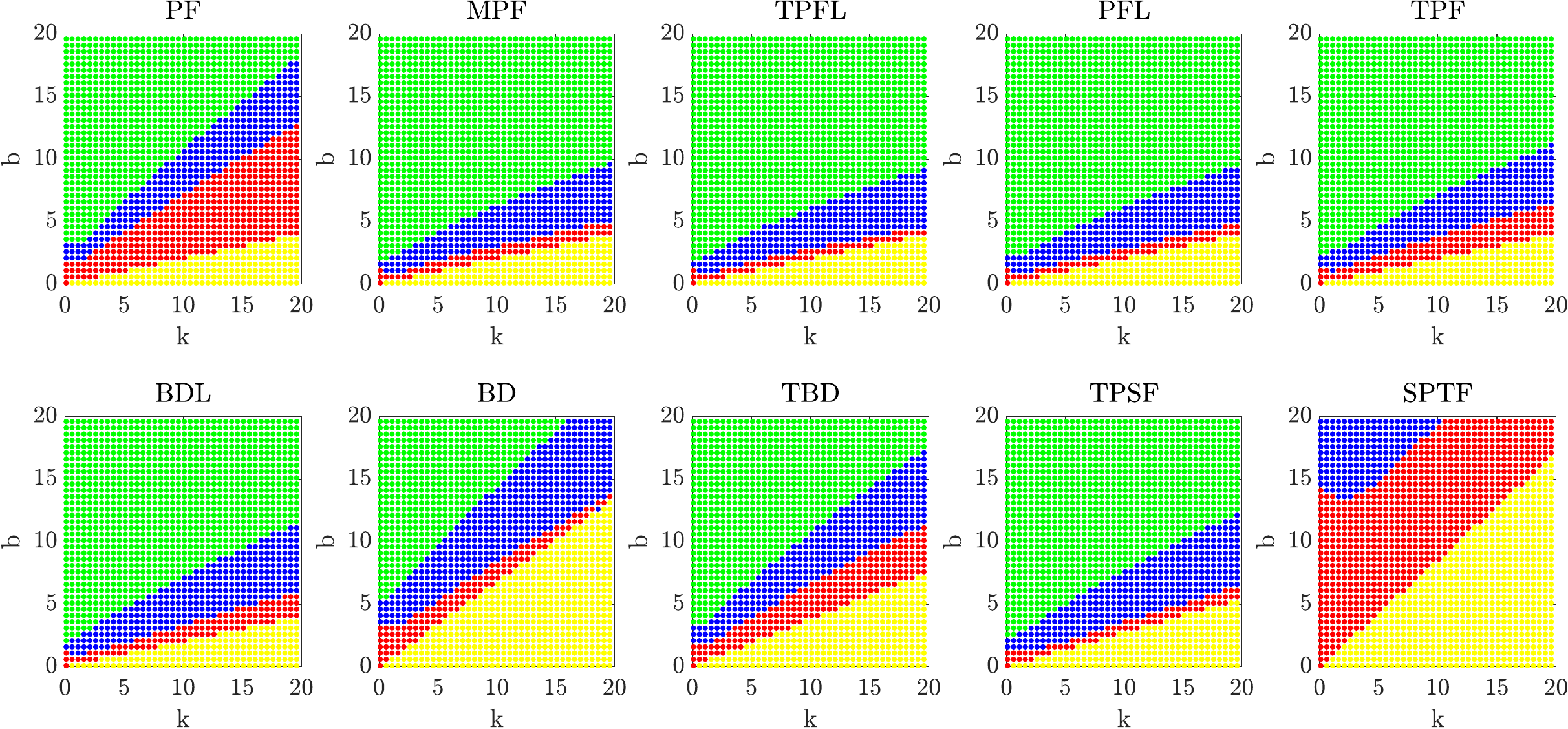}}
\vspace{0.4em}
\caption{Types of CGVs (see Table \ref{table-colors}) under Case 1 (see Table \ref{table1}) and common RCTs (see Figs. \ref{typicalu}-\ref{typicalb}). Given $\mathbf{K}=[k,\;b,\;h]$ (see Table \ref{CGS}), the X-axis and Y-axis represent control gains $k$ and $b$, respectively.}
  \label{homo_areas} 
\end{figure}
\begin{figure}[htbp!] 
    \centering
  \subfloat[Under Case 2 (see Table \ref{table1}) and Acc. 1 (see Fig. \ref{leader_av})  .\label{hete1_Acc1}]{%
       \includegraphics[width=0.8\linewidth]{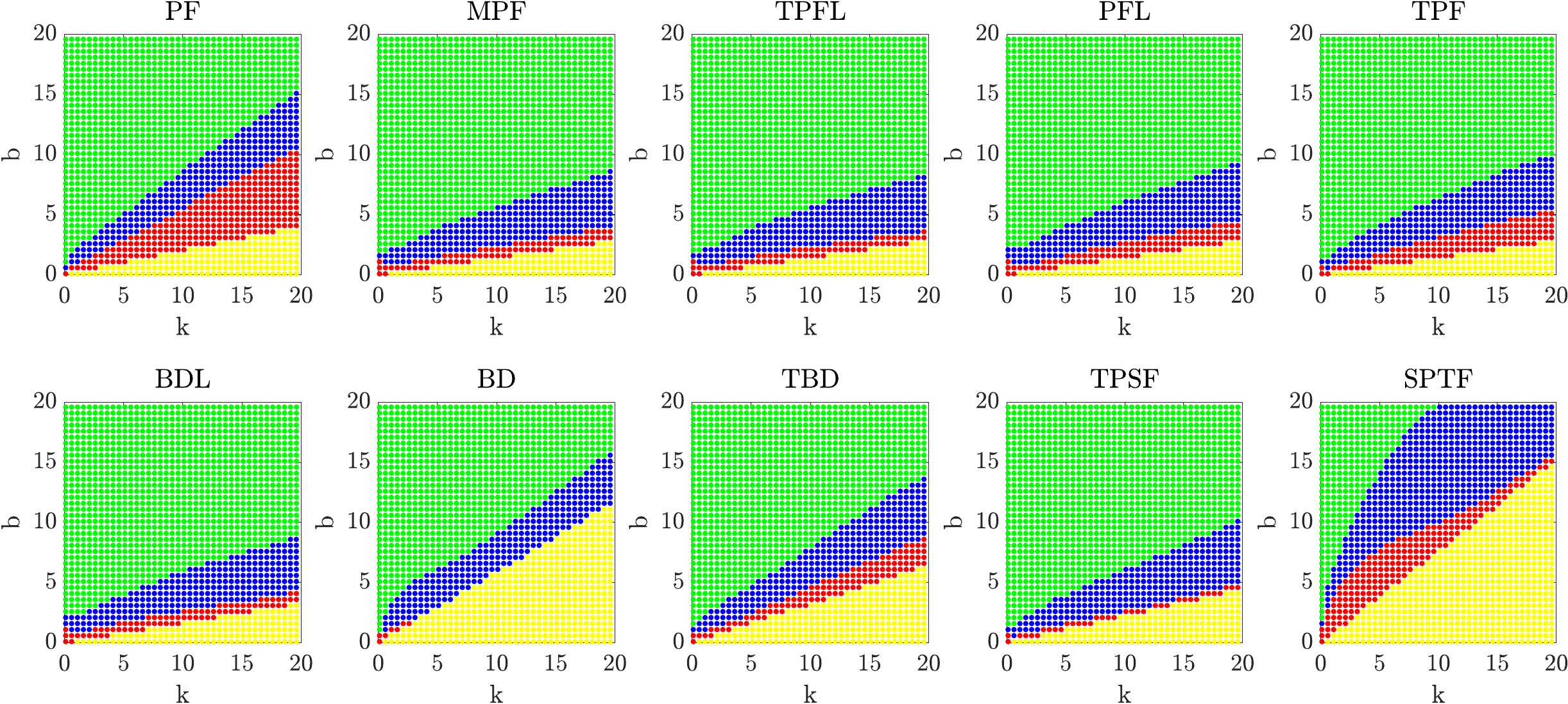}}
\vspace{0.4em}
\subfloat[Under Case 2 (see Table \ref{table1}) and  Acc. 2 (see Fig. \ref{leader_av}). \label{hete1_Acc2}]{%
        \includegraphics[width=0.8\linewidth]{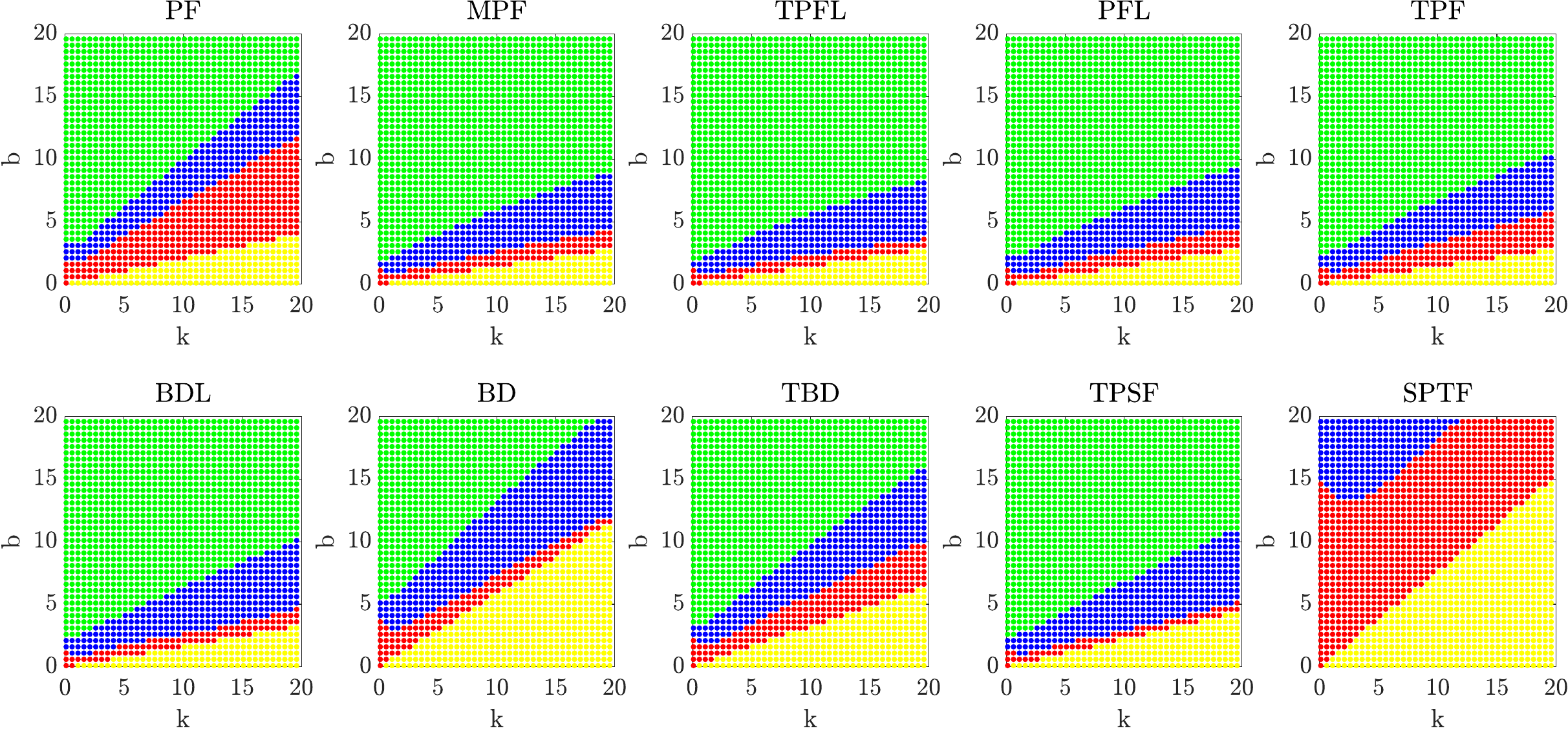}}
\vspace{0.4em}
  \subfloat[Under Case 2 (see Table \ref{table1}) and Acc. 3 (see Fig. \ref{leader_av}). \label{hete1_Acc3}]{%
        \includegraphics[width=0.8\linewidth]{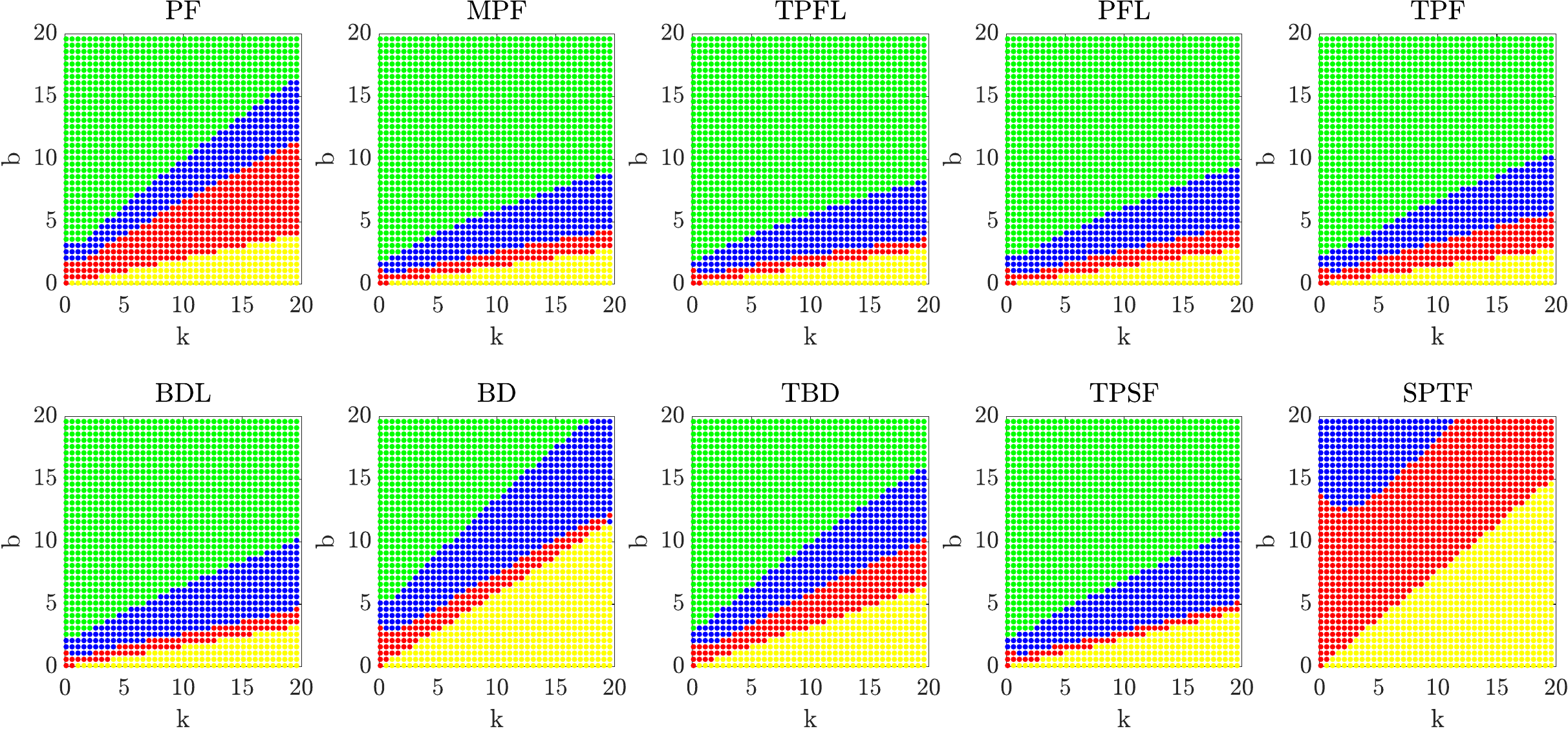}}
\vspace{0.4em}
\caption{Types of CGVs (see Table \ref{table-colors}) under Case 2 (see Table \ref{table1}) and common RCTs (see Figs. \ref{typicalu}-\ref{typicalb}). Given $\mathbf{K}=[k,\;b,\;h]$ (see Table \ref{CGS}), the X-axis and Y-axis represent control gains $k$ and $b$, respectively.}
  \label{hete1_areas} 
\end{figure}
\begin{figure}[htbp!] 
\centering
  \subfloat[Under Case 3 (see Table \ref{table1}) and Acc. 1 (see Fig. \ref{leader_av}).\label{hete2_Acc1}]{%
       \includegraphics[width=0.8\linewidth]{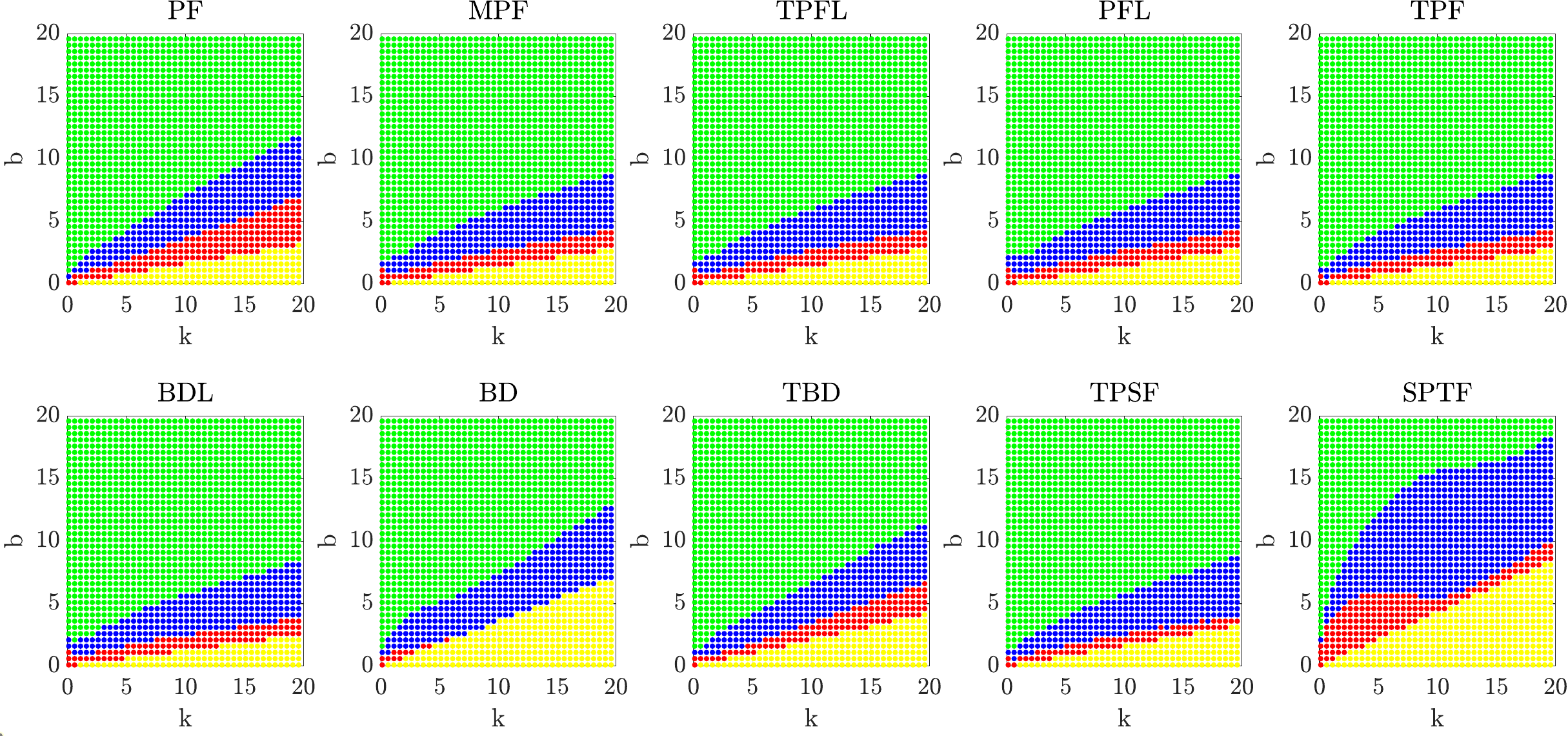}}
\vspace{0.4em}
\subfloat[Under Case 3 (see Table \ref{table1}) and Acc. 2 (see Fig. \ref{leader_av}). \label{hete2_Acc2}]{%
        \includegraphics[width=0.8\linewidth]{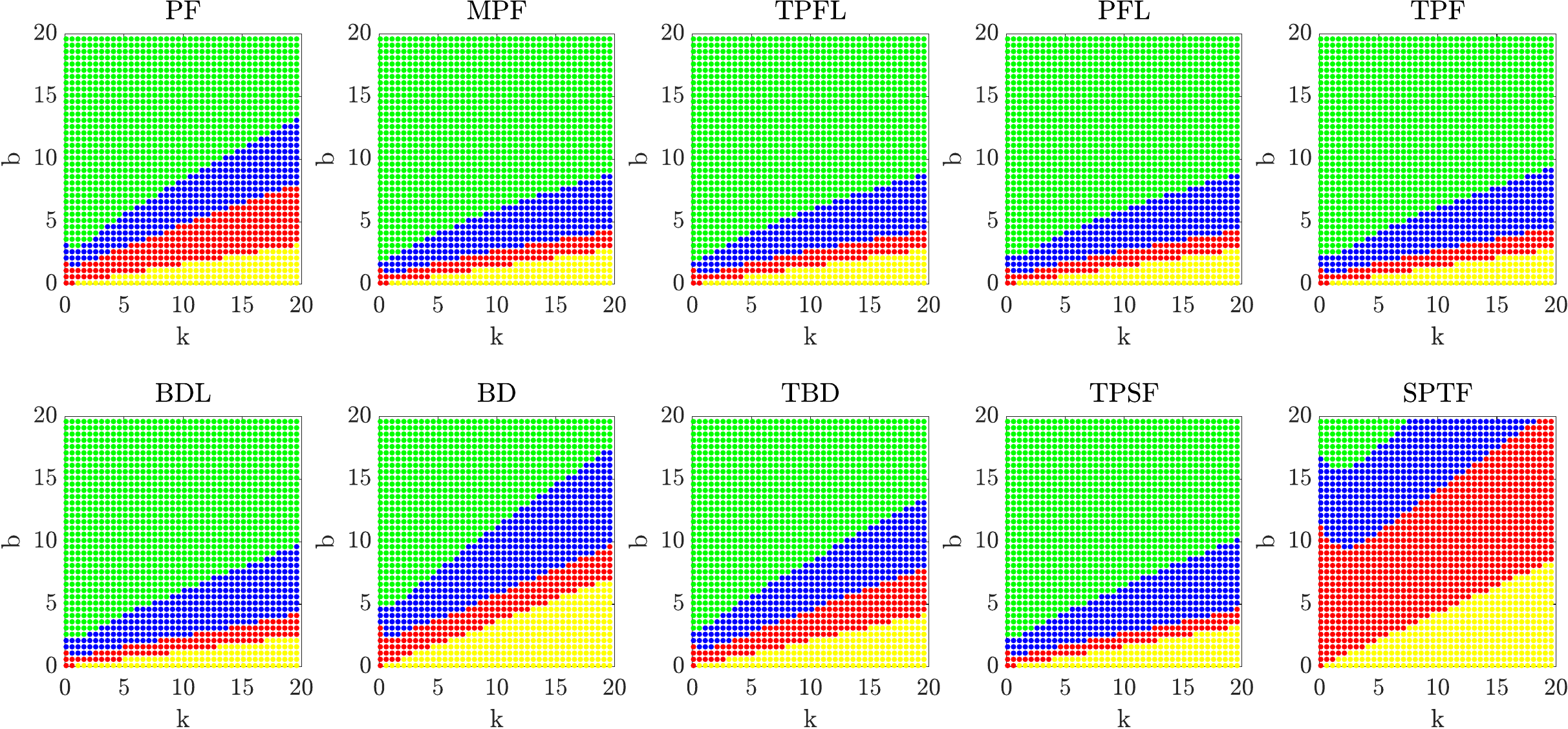}}
\vspace{0.4em}
  \subfloat[Under Case 3 (see Table \ref{table1}) and Acc. 3 (see Fig. \ref{leader_av}). \label{hete2_Acc3}]{%
        \includegraphics[width=0.8\linewidth]{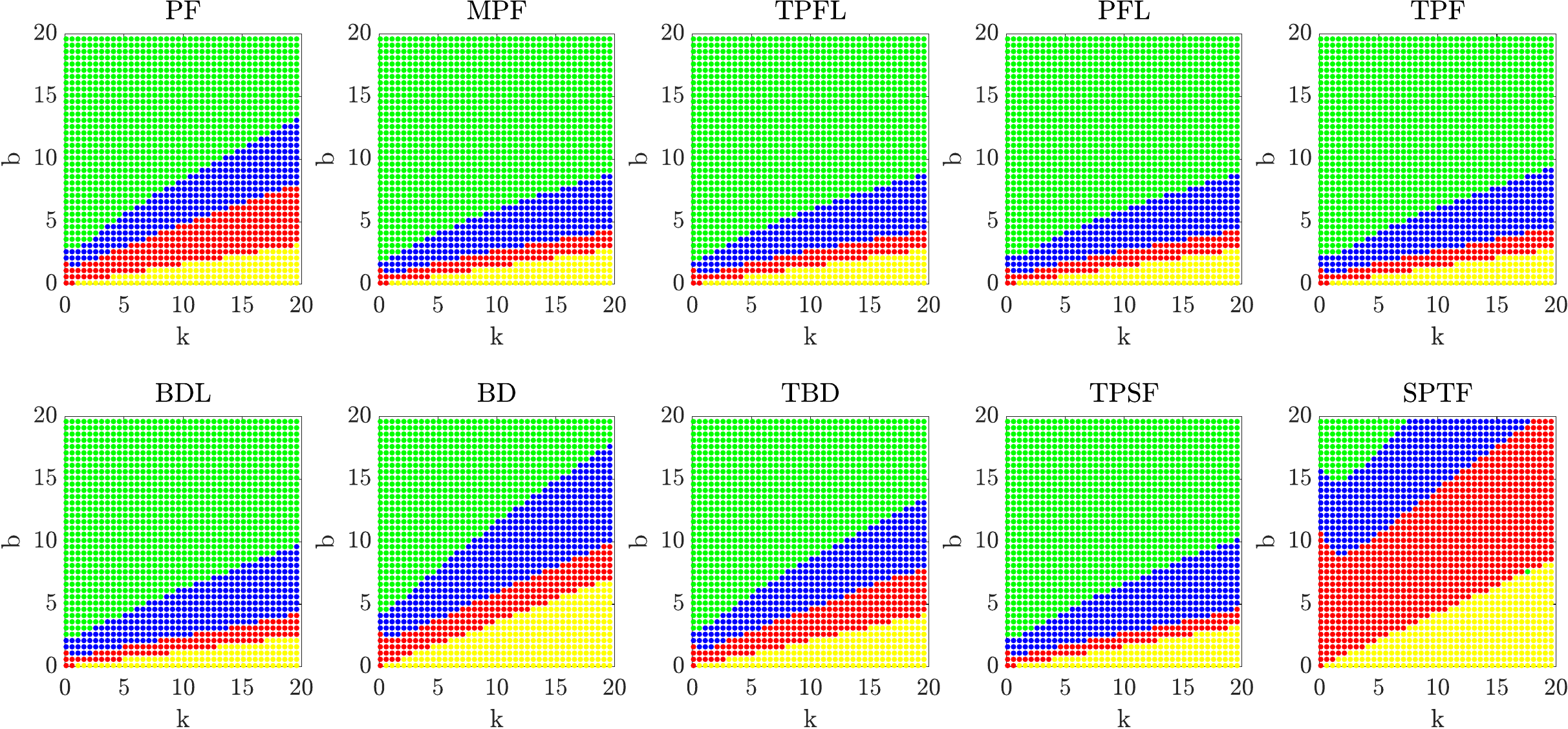}}
\vspace{0.4em}
\caption{Types of CGVs (see Table \ref{table-colors}) under Case 3 (see Table \ref{table1}) and common RCTs (see Figs. \ref{typicalu}-\ref{typicalb}). Given $\mathbf{K}=[k,\;b,\;h]$ (see Table \ref{CGS}), the X-axis and Y-axis represent control gains $k$ and $b$, respectively.}
  \label{hete2_areas} 
\end{figure}
\subsection{\textbf{Preliminaries 2 (Statistical Tools)}}
Pooled mean (PM), pooled standard deviation (PSD), and coefficient of variation (CV) are tools we use to combine data from different groups  \cite{borenstein2021introduction}. Also, we use a performance index (PI) that combines the mean and CV of metrics to integrate robustness into our performance metrics evaluation. 

1. \textbf{Pooled Mean (PM)}: Calculates the weighted average of means from different groups. This gives larger groups a proportionately greater influence.

2. \textbf{Pooled Standard Deviation (PSD)}: Combines the standard deviations of different groups into one measure. It accounts for sample sizes (weights) and variances, providing an unbiased estimate of the overall standard deviation. 

Thus, for $r$ groups with means $\mu_{i}$; $i=1,\dots,r$, standard deviations $SD_{i}$; $i=1,\dots,r$, and sample sizes $s_{i}$, $i=1,\dots,r$, PM and PSD values are calculated according to:
\begin{equation}
\text{PM} = \frac{\sum_{i=1}^{r} s_{i} \mu_{i}}{\sum_{i=1}^{r} s_{i}},
\hspace{1cm}
\text{PSD} =  \sqrt{\frac{\sum_{i=1}^{r} (s_{i}-1)SD_{i}^{2}}{\sum_{i=1}^{r} (s_{i}-1)}}
\label{pmpstd}
\end{equation}

3. \textbf{Coefficient of Variation (CV)}: It is a statistical measure representing the ratio of the SD to the mean. It provides a standardized measure of dispersion, indicating how much variability exists in relation to the average value. For the pooled values, it is calculated as $\text{CV} = \text{PSD}/\text{PM}$.
\begin{remark}
A lower CV suggests more consistency and less variability, which is a key characteristic of robustness. Conversely, a higher CV indicates greater variability and potential sensitivity to changes, implying lower robustness.   
\end{remark}
4.  \textbf{Performance Index (PI)}: To incorporate robustness into performance analysis, we combine PM and CV of each performance metric through the definition of PI
= PM+CV.
\begin{remark}
Smaller means are favorable for SaCGDI, AAPMTTC, AAMDRAC, AAMEEI, AAMEA, and AAMEJ metrics, and the PI rewards lower means. The PI also favors datasets with lower CV, indicating less variability. Thus, a lower PI value is better.
\end{remark}
\subsection{\textbf{Platoon Performance Analysis}}
Reminding  simulations setup [see section \ref{simprer}], we have three trajectories for  leader's acceleration [see \eqref{accvel}], three sets of ETCs (see Table \ref{table1}), and 1600 CGVs [see Table \ref{CGS}]. First, based on subsection \ref{safesa} and Table \ref{table-colors}, the types of CGVs for Cases 1-3 [see Table \ref{table1}], depending on leader's acceleration and the deployed RCT, are depicted in Figs. \ref{homo_areas}-\ref{hete2_areas}, respectively. Next, we need to find shared control gain vectors (SCGVs) [see \ref{SCGVs} and Remark \ref{shcgvs}] under each scenario, shown in Figs. \ref{homo_areas}-\ref{hete2_areas}. SCGVs are shown in Fig. \ref{commonCGs} where for instance, Fig. \ref{shared1} depicts SCGVs  under Case 1 and Acc. 1., which has been obtained using results in Fig. \ref{homo_Acc1}. 
\begin{figure}[htbp!]
    \centering
  \subfloat[Between all RCTs (see Fig. \ref{homo_Acc1}). \label{shared1}]{%
       \includegraphics[width=0.31\columnwidth]{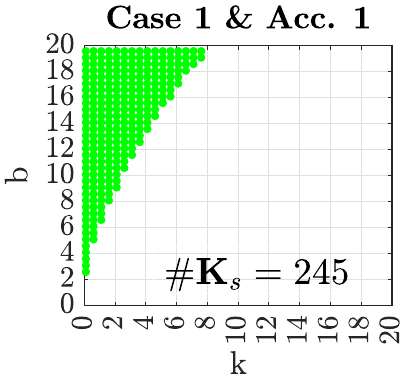}}
       \hfill
\subfloat[Between all RCTs, except SPTF (see Fig. \ref{homo_Acc2}). \label{shared2}]{%
        \includegraphics[width=0.31\columnwidth]{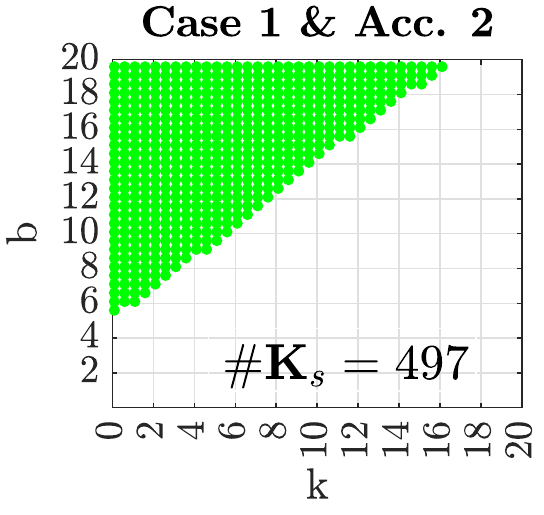}}
\hfill
  \subfloat[Between all RCTs, except SPTF (see Fig. \ref{homo_Acc3}). \label{shared3}]{%
        \includegraphics[width=0.31\columnwidth]{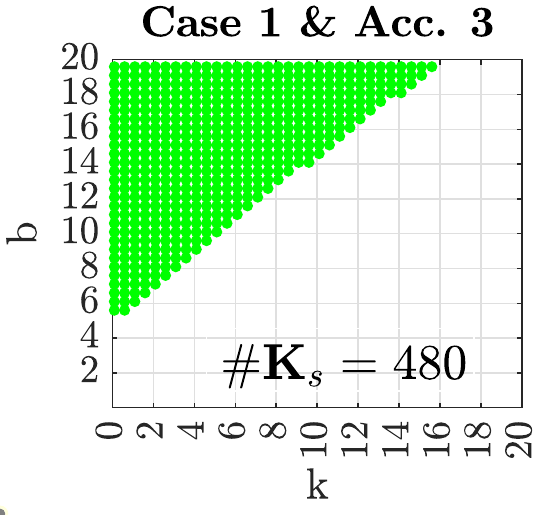}}
\hfill
  \subfloat[Between all RCTs (see Fig. \ref{hete1_Acc1}). \label{shared4}]{%
       \includegraphics[width=0.31\columnwidth]{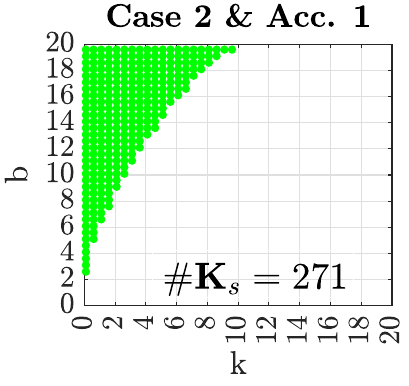}}
       \hfill
\subfloat[Between all RCTs, except SPTF (see Fig. \ref{hete1_Acc2}). \label{shared5}]{%
        \includegraphics[width=0.31\columnwidth]{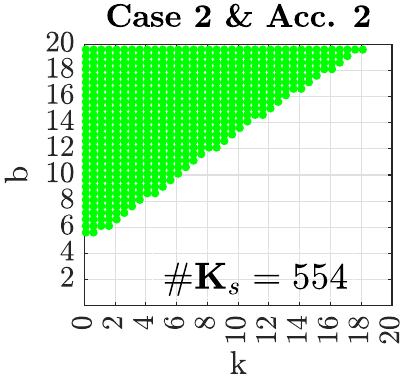}}
\hfill
  \subfloat[Between all RCTs, except SPTF (see Fig. \ref{hete1_Acc3}). \label{shared6}]{%
        \includegraphics[width=0.31\columnwidth]{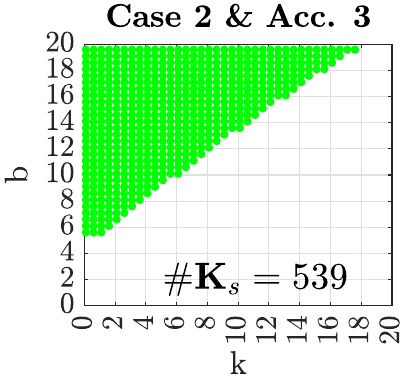}}
\hfill
  \subfloat[Between all RCTs (see Fig. \ref{hete2_Acc1}). \label{shared7}]{%
       \includegraphics[width=0.31\columnwidth]{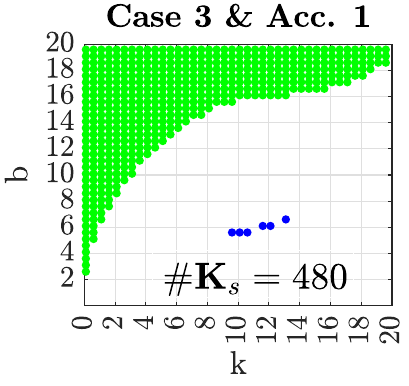}}
       \hfill
\subfloat[Between all RCTs (see Fig. \ref{hete2_Acc2}). \label{shared8}]{%
        \includegraphics[width=0.31\columnwidth]{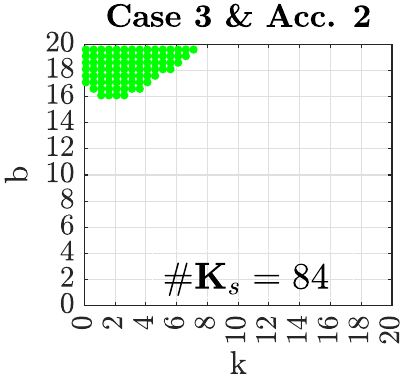}}
\hfill
  \subfloat[Between all RCTs (see Fig. \ref{hete2_Acc3}). \label{shared9}]{%
        \includegraphics[width=0.31\columnwidth]{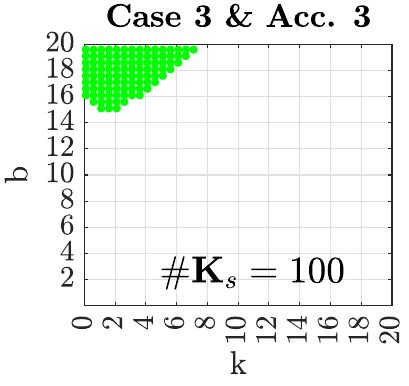}}
\caption{$\#\mathbf{K}_{s}$; $\mathbf{K}_{s}=[k,\;b,\;4]$ shows the number of shared control gain vectors (green/blue points), referred to as Weight in Tables \ref{safety_table}, \ref{energy_table}, \ref{comfort_table}, and \ref{overall_table}.}
\label{commonCGs} 
\end{figure}

Now, to highlight the differences in platoon performance under different RCTs [see Figs. \ref{typicalu}-\ref{typicalb}], we use four metrics defined earlier, and the results are as follows:

1. \textbf{Results for SaCGDI}:
\setlength{\tabcolsep}{2pt}
\begin{table*}[h]
\centering
\caption{Values of SaCGDI (see \eqref{SaCGDI}) metric under different common RCTs. The numbers corresponding to Case 1, Case 2, and Case 3 are associated with Fig. \ref{homo_areas}, Fig. \ref{hete1_areas}, and Fig. \ref{hete2_areas}, respectively. The RCTs are illustrated in Figs. \ref{typicalu}-\ref{typicalb}. Also, Acc. 1, Acc. 2, and Acc. 3; leader vehicle's accelerations, are depicted in Fig. \ref{leader_av} where their functions are provided in \eqref{accvel}. The term 'Rank (Asce.)' implies that the communications are ranked in an ascending order.}
\renewcommand{\arraystretch}{1.4}
\resizebox{1\textwidth}{!}{%
\begin{tabular}{ccc|cccccccccc|c}
\cline{4-14}
\multicolumn{1}{c}{} &
  \multicolumn{1}{c}{} &
  \multicolumn{1}{c|}{} &
  \multicolumn{10}{c|}{\textit{Rigid Communication Topologies (RCTs)}} &
  \multicolumn{1}{c|}{} \\ \cline{4-13}
\multicolumn{1}{c}{\multirow{-2}{*}{}} &
  \multicolumn{1}{c}{\multirow{-2}{*}{}} &
  \multicolumn{1}{c|}{\multirow{-2}{*}{}} &
  \multicolumn{1}{c|}{\textbf{PF}} &
  \multicolumn{1}{c|}{\textbf{MPF}} &
  \multicolumn{1}{c|}{\textbf{TPFL}} &
  \multicolumn{1}{c|}{\textbf{PFL}} &
  \multicolumn{1}{c|}{\textbf{TPF}} &
  \multicolumn{1}{c|}{\textbf{BDL}} &
  \multicolumn{1}{c|}{\textbf{BD}} &
  \multicolumn{1}{c|}{\textbf{TBPF}} &
  \multicolumn{1}{c|}{\textbf{TPSF}} &
  \textbf{SPTF} &
  \multicolumn{1}{c|}{\multirow{-2}{*}{Weight}} \\ \Xhline{2pt}
\multicolumn{1}{|c|}{{\color[HTML]{0000FF} }} &
  \multicolumn{1}{c|}{} &
  \textbf{Acc. 1} &
  \multicolumn{1}{c|}{46.062} &
  \multicolumn{1}{c|}{29.875} &
  \multicolumn{1}{c|}{29.812} &
  \multicolumn{1}{c|}{30.625} &
  \multicolumn{1}{c|}{33.188} &
  \multicolumn{1}{c|}{31.188} &
  \multicolumn{1}{c|}{51.875} &
  \multicolumn{1}{c|}{43.625} &
  \multicolumn{1}{c|}{32.063} &
  84.625 &
  \multicolumn{1}{c|}{1} \\ \cline{3-14} 
\multicolumn{1}{|c|}{{\color[HTML]{0000FF} }} &
  \multicolumn{1}{c|}{} &
  \textbf{Acc. 2} &
  \multicolumn{1}{c|}{53.500} &
  \multicolumn{1}{c|}{31.000} &
  \multicolumn{1}{c|}{30.125} &
  \multicolumn{1}{c|}{30.688} &
  \multicolumn{1}{c|}{35.125} &
  \multicolumn{1}{c|}{35.000} &
  \multicolumn{1}{c|}{68.938} &
  \multicolumn{1}{c|}{51.937} &
  \multicolumn{1}{c|}{37.250} &
  100.000 &
  \multicolumn{1}{c|}{1} \\ \cline{3-14} 
\multicolumn{1}{|c|}{{\color[HTML]{0000FF} }} &
  \multicolumn{1}{c|}{\multirow{-3}{*}{\textbf{\begin{tabular}[c]{@{}c@{}}Case 1:\\ $\mbox{\boldmath{$\tau$}}=[1,\;1,\;1,\;1]$\end{tabular}}}} &
  \textbf{Acc. 3} &
  \multicolumn{1}{c|}{53.188} &
  \multicolumn{1}{c|}{31.063} &
  \multicolumn{1}{c|}{30.063} &
  \multicolumn{1}{c|}{30.688} &
  \multicolumn{1}{c|}{35.188} &
  \multicolumn{1}{c|}{35.562} &
  \multicolumn{1}{c|}{70.000} &
  \multicolumn{1}{c|}{52.875} &
  \multicolumn{1}{c|}{38.062} &
  100.000 &
  \multicolumn{1}{c|}{1} \\ \cline{2-14} 
\multicolumn{1}{|c|}{{\color[HTML]{0000FF} }} &
  \multicolumn{1}{c|}{} &
  \textbf{Acc. 1} &
  \multicolumn{1}{c|}{43.250} &
  \multicolumn{1}{c|}{28.062} &
  \multicolumn{1}{c|}{27.688} &
  \multicolumn{1}{c|}{30.188} &
  \multicolumn{1}{c|}{31.437} &
  \multicolumn{1}{c|}{21.188} &
  \multicolumn{1}{c|}{46.500} &
  \multicolumn{1}{c|}{39.938} &
  \multicolumn{1}{c|}{30.563} &
  83.000 &
  \multicolumn{1}{c|}{1} \\ \cline{3-14} 
\multicolumn{1}{|c|}{{\color[HTML]{0000FF} }} &
  \multicolumn{1}{c|}{} &
  \textbf{Acc. 2} &
  \multicolumn{1}{c|}{49.750} &
  \multicolumn{1}{c|}{29.375} &
  \multicolumn{1}{c|}{27.750} &
  \multicolumn{1}{c|}{30.375} &
  \multicolumn{1}{c|}{33.250} &
  \multicolumn{1}{c|}{32.000} &
  \multicolumn{1}{c|}{65.375} &
  \multicolumn{1}{c|}{48.687} &
  \multicolumn{1}{c|}{34.188} &
  100.000 &
  \multicolumn{1}{c|}{1} \\ \cline{3-14} 
\multicolumn{1}{|c|}{{\color[HTML]{0000FF} }} &
  \multicolumn{1}{c|}{\multirow{-3}{*}{\textbf{\begin{tabular}[c]{@{}c@{}}Case 2:\\ $\mbox{\boldmath{$\tau$}}=[0.7,\;0.6,\;1,\;0.9]$\end{tabular}}}} &
  \textbf{Acc. 3} &
  \multicolumn{1}{c|}{49.562} &
  \multicolumn{1}{c|}{29.500} &
  \multicolumn{1}{c|}{27.812} &
  \multicolumn{1}{c|}{30.437} &
  \multicolumn{1}{c|}{33.312} &
  \multicolumn{1}{c|}{32.375} &
  \multicolumn{1}{c|}{66.312} &
  \multicolumn{1}{c|}{49.187} &
  \multicolumn{1}{c|}{34.625} &
  100.000 &
  \multicolumn{1}{c|}{1} \\ \cline{2-14} 
\multicolumn{1}{|c|}{{\color[HTML]{0000FF} }} &
  \multicolumn{1}{c|}{} &
  \textbf{Acc. 1} &
  \multicolumn{1}{c|}{35.125} &
  \multicolumn{1}{c|}{29.250} &
  \multicolumn{1}{c|}{29.250} &
  \multicolumn{1}{c|}{29.375} &
  \multicolumn{1}{c|}{29.125} &
  \multicolumn{1}{c|}{28.125} &
  \multicolumn{1}{c|}{38.500} &
  \multicolumn{1}{c|}{33.625} &
  \multicolumn{1}{c|}{28.438} &
  70.375 &
  \multicolumn{1}{c|}{1} \\ \cline{3-14} 
\multicolumn{1}{|c|}{{\color[HTML]{0000FF} }} &
  \multicolumn{1}{c|}{} &
  \textbf{Acc. 2} &
  \multicolumn{1}{c|}{41.438} &
  \multicolumn{1}{c|}{29.312} &
  \multicolumn{1}{c|}{29.312} &
  \multicolumn{1}{c|}{29.438} &
  \multicolumn{1}{c|}{30.312} &
  \multicolumn{1}{c|}{30.688} &
  \multicolumn{1}{c|}{56.000} &
  \multicolumn{1}{c|}{42.250} &
  \multicolumn{1}{c|}{32.375} &
  94.750 &
  \multicolumn{1}{c|}{1} \\ \cline{3-14} 
\multicolumn{1}{|c|}{\multirow{-9}{*}{\begin{tabular}[c]{@{}c@{}}\rotatebox[origin=c]{90}{\textbf{\color{bazaar}SaCGDI}} \end{tabular}}}  &
  \multicolumn{1}{c|}{\multirow{-3}{*}{\textbf{\begin{tabular}[c]{@{}c@{}}Case 3:\\ $\mbox{\boldmath{$\tau$}}=[0.7,\;0.8,\;0.4,\;0.5]$\end{tabular}}}} &
  \textbf{Acc. 3} &
  \multicolumn{1}{c|}{41.438} &
  \multicolumn{1}{c|}{29.312} &
  \multicolumn{1}{c|}{29.312} &
  \multicolumn{1}{c|}{29.438} &
  \multicolumn{1}{c|}{30.312} &
  \multicolumn{1}{c|}{31.250} &
  \multicolumn{1}{c|}{57.000} &
  \multicolumn{1}{c|}{42.562} &
  \multicolumn{1}{c|}{32.812} &
  92.688 &
  \multicolumn{1}{c|}{1} \\ \hline
\multicolumn{3}{|c|}{\textit{\begin{tabular}[c]{@{}c@{}}(\textbf{\color{magenta}PM}±\textbf{\color{teal}SD}) of \textbf{\color{bazaar}SaCGDI}\end{tabular}}} &
  \multicolumn{1}{c|}{45.906±6.569} &
  \multicolumn{1}{c|}{29.609±0.987} &
  \multicolumn{1}{c|}{28.914±1.020} &
  \multicolumn{1}{c|}{30.078±0.572} &
  \multicolumn{1}{c|}{32.258±2.298} &
  \multicolumn{1}{c|}{31.774±2.581} &
  \multicolumn{1}{c|}{58.578±11.351} &
  \multicolumn{1}{c|}{45.133±6.651} &
  \multicolumn{1}{c|}{33.539±3.220} &
  92.727±10.763 &
  \multicolumn{1}{c|}{9}\\\hline
  \multicolumn{3}{|c|}{\textit{\begin{tabular}[c]{@{}c@{}}\textbf{\color{violet}CV}\textbf{= {\color{teal}SD}/{\color{magenta}PM}}\end{tabular}}} &
  \multicolumn{1}{c|}{0.143} &
  \multicolumn{1}{c|}{0.033} &
  \multicolumn{1}{c|}{0.035} &
  \multicolumn{1}{c|}{0.019} &
  \multicolumn{1}{c|}{0.071} &
  \multicolumn{1}{c|}{0.081} &
  \multicolumn{1}{c|}{0.194} &
  \multicolumn{1}{c|}{0.147} &
  \multicolumn{1}{c|}{0.096} &
  0.116 &
  \multicolumn{1}{c}{} \\ \cline{1-13}
\multicolumn{3}{|c|}{\textit{\textbf{\color{orange}PI} = \textbf{\color{magenta}PM}+\textbf{\color{violet}CV}}} &
  \multicolumn{1}{c|}{46.049} &
  \multicolumn{1}{c|}{29.642} &
  \multicolumn{1}{c|}{28.949} &
  \multicolumn{1}{c|}{30.097} &
  \multicolumn{1}{c|}{32.329} &
  \multicolumn{1}{c|}{31.855} &
  \multicolumn{1}{c|}{58.772} &
  \multicolumn{1}{c|}{45.280} &
  \multicolumn{1}{c|}{33.635} &
  92.843 &
  \multicolumn{1}{c}{}
 \\ \customCline{1.7pt}{1}{13}
\multicolumn{3}{|c|}{{\color[HTML]{FF0000} \begin{tabular}{c}
\textit{\textbf{\underline{Joint Robustness \& SaCGDI Rank}}}  \\ {\color{black}\textit{based on \textbf{\color{orange}PI}  (Asce.)}} 
\end{tabular}}}
&
  \multicolumn{1}{c|}{8} &
  \multicolumn{1}{c|}{2} &
  \multicolumn{1}{c|}{1} &
  \multicolumn{1}{c|}{3} &
  \multicolumn{1}{c|}{5} &
  \multicolumn{1}{c|}{4} &
  \multicolumn{1}{c|}{9} &
  \multicolumn{1}{c|}{7} &
  \multicolumn{1}{c|}{6} &
  10 &
  \multicolumn{1}{c}{} \\ \cline{1-13}
\end{tabular}
}
\label{area_percentage}
\end{table*}
Using the CGVs in Figs. \ref{homo_areas}-\ref{hete2_areas}, the SaCGDI results are shown in Table \ref{area_percentage}. For example, under Case 1 with Acc. 1 and TPFL topology, only 29.812\% of CGVs lead to Unsafe distances, compared to 84.625\% under SPTF. The Performance Index (PI) for SaCGDI in Table \ref{area_percentage} ranks the RCTs, with TPFL, MPF, PFL, BDL, and TPF having smaller PIs, indicating more Stable-Safe CGVs. SPTF, BD, PF, TBPF, and TPSF have larger PIs, indicating fewer Stable-Safe CGVs. The SaCGDI distribution for each communication is visualized in Fig. \ref{boxplot_StCGDI}. TPFL, MPF, and PFL show more robust performance, while SPTF, BD, and PF are more sensitive.
\begin{figure}[htbp!]
\begin{center}
\resizebox{1\hsize}{!}{\includegraphics*{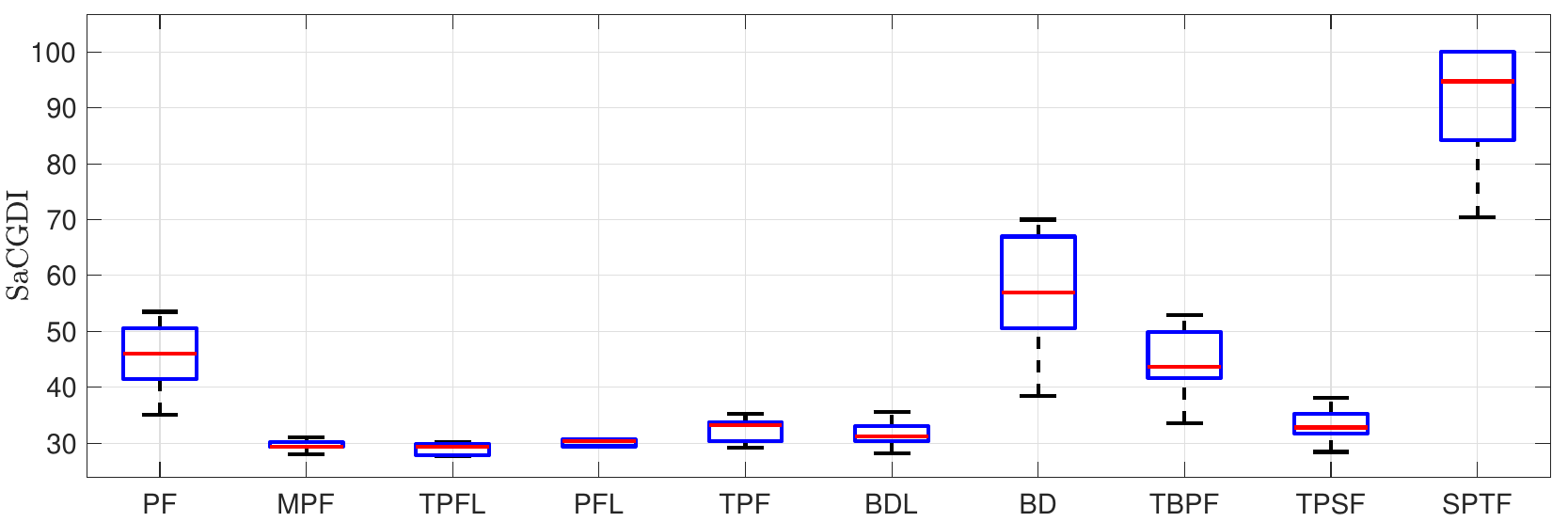}}
\caption{SaCGDI values across different RCTs is shown. Each box includes a median (red line), 25th and 75th percentiles (bottom and top edges), and whiskers extending to the extreme data points. Nine values per communication topology are detailed in Table \ref{area_percentage}.}
\label{boxplot_StCGDI}
\end{center}
\end{figure}

2. \textbf{Results for  Intervehicle Distance Safety}: To analyse this criterion, we have defined AAPMTTC [see \eqref{AAPMTTC}] and AAMDRAC [see \eqref{AADRAC}] metrics. Using SCGVs [see \ref{SCGVs} and Fig. \ref{commonCGs}], the results for AAPMTTC and AAMDRAC metrics across different scenarios are summarized in Table \ref{safety_table}. The `Weight' column in Table \ref{safety_table} shows the number of SCGVs (i.e., $\#(\mathbf{K}_{s}$)) for each scenario [see Fig. \ref{commonCGs}]. For example, under Case 3, Acc. 3, and the TBPF topology, the AAPMTTC and AAMDRAC values,  over the whole travel time, are $20.131\pm 17.741$ and $20.431\pm 10.462$, respectively. Note that the travel time is 25 s [see Fig. \ref{leader_av}], and thus AAPMTTC and AAMDRAC metric are considered up to $t=25$ s.

In Table \ref{safety_table}, the Performance Index (PI) for the AAPMTTC and AAMDRAC metrics is calculated, and the Average PI (API) is determined for the Intervehicle Distance Safety metric. Based on the API, the rank of each RCT is given in the last row of Table \ref{safety_table}. This ranking compares the performance of the platoon under various RCTs in terms of both robustness and safety. We observe that TPFL, BDL, MPF, PFL, and TPF are the top 5 topologies providing greater safety for the platoon, while SPTF, BD, PF, TBPF, and TPSF are the top 5 topologies providing less safety. As an example, Fig. \ref{metrics_plt_ncumsum} shows the APMTTC and AMDRAC trajectories under Case 1 with Acc. 1. Also, Fig. \ref{metrics_plt_platoon} shows the corresponding AAPMTTC and AAMDRAC trajectories, which their values at $t=25$ s indicate AAPMTTC and  AAMDRAC values over the whole travel time.
\setlength{\tabcolsep}{5pt}
\begin{table*}[]
\centering
\caption{Values of AAPMTTC (see \eqref{AAPMTTC}) and AAMDRAC (see \eqref{AADRAC}) metrics for different scenarios under SCGVs (see \ref{SCGVs} and Fig. \ref{commonCGs}) and different common RCTs. The RCTs are illustrated in Figs. \ref{typicalu}-\ref{typicalb}. Also, Acc. 1, Acc. 2, and Acc. 3; leader vehicle's accelerations, are depicted in Fig. \ref{leader_av} where their functions are provided in \eqref{accvel}. The term 'Rank (Asce.)' implies that the communications are ranked, based on API, in an ascending order. The number of SCGVs; $\#\mathbf{K}_{s}$ (see Fig. \ref{commonCGs}), are referred to as Weight in the table.}
\renewcommand{\arraystretch}{1.4}
\resizebox{1\textwidth}{!}{%
\begin{tabular}{ccc|cccccccccc|c}
\cline{4-14}
\multicolumn{1}{c}{} &
  \multicolumn{1}{c}{} &
  \multicolumn{1}{c|}{} &
  \multicolumn{10}{c|}{\textit{Rigid Communication Topologies (RCTs)}} &
  \multicolumn{1}{c|}{} \\ \cline{4-13}
\multicolumn{1}{c}{\multirow{-2}{*}{}} &
  \multicolumn{1}{c}{\multirow{-2}{*}{}} &
  \multicolumn{1}{c|}{\multirow{-2}{*}{}} &
  \multicolumn{1}{c|}{\textbf{PF}} &
  \multicolumn{1}{c|}{\textbf{MPF}} &
  \multicolumn{1}{c|}{\textbf{TPFL}} &
  \multicolumn{1}{c|}{\textbf{PFL}} &
  \multicolumn{1}{c|}{\textbf{TPF}} &
  \multicolumn{1}{c|}{\textbf{BDL}} &
  \multicolumn{1}{c|}{\textbf{BD}} &
  \multicolumn{1}{c|}{\textbf{TBPF}} &
  \multicolumn{1}{c|}{\textbf{TPSF}} &
  \textbf{SPTF} &
  \multicolumn{1}{c|}{\multirow{-2}{*}{Weight}} \\ \Xhline{2pt}
\multicolumn{1}{|c|}{} &
  \multicolumn{1}{c|}{} &
  \textbf{Acc. 1} &
  \multicolumn{1}{c|}{99.112±25.886} &
  \multicolumn{1}{c|}{50.672±38.287} &
  \multicolumn{1}{c|}{44.870±34.060} &
  \multicolumn{1}{c|}{44.835±32.290} &
  \multicolumn{1}{c|}{63.268±26.616} &
  \multicolumn{1}{c|}{43.255±31.720} &
  \multicolumn{1}{c|}{150.804±26.992} &
  \multicolumn{1}{c|}{70.443±27.817} &
  \multicolumn{1}{c|}{65.580±22.754} &
  181.180±105.008 &
  \multicolumn{1}{c|}{245} \\ \cline{3-14} 
\multicolumn{1}{|c|}{} &
  \multicolumn{1}{c|}{} &
  \textbf{Acc. 2} &
  \multicolumn{1}{c|}{114.226±53.891} &
  \multicolumn{1}{c|}{43.869±25.984} &
  \multicolumn{1}{c|}{42.680±25.700} &
  \multicolumn{1}{c|}{43.500±24.573} &
  \multicolumn{1}{c|}{53.276±17.812} &
  \multicolumn{1}{c|}{44.095±25.220} &
  \multicolumn{1}{c|}{179.687±121.986} &
  \multicolumn{1}{c|}{65.041±34.750} &
  \multicolumn{1}{c|}{59.212±17.822} &
  N.A. &
  \multicolumn{1}{c|}{497} \\ \cline{3-14} 
\multicolumn{1}{|c|}{} &
  \multicolumn{1}{c|}{\multirow{-3}{*}{\textbf{\begin{tabular}[c]{@{}c@{}}Case 1:\\ $\mbox{\boldmath{$\tau$}}=[1,\;1,\;1,\;1]$\end{tabular}}}} &
  \textbf{Acc. 3} &
  \multicolumn{1}{c|}{64.561±48.802} &
  \multicolumn{1}{c|}{32.119±26.903} &
  \multicolumn{1}{c|}{32.707±27.541} &
  \multicolumn{1}{c|}{33.468±26.710} &
  \multicolumn{1}{c|}{35.840±26.575} &
  \multicolumn{1}{c|}{34.658±25.483} &
  \multicolumn{1}{c|}{165.730±129.942} &
  \multicolumn{1}{c|}{52.633±40.286} &
  \multicolumn{1}{c|}{41.048±26.001} &
  N.A. &
  \multicolumn{1}{c|}{480} \\ \cline{2-14} 
\multicolumn{1}{|c|}{} &
  \multicolumn{1}{c|}{} &
  \textbf{Acc. 1} &
  \multicolumn{1}{c|}{96.857±35.157} &
  \multicolumn{1}{c|}{50.536±34.749} &
  \multicolumn{1}{c|}{45.960±31.689} &
  \multicolumn{1}{c|}{48.638±29.914} &
  \multicolumn{1}{c|}{63.402±31.154} &
  \multicolumn{1}{c|}{45.315±30.700} &
  \multicolumn{1}{c|}{154.096±31.163} &
  \multicolumn{1}{c|}{71.164±32.255} &
  \multicolumn{1}{c|}{65.112±28.853} &
  210.809±113.764 &
  \multicolumn{1}{c|}{271} \\ \cline{3-14} 
\multicolumn{1}{|c|}{} &
  \multicolumn{1}{c|}{} &
  \textbf{Acc. 2} &
  \multicolumn{1}{c|}{88.666±50.570} &
  \multicolumn{1}{c|}{38.917±23.088} &
  \multicolumn{1}{c|}{38.878±23.696} &
  \multicolumn{1}{c|}{40.702±21.871} &
  \multicolumn{1}{c|}{46.122±18.917} &
  \multicolumn{1}{c|}{40.882±21.417} &
  \multicolumn{1}{c|}{167.251±120.117} &
  \multicolumn{1}{c|}{61.871±31.511} &
  \multicolumn{1}{c|}{52.356±18.021} &
  N.A. &
  \multicolumn{1}{c|}{554} \\ \cline{3-14} 
\multicolumn{1}{|c|}{} &
  \multicolumn{1}{c|}{\multirow{-3}{*}{\textbf{\begin{tabular}[c]{@{}c@{}}Case 2:\\ $\mbox{\boldmath{$\tau$}}=[0.7,\;0.6,\;1,\;0.9]$\end{tabular}}}} &
  \textbf{Acc. 3} &
  \multicolumn{1}{c|}{55.155±41.946} &
  \multicolumn{1}{c|}{32.316±26.889} &
  \multicolumn{1}{c|}{32.737±27.305} &
  \multicolumn{1}{c|}{34.507±26.296} &
  \multicolumn{1}{c|}{35.739±26.275} &
  \multicolumn{1}{c|}{35.558±25.389} &
  \multicolumn{1}{c|}{152.061±129.453} &
  \multicolumn{1}{c|}{50.465±37.775} &
  \multicolumn{1}{c|}{39.939±25.803} &
  N.A. &
  \multicolumn{1}{c|}{539} \\ \cline{2-14} 
\multicolumn{1}{|c|}{} &
  \multicolumn{1}{c|}{} &
  \textbf{Acc. 1} &
  \multicolumn{1}{c|}{72.993±58.355} &
  \multicolumn{1}{c|}{45.163±42.270} &
  \multicolumn{1}{c|}{41.891±40.639} &
  \multicolumn{1}{c|}{41.772±41.365} &
  \multicolumn{1}{c|}{45.601±42.661} &
  \multicolumn{1}{c|}{41.200±40.901} &
  \multicolumn{1}{c|}{117.853±90.579} &
  \multicolumn{1}{c|}{55.062±56.007} &
  \multicolumn{1}{c|}{49.824±46.108} &
  176.053±152.719 &
  \multicolumn{1}{c|}{480} \\ \cline{3-14} 
\multicolumn{1}{|c|}{} &
  \multicolumn{1}{c|}{} &
  \textbf{Acc. 2} &
  \multicolumn{1}{c|}{57.332±22.919} &
  \multicolumn{1}{c|}{28.950±18.136} &
  \multicolumn{1}{c|}{28.256±18.370} &
  \multicolumn{1}{c|}{29.601±17.272} &
  \multicolumn{1}{c|}{29.658±15.896} &
  \multicolumn{1}{c|}{27.897±18.121} &
  \multicolumn{1}{c|}{45.915±20.626} &
  \multicolumn{1}{c|}{36.227±18.254} &
  \multicolumn{1}{c|}{32.185±15.016} &
  55.633±44.406 &
  \multicolumn{1}{c|}{84} \\ \cline{3-14} 
\multicolumn{1}{|c|}{\multirow{-9}{*}{\begin{tabular}[c]{@{}c@{}}\rotatebox[origin=c]{90}{$\mathbf{\color{blue}AAPMTTC(25)}$} \end{tabular}}} &
  \multicolumn{1}{c|}{\multirow{-3}{*}{\textbf{\begin{tabular}[c]{@{}c@{}}Case 3:\\ $\mbox{\boldmath{$\tau$}}=[0.7,\;0.8,\;0.4,\;0.5]$\end{tabular}}}} &
  \textbf{Acc. 3} &
  \multicolumn{1}{c|}{27.297±21.902} &
  \multicolumn{1}{c|}{18.660±18.724} &
  \multicolumn{1}{c|}{19.575±20.253} &
  \multicolumn{1}{c|}{19.213±19.861} &
  \multicolumn{1}{c|}{18.750±18.557} &
  \multicolumn{1}{c|}{19.218±20.053} &
  \multicolumn{1}{c|}{23.772±21.566} &
  \multicolumn{1}{c|}{20.131±17.741} &
  \multicolumn{1}{c|}{18.707±17.439} &
  50.140±47.877 &
  \multicolumn{1}{c|}{100} \\ \hline
\multicolumn{3}{|c|}{\textit{\begin{tabular}[c]{@{}c@{}}\textit{\textbf{{(\color{magenta}PM}±{\color{teal}PSD})}} \textit{of \textbf{\color{blue}AAPMMTTC}} \end{tabular}}} &
  \multicolumn{1}{c|}{79.914±47.007} &
  \multicolumn{1}{c|}{39.472±30.256} &
  \multicolumn{1}{c|}{38.148±29.479} &
  \multicolumn{1}{c|}{39.217±28.633} &
  \multicolumn{1}{c|}{45.364±27.363} &
  \multicolumn{1}{c|}{39.164±28.298} &
  \multicolumn{1}{c|}{149.226±106.723} &
  \multicolumn{1}{c|}{57.568±38.179} &
  \multicolumn{1}{c|}{49.805±27.424} &
  N.A. &
  \multicolumn{1}{c|}{3,250}\\
  \hline
\multicolumn{3}{|c|}{\textit{\begin{tabular}[c]{@{}c@{}}\textbf{\color{violet}CV}\textbf{= {\color{teal}PSD}/{\color{magenta}PM}}\end{tabular}}} &
  \multicolumn{1}{c|}{0.588} &
  \multicolumn{1}{c|}{0.767} &
  \multicolumn{1}{c|}{0.773} &
  \multicolumn{1}{c|}{0.730} &
  \multicolumn{1}{c|}{0.603} &
  \multicolumn{1}{c|}{0.723} &
  \multicolumn{1}{c|}{0.715} &
  \multicolumn{1}{c|}{0.663} &
  \multicolumn{1}{c|}{0.551} &
  N.A. &
  \multicolumn{1}{c}{} \\ \cline{1-13}
\multicolumn{3}{|c|}{\textit{\textbf{\color{orange}PI} = \textbf{\color{magenta}PM}+\textbf{\color{violet}CV}}} &
  \multicolumn{1}{c|}{80.502} &
  \multicolumn{1}{c|}{40.239} &
  \multicolumn{1}{c|}{38.921} &
  \multicolumn{1}{c|}{39.947} &
  \multicolumn{1}{c|}{45.967} &
  \multicolumn{1}{c|}{39.887} &
  \multicolumn{1}{c|}{149.941} &
  \multicolumn{1}{c|}{58.231} &
  \multicolumn{1}{c|}{50.356} &
  N.A. &
  \multicolumn{1}{c}{} 
  \\ \Xhline{2pt}
\multicolumn{1}{|c|}{} &
  \multicolumn{1}{c|}{} &
  \textbf{Acc. 1} &
  \multicolumn{1}{c|}{67.343±43.617} &
  \multicolumn{1}{c|}{37.114±22.866} &
  \multicolumn{1}{c|}{36.269±22.187} &
  \multicolumn{1}{c|}{36.791±22.003} &
  \multicolumn{1}{c|}{39.749±22.208} &
  \multicolumn{1}{c|}{36.423±22.198} &
  \multicolumn{1}{c|}{72.968±31.233} &
  \multicolumn{1}{c|}{47.218±31.591} &
  \multicolumn{1}{c|}{40.685±22.355} &
  107.670±96.557 &
  \multicolumn{1}{c|}{245} \\ \cline{3-14} 
\multicolumn{1}{|c|}{} &
  \multicolumn{1}{c|}{} &
  \textbf{Acc. 2} &
  \multicolumn{1}{c|}{82.561±78.474} &
  \multicolumn{1}{c|}{55.514±34.580} &
  \multicolumn{1}{c|}{55.125±33.978} &
  \multicolumn{1}{c|}{55.646±34.111} &
  \multicolumn{1}{c|}{57.293±35.045} &
  \multicolumn{1}{c|}{55.644±33.626} &
  \multicolumn{1}{c|}{118.521±95.089} &
  \multicolumn{1}{c|}{65.543±41.318} &
  \multicolumn{1}{c|}{58.338±34.092} &
  N.A. &
  \multicolumn{1}{c|}{497} \\ \cline{3-14} 
\multicolumn{1}{|c|}{} &
  \multicolumn{1}{c|}{\multirow{-3}{*}{\textbf{\begin{tabular}[c]{@{}c@{}}Case 1:\\ $\mbox{\boldmath{$\tau$}}=[1,\;1,\;1,\;1]$\end{tabular}}}} &
  \textbf{Acc. 3} &
  \multicolumn{1}{c|}{68.039±56.128} &
  \multicolumn{1}{c|}{53.114±33.964} &
  \multicolumn{1}{c|}{52.828±33.497} &
  \multicolumn{1}{c|}{53.334±33.622} &
  \multicolumn{1}{c|}{54.552±34.484} &
  \multicolumn{1}{c|}{53.307±33.122} &
  \multicolumn{1}{c|}{110.571±92.072} &
  \multicolumn{1}{c|}{61.025±40.444} &
  \multicolumn{1}{c|}{55.167±33.831} &
  N.A. &
  \multicolumn{1}{c|}{480} \\ \cline{2-14} 
\multicolumn{1}{|c|}{} &
  \multicolumn{1}{c|}{} &
  \textbf{Acc. 1} &
  \multicolumn{1}{c|}{62.156±44.594} &
  \multicolumn{1}{c|}{34.341±19.520} &
  \multicolumn{1}{c|}{33.954±19.518} &
  \multicolumn{1}{c|}{34.804±19.237} &
  \multicolumn{1}{c|}{37.826±19.976} &
  \multicolumn{1}{c|}{34.567±19.456} &
  \multicolumn{1}{c|}{83.418±40.380} &
  \multicolumn{1}{c|}{47.603±31.856} &
  \multicolumn{1}{c|}{38.846±19.876} &
  151.099±148.047 &
  \multicolumn{1}{c|}{271} \\ \cline{3-14} 
\multicolumn{1}{|c|}{} &
  \multicolumn{1}{c|}{} &
  \textbf{Acc. 2} &
  \multicolumn{1}{c|}{78.038±69.312} &
  \multicolumn{1}{c|}{56.680±34.914} &
  \multicolumn{1}{c|}{56.365±34.425} &
  \multicolumn{1}{c|}{57.185±34.778} &
  \multicolumn{1}{c|}{58.542±35.501} &
  \multicolumn{1}{c|}{57.094±34.441} &
  \multicolumn{1}{c|}{120.493±99.649} &
  \multicolumn{1}{c|}{66.583±42.585} &
  \multicolumn{1}{c|}{59.460±35.182} &
  N.A. &
  \multicolumn{1}{c|}{554} \\ \cline{3-14} 
\multicolumn{1}{|c|}{} &
  \multicolumn{1}{c|}{\multirow{-3}{*}{\textbf{\begin{tabular}[c]{@{}c@{}}Case 2:\\ $\mbox{\boldmath{$\tau$}}=[0.7,\;0.6,\;1,\;0.9]$\end{tabular}}}} &
  \textbf{Acc. 3} &
  \multicolumn{1}{c|}{67.952±52.159} &
  \multicolumn{1}{c|}{54.733±34.101} &
  \multicolumn{1}{c|}{54.463±33.678} &
  \multicolumn{1}{c|}{55.236±33.969} &
  \multicolumn{1}{c|}{56.227±34.637} &
  \multicolumn{1}{c|}{55.128±33.642} &
  \multicolumn{1}{c|}{111.823±94.532} &
  \multicolumn{1}{c|}{62.438±40.492} &
  \multicolumn{1}{c|}{56.856±34.417} &
  N.A. &
  \multicolumn{1}{c|}{539} \\ \cline{2-14} 
\multicolumn{1}{|c|}{} &
  \multicolumn{1}{c|}{} &
  \textbf{Acc. 1} &
  \multicolumn{1}{c|}{59.015±92.605} &
  \multicolumn{1}{c|}{50.568±59.172} &
  \multicolumn{1}{c|}{50.397±58.877} &
  \multicolumn{1}{c|}{50.963±59.574} &
  \multicolumn{1}{c|}{51.034±62.678} &
  \multicolumn{1}{c|}{50.775±59.323} &
  \multicolumn{1}{c|}{68.504±96.178} &
  \multicolumn{1}{c|}{54.747±76.415} &
  \multicolumn{1}{c|}{51.436±63.505} &
  108.338±222.662 &
  \multicolumn{1}{c|}{480} \\ \cline{3-14} 
\multicolumn{1}{|c|}{} &
  \multicolumn{1}{c|}{} &
  \textbf{Acc. 2} &
  \multicolumn{1}{c|}{22.561±10.310} &
  \multicolumn{1}{c|}{19.531±10.824} &
  \multicolumn{1}{c|}{19.445±10.858} &
  \multicolumn{1}{c|}{19.601±10.743} &
  \multicolumn{1}{c|}{20.361±10.829} &
  \multicolumn{1}{c|}{19.606±10.658} &
  \multicolumn{1}{c|}{27.504±11.313} &
  \multicolumn{1}{c|}{21.816±10.879} &
  \multicolumn{1}{c|}{20.413±10.523} &
  50.870±43.414 &
  \multicolumn{1}{c|}{84} \\ \cline{3-14} 
\multicolumn{1}{|c|}{\multirow{-9}{*}{\begin{tabular}[c]{@{}c@{}}\rotatebox[origin=c]{90}{$\mathbf{\color{olive}AAMDRAC(25)}$}\end{tabular}}} &
  \multicolumn{1}{c|}{\multirow{-3}{*}{\textbf{\begin{tabular}[c]{@{}c@{}}Case 3:\\ $\mbox{\boldmath{$\tau$}}=[0.7,\;0.8,\;0.4,\;0.5]$\end{tabular}}}} &
  \textbf{Acc. 3} &
  \multicolumn{1}{c|}{21.797±10.150} &
  \multicolumn{1}{c|}{18.885±10.794} &
  \multicolumn{1}{c|}{18.807±10.837} &
  \multicolumn{1}{c|}{19.023±10.699} &
  \multicolumn{1}{c|}{19.732±10.747} &
  \multicolumn{1}{c|}{19.043±10.575} &
  \multicolumn{1}{c|}{24.658±11.364} &
  \multicolumn{1}{c|}{20.431±10.462} &
  \multicolumn{1}{c|}{19.647±10.471} &
  46.313±38.139 &
  \multicolumn{1}{c|}{100} \\ \hline    
\multicolumn{3}{|c|}{\textit{\begin{tabular}[c]{@{}c@{}}\textit{\textbf{{(\color{magenta}PM}±{\color{teal}PSD})}} \textit{of \textbf{\color{olive}AAMDRAC}} \end{tabular}}} &
  \multicolumn{1}{c|}{67.476±65.281} &
  \multicolumn{1}{c|}{49.289±36.738} &
  \multicolumn{1}{c|}{48.963±36.342} &
  \multicolumn{1}{c|}{49.589±36.624} &
  \multicolumn{1}{c|}{50.944±37.887} &
  \multicolumn{1}{c|}{49.477±36.339} &
  \multicolumn{1}{c|}{97.583±86.012} &
  \multicolumn{1}{c|}{57.548±45.975} &
  \multicolumn{1}{c|}{51.669±37.784} &
  N.A. &
  \multicolumn{1}{c|}{3,250} \\\hline
\multicolumn{3}{|c|}{\textit{\begin{tabular}[c]{@{}c@{}} \textbf{\color{violet}CV} \textbf{= {\color{teal}PSD}/{\color{magenta}PM}}\end{tabular}}} &
  \multicolumn{1}{c|}{0.967} &
  \multicolumn{1}{c|}{0.745} &
  \multicolumn{1}{c|}{0.742} &
  \multicolumn{1}{c|}{0.739} &
  \multicolumn{1}{c|}{0.744} &
  \multicolumn{1}{c|}{0.734} &
  \multicolumn{1}{c|}{0.881} &
  \multicolumn{1}{c|}{0.799} &
  \multicolumn{1}{c|}{0.731} &
  N.A. &
  \multicolumn{1}{c}{} \\ 
  \cline{1-13}
\multicolumn{3}{|c|}{\textit{\textbf{\color{orange}PI} = \textbf{\color{magenta}PM}+\textbf{\color{violet}CV}}} &
  \multicolumn{1}{c|}{68.443} &
  \multicolumn{1}{c|}{50.034} &
  \multicolumn{1}{c|}{49.705} &
  \multicolumn{1}{c|}{50.328} &
  \multicolumn{1}{c|}{51.688} &
  \multicolumn{1}{c|}{50.211} &
  \multicolumn{1}{c|}{98.464} &
  \multicolumn{1}{c|}{58.347} &
  \multicolumn{1}{c|}{52.400} &
  N.A. &
  \multicolumn{1}{c}{}
  \\\customCline{1.7pt}{1}{13}
  \multicolumn{3}{|c|}{\textit{Avergae \textbf{\color{orange}PI} of  \textbf{\color{blue}AAPMMTC} \& \textbf{\color{olive}AAMDRAC} }} &
  \multicolumn{1}{c|}{74.473} &
  \multicolumn{1}{c|}{45.136} &
  \multicolumn{1}{c|}{44.313} &
  \multicolumn{1}{c|}{45.137} &
  \multicolumn{1}{c|}{48.827} &
  \multicolumn{1}{c|}{45.049} &
  \multicolumn{1}{c|}{124.203} &
  \multicolumn{1}{c|}{58.289} &
  \multicolumn{1}{c|}{51.378} &
  N.A. &
  \multicolumn{1}{c}{} 
  \\ \cline{1-13}
\multicolumn{3}{|c|}{{\color[HTML]{FF0000} \begin{tabular}{c}
\textit{\textbf{\underline{Joint Robustness \& Safety Rank}}}  \\ {\color{black}\textit{based on Average \textbf{\color{orange}PI}  (Asce.)}}
\end{tabular}}} &
  \multicolumn{1}{c|}{8} &
  \multicolumn{1}{c|}{3} &
  \multicolumn{1}{c|}{1} &
  \multicolumn{1}{c|}{4} &
  \multicolumn{1}{c|}{5} &
  \multicolumn{1}{c|}{2} &
  \multicolumn{1}{c|}{9} &
  \multicolumn{1}{c|}{7} &
  \multicolumn{1}{c|}{6} &
  10 &
  \multicolumn{1}{c}{} \\ \cline{1-13}
\end{tabular}
}
\label{safety_table}
\end{table*}

3. \textbf{Results for Energy Consumption}:
For this metric, we have defined AAMEEI [see \eqref{AAIE}], which requires SCGVs [see \ref{SCGVs}] for each scenario shown in Figs. \ref{homo_areas}-\ref{hete2_areas}. The SCGVs [see \ref{SCGVs} and Fig. \ref{commonCGs}] for each scenario are depicted in Fig. \ref{commonCGs}, and the corresponding AAMEEI metric values are summarized in Table \ref{energy_table}. For example, under Case 2, Acc. 1, and the BD topology, the AAMEEI value (over the whole travel time) is $369.222\pm 326.517$. In Table \ref{energy_table}, the Performance Index (PI) for the AAMEEI metric is calculated, and the rank of each RCT is provided in the last row of the table. This ranking compares the performance of the platoon under various RCTs in terms of energy  consumption and robustness. We observe that TPFL, MPF, PFL, BDL, and TPF are the top 5 topologies with the lowest energy  consumption, while SPTF, BD, PF, TPSF, and TBPF are the top 5 topologies with the highest energy  consumption. As an example, for Case 1 and Acc. 1, the AMEEI  trajectory is shown in Fig. \ref{metrics_plt_ncumsum}. Also, for AAMEEI, its trajectory is shown in Fig. \ref{metrics_plt_platoon}.
\setlength{\tabcolsep}{5pt}
\begin{table*}[]
\centering
\caption{Values of AAMEA (see \eqref{AAAE}) and AAMEJ (see \eqref{AAJE}) metrics for different scenarios under SCGVs (see \ref{SCGVs} and Fig. \ref{commonCGs}) and different common RCTs. The RCTs are illustrated in Figs. \ref{typicalu}-\ref{typicalb}. Also, Acc. 1, Acc. 2, and Acc. 3; leader vehicle's accelerations, are depicted in Fig. \ref{leader_av} where their functions are provided in \eqref{accvel}. The term 'Rank (Asce.)' implies that the communications are ranked, based on API, in an ascending order. The number of SCGVs; $\#\mathbf{K}_{s}$ (see Fig. \ref{commonCGs}), are referred to as Weight in the table.}
\renewcommand{\arraystretch}{1.4}
\resizebox{1\textwidth}{!}{%
\begin{tabular}{ccc|cccccccccc|c}
\cline{4-14}
\multicolumn{1}{l}{} &
  \multicolumn{1}{c}{} &
  \multicolumn{1}{c|}{} &
  \multicolumn{10}{c|}{\textit{Rigid Communication Topologies (RCTs)}} &
  \multicolumn{1}{c|}{} \\ \cline{4-13}
\multicolumn{1}{c}{\multirow{-2}{*}{}} &
  \multicolumn{1}{c}{\multirow{-2}{*}{}} &
  \multicolumn{1}{c|}{\multirow{-2}{*}{}} &
  \multicolumn{1}{c|}{\textbf{PF}} &
  \multicolumn{1}{c|}{\textbf{MPF}} &
  \multicolumn{1}{c|}{\textbf{TPFL}} &
  \multicolumn{1}{c|}{\textbf{PFL}} &
  \multicolumn{1}{c|}{\textbf{TPF}} &
  \multicolumn{1}{c|}{\textbf{BDL}} &
  \multicolumn{1}{c|}{\textbf{BD}} &
  \multicolumn{1}{c|}{\textbf{TBPF}} &
  \multicolumn{1}{c|}{\textbf{TPSF}} &
  \textbf{SPTF} &
  \multicolumn{1}{c|}{\multirow{-2}{*}{Weight}} \\ \Xhline{2pt}
\multicolumn{1}{|c|}{} &
  \multicolumn{1}{c|}{} &
  \textbf{Acc. 1} &
  \multicolumn{1}{c|}{75.575±62.973} &
  \multicolumn{1}{c|}{5.775±2.974} &
  \multicolumn{1}{c|}{4.028±1.211} &
  \multicolumn{1}{c|}{4.212±2.126} &
  \multicolumn{1}{c|}{11.925±9.404} &
  \multicolumn{1}{c|}{3.808±2.257} &
  \multicolumn{1}{c|}{337.696±260.869} &
  \multicolumn{1}{c|}{27.072±16.817} &
  \multicolumn{1}{c|}{16.923±13.042} &
  977.243±648.804 &
  \multicolumn{1}{c|}{245} \\ \cline{3-14} 
\multicolumn{1}{|c|}{} &
  \multicolumn{1}{c|}{} &
  \textbf{Acc. 2} &
  \multicolumn{1}{c|}{90.205±126.418} &
  \multicolumn{1}{c|}{5.544±6.107} &
  \multicolumn{1}{c|}{4.502±4.602} &
  \multicolumn{1}{c|}{6.851±8.137} &
  \multicolumn{1}{c|}{13.618±17.122} &
  \multicolumn{1}{c|}{8.116±9.128} &
  \multicolumn{1}{c|}{581.698±736.178} &
  \multicolumn{1}{c|}{43.888±46.125} &
  \multicolumn{1}{c|}{23.801±29.907} &
  N.A. &
  \multicolumn{1}{c|}{497} \\ \cline{3-14} 
\multicolumn{1}{|c|}{} &
  \multicolumn{1}{c|}{\multirow{-3}{*}{\textbf{\begin{tabular}[c]{@{}c@{}}Case 1:\\ $\mbox{\boldmath{$\tau$}}=[1,\;1,\;1,\;1]$\end{tabular}}}} &
  \textbf{Acc. 3} &
  \multicolumn{1}{c|}{65.219±100.989} &
  \multicolumn{1}{c|}{4.102±5.168} &
  \multicolumn{1}{c|}{3.464±4.107} &
  \multicolumn{1}{c|}{5.573±7.273} &
  \multicolumn{1}{c|}{10.044±14.016} &
  \multicolumn{1}{c|}{6.741±8.237} &
  \multicolumn{1}{c|}{524.985±718.508} &
  \multicolumn{1}{c|}{36.616±43.341} &
  \multicolumn{1}{c|}{18.630±25.878} &
  N.A. &
  \multicolumn{1}{c|}{480} \\ \cline{2-14} 
\multicolumn{1}{|c|}{} &
  \multicolumn{1}{c|}{} &
  \textbf{Acc. 1} &
  \multicolumn{1}{c|}{91.791±80.240} &
  \multicolumn{1}{c|}{6.859±3.691} &
  \multicolumn{1}{c|}{5.036±2.330} &
  \multicolumn{1}{c|}{5.898±3.710} &
  \multicolumn{1}{c|}{15.164±13.768} &
  \multicolumn{1}{c|}{6.617±4.340} &
  \multicolumn{1}{c|}{624.329±478.219} &
  \multicolumn{1}{c|}{45.694±28.997} &
  \multicolumn{1}{c|}{24.651±20.556} &
  2460.737±1507.486 &
  \multicolumn{1}{c|}{271} \\ \cline{3-14} 
\multicolumn{1}{|c|}{} &
  \multicolumn{1}{c|}{} &
  \textbf{Acc. 2} &
  \multicolumn{1}{c|}{89.326±134.853} &
  \multicolumn{1}{c|}{6.481±8.267} &
  \multicolumn{1}{c|}{5.627±7.186} &
  \multicolumn{1}{c|}{9.069±13.089} &
  \multicolumn{1}{c|}{14.622±20.885} &
  \multicolumn{1}{c|}{11.935±16.667} &
  \multicolumn{1}{c|}{695.455±984.738} &
  \multicolumn{1}{c|}{56.434±68.633} &
  \multicolumn{1}{c|}{27.997±39.858} &
  N.A. &
  \multicolumn{1}{c|}{554} \\ \cline{3-14} 
\multicolumn{1}{|c|}{} &
  \multicolumn{1}{c|}{\multirow{-3}{*}{\textbf{\begin{tabular}[c]{@{}c@{}}Case 2:\\ $\mbox{\boldmath{$\tau$}}=[0.7,\;0.6,\;1,\;0.9]$\end{tabular}}}} &
  \textbf{Acc. 3} &
  \multicolumn{1}{c|}{6.672±11.053} &
  \multicolumn{1}{c|}{0.498±0.707} &
  \multicolumn{1}{c|}{0.443±0.624} &
  \multicolumn{1}{c|}{0.735±1.137} &
  \multicolumn{1}{c|}{1.109±1.728} &
  \multicolumn{1}{c|}{0.974±1.452} &
  \multicolumn{1}{c|}{61.519±93.754} &
  \multicolumn{1}{c|}{4.612±6.208} &
  \multicolumn{1}{c|}{2.191±3.399} &
  N.A. &
  \multicolumn{1}{c|}{539} \\ \cline{2-14} 
\multicolumn{1}{|c|}{} &
  \multicolumn{1}{c|}{} &
  \textbf{Acc. 1} &
  \multicolumn{1}{c|}{93.584±132.789} &
  \multicolumn{1}{c|}{7.483±7.262} &
  \multicolumn{1}{c|}{5.505±5.251} &
  \multicolumn{1}{c|}{6.691±8.014} &
  \multicolumn{1}{c|}{15.276±19.234} &
  \multicolumn{1}{c|}{6.455±8.674} &
  \multicolumn{1}{c|}{386.086±636.151} &
  \multicolumn{1}{c|}{36.874±54.843} &
  \multicolumn{1}{c|}{24.126±36.028} &
  1240.491±1741.188 &
  \multicolumn{1}{c|}{480} \\ \cline{3-14} 
\multicolumn{1}{|c|}{} &
  \multicolumn{1}{c|}{} &
  \textbf{Acc. 2} &
  \multicolumn{1}{c|}{40.105±39.960} &
  \multicolumn{1}{c|}{2.808±2.124} &
  \multicolumn{1}{c|}{2.055±1.213} &
  \multicolumn{1}{c|}{2.840±2.317} &
  \multicolumn{1}{c|}{7.261±7.535} &
  \multicolumn{1}{c|}{3.103±2.431} &
  \multicolumn{1}{c|}{214.168±181.283} &
  \multicolumn{1}{c|}{19.636±15.378} &
  \multicolumn{1}{c|}{11.413±11.454} &
  823.291±408.590 &
  \multicolumn{1}{c|}{84} \\ \cline{3-14} 
\multicolumn{1}{|c|}{\multirow{-9}{*}{\begin{tabular}[c]{@{}c@{}}\rotatebox[origin=c]{90}{$\mathbf{\color{cyan}AAMEA(25)}$} \end{tabular}}} &
  \multicolumn{1}{c|}{\multirow{-3}{*}{\textbf{\begin{tabular}[c]{@{}c@{}}Case 3:\\ $\mbox{\boldmath{$\tau$}}=[0.7,\;0.8,\;0.4,\;0.5]$\end{tabular}}}} &
  \textbf{Acc. 3} &
  \multicolumn{1}{c|}{26.038±30.156} &
  \multicolumn{1}{c|}{1.897±1.737} &
  \multicolumn{1}{c|}{1.486±1.274} &
  \multicolumn{1}{c|}{2.333±2.444} &
  \multicolumn{1}{c|}{5.100±6.101} &
  \multicolumn{1}{c|}{2.589±2.542} &
  \multicolumn{1}{c|}{153.919±133.809} &
  \multicolumn{1}{c|}{13.445±11.881} &
  \multicolumn{1}{c|}{7.885±9.099} &
  680.977±337.387 &
  \multicolumn{1}{c|}{100} \\ \hline
\multicolumn{3}{|c|}{\textit{\begin{tabular}[c]{@{}c@{}}\textit{\textbf{{(\color{magenta}PM}±{\color{teal}PSD})}} of \textbf{\color{cyan}AAMEA} \end{tabular}}} &
  \multicolumn{1}{c|}{68.770±102.888} &
  \multicolumn{1}{c|}{4.884±5.586} &
  \multicolumn{1}{c|}{3.868±4.397} &
  \multicolumn{1}{c|}{5.481±7.661} &
  \multicolumn{1}{c|}{11.006±15.117} &
  \multicolumn{1}{c|}{6.385±9.162} &
  \multicolumn{1}{c|}{440.052±641.507} &
  \multicolumn{1}{c|}{34.722±44.267} &
  \multicolumn{1}{c|}{18.959±27.470} &
  N.A. &
  \multicolumn{1}{c|}{3,250} \\ \hline
    \multicolumn{3}{|c|}{\textit{\begin{tabular}[c]{@{}c@{}}\textbf{\color{violet}CV} \textbf{= {\color{teal}PSD}/{\color{magenta}PM}}\end{tabular}}} &
  \multicolumn{1}{c|}{1.496} &
  \multicolumn{1}{c|}{1.144} &
  \multicolumn{1}{c|}{1.137} &
  \multicolumn{1}{c|}{1.398} &
  \multicolumn{1}{c|}{1.374} &
  \multicolumn{1}{c|}{1.435} &
  \multicolumn{1}{c|}{1.458} &
  \multicolumn{1}{c|}{1.275} &
  \multicolumn{1}{c|}{1.449} &
  N.A. &
  \multicolumn{1}{c}{} \\ \cline{1-13}
\multicolumn{3}{|c|}{\textit{\textbf{\color{orange}PI} = \textbf{\color{magenta}PM}+\textbf{\color{violet}CV}}} &
  \multicolumn{1}{c|}{70.266} &
  \multicolumn{1}{c|}{6.028} &
  \multicolumn{1}{c|}{5.005} &
  \multicolumn{1}{c|}{6.879} &
  \multicolumn{1}{c|}{12.380} &
  \multicolumn{1}{c|}{7.820} &
  \multicolumn{1}{c|}{441.510} &
  \multicolumn{1}{c|}{35.997} &
  \multicolumn{1}{c|}{20.408} &
  N.A. &
  \multicolumn{1}{c}{}
  \\ \Xhline{2pt}
\multicolumn{1}{|c|}{} &
  \multicolumn{1}{c|}{} &
  \textbf{Acc. 1} &
  \multicolumn{1}{c|}{250.634±338.885} &
  \multicolumn{1}{c|}{38.853±54.342} &
  \multicolumn{1}{c|}{34.354±52.687} &
  \multicolumn{1}{c|}{43.494±71.685} &
  \multicolumn{1}{c|}{70.222±112.201} &
  \multicolumn{1}{c|}{40.633±65.892} &
  \multicolumn{1}{c|}{362.175±377.728} &
  \multicolumn{1}{c|}{85.483±105.848} &
  \multicolumn{1}{c|}{75.570±113.588} &
  449.682±332.875 &
  \multicolumn{1}{c|}{245} \\ \cline{3-14} 
\multicolumn{1}{|c|}{} &
  \multicolumn{1}{c|}{} &
  \textbf{Acc. 2} &
  \multicolumn{1}{c|}{575.385±1068.414} &
  \multicolumn{1}{c|}{90.859±154.920} &
  \multicolumn{1}{c|}{82.764±138.126} &
  \multicolumn{1}{c|}{106.622±181.510} &
  \multicolumn{1}{c|}{158.639±280.957} &
  \multicolumn{1}{c|}{110.481±184.068} &
  \multicolumn{1}{c|}{770.480±1348.776} &
  \multicolumn{1}{c|}{221.283±338.163} &
  \multicolumn{1}{c|}{190.466±330.367} &
  N.A. &
  \multicolumn{1}{c|}{497} \\ \cline{3-14} 
\multicolumn{1}{|c|}{} &
  \multicolumn{1}{c|}{\multirow{-3}{*}{\textbf{\begin{tabular}[c]{@{}c@{}}Case 1:\\ $\mbox{\boldmath{$\tau$}}=[1,\;1,\;1,\;1]$\end{tabular}}}} &
  \textbf{Acc. 3} &
  \multicolumn{1}{c|}{376.922±768.717} &
  \multicolumn{1}{c|}{73.773±131.969} &
  \multicolumn{1}{c|}{69.386±121.543} &
  \multicolumn{1}{c|}{90.337±159.114} &
  \multicolumn{1}{c|}{122.619±230.531} &
  \multicolumn{1}{c|}{93.391±161.071} &
  \multicolumn{1}{c|}{668.995±1278.252} &
  \multicolumn{1}{c|}{175.024±293.737} &
  \multicolumn{1}{c|}{145.516±270.280} &
  N.A. &
  \multicolumn{1}{c|}{480} \\ \cline{2-14} 
\multicolumn{1}{|c|}{} &
  \multicolumn{1}{c|}{} &
  \textbf{Acc. 1} &
  \multicolumn{1}{c|}{266.616±371.192} &
  \multicolumn{1}{c|}{47.122±69.617} &
  \multicolumn{1}{c|}{43.279±68.439} &
  \multicolumn{1}{c|}{55.185±91.826} &
  \multicolumn{1}{c|}{78.509±127.680} &
  \multicolumn{1}{c|}{60.354±95.561} &
  \multicolumn{1}{c|}{608.214±601.782} &
  \multicolumn{1}{c|}{129.390±151.751} &
  \multicolumn{1}{c|}{100.391±148.730} &
  1016.165±708.623 &
  \multicolumn{1}{c|}{271} \\ \cline{3-14} 
\multicolumn{1}{|c|}{} &
  \multicolumn{1}{c|}{} &
  \textbf{Acc. 2} &
  \multicolumn{1}{c|}{740.309±1404.430} &
  \multicolumn{1}{c|}{169.091±297.248} &
  \multicolumn{1}{c|}{160.851±281.730} &
  \multicolumn{1}{c|}{200.750±365.739} &
  \multicolumn{1}{c|}{250.574±463.267} &
  \multicolumn{1}{c|}{218.950±393.527} &
  \multicolumn{1}{c|}{1250.717±2319.080} &
  \multicolumn{1}{c|}{399.836±654.323} &
  \multicolumn{1}{c|}{315.319±579.031} &
  N.A. &
  \multicolumn{1}{c|}{554} \\ \cline{3-14} 
\multicolumn{1}{|c|}{} &
  \multicolumn{1}{c|}{\multirow{-3}{*}{\textbf{\begin{tabular}[c]{@{}c@{}}Case 2:\\ $\mbox{\boldmath{$\tau$}}=[0.7,\;0.6,\;0.1,\;0.9]$\end{tabular}}}} &
  \textbf{Acc. 3} &
  \multicolumn{1}{c|}{536.243±1081.236} &
  \multicolumn{1}{c|}{143.924±259.357} &
  \multicolumn{1}{c|}{138.857±248.660} &
  \multicolumn{1}{c|}{172.411±319.447} &
  \multicolumn{1}{c|}{205.836±391.233} &
  \multicolumn{1}{c|}{186.093±341.124} &
  \multicolumn{1}{c|}{1067.364±2127.014} &
  \multicolumn{1}{c|}{320.217±557.498} &
  \multicolumn{1}{c|}{251.284±480.122} &
  N.A. &
  \multicolumn{1}{c|}{539} \\ \cline{2-14} 
\multicolumn{1}{|c|}{} &
  \multicolumn{1}{c|}{} &
  \textbf{Acc. 1} &
  \multicolumn{1}{c|}{868.439±1691.181} &
  \multicolumn{1}{c|}{184.795±342.450} &
  \multicolumn{1}{c|}{171.494±309.827} &
  \multicolumn{1}{c|}{209.750±375.201} &
  \multicolumn{1}{c|}{323.966±630.425} &
  \multicolumn{1}{c|}{221.918±384.320} &
  \multicolumn{1}{c|}{1099.430±2077.154} &
  \multicolumn{1}{c|}{415.295±738.724} &
  \multicolumn{1}{c|}{400.250±762.156} &
  1270.699±1966.408 &
  \multicolumn{1}{c|}{480} \\ \cline{3-14} 
\multicolumn{1}{|c|}{} &
  \multicolumn{1}{c|}{} &
  \textbf{Acc. 2} &
  \multicolumn{1}{c|}{364.235±465.332} &
  \multicolumn{1}{c|}{117.476±161.184} &
  \multicolumn{1}{c|}{110.280±160.341} &
  \multicolumn{1}{c|}{138.691±205.630} &
  \multicolumn{1}{c|}{178.083±263.959} &
  \multicolumn{1}{c|}{128.789±195.621} &
  \multicolumn{1}{c|}{500.560±580.600} &
  \multicolumn{1}{c|}{187.679±239.945} &
  \multicolumn{1}{c|}{178.710±263.411} &
  581.971±478.532 &
  \multicolumn{1}{c|}{84} \\ \cline{3-14} 
\multicolumn{1}{|c|}{\multirow{-9}{*}{\begin{tabular}[c]{@{}c@{}}\rotatebox[origin=c]{90}{$\mathbf{\color{gray}AAMEJ(25)}$}\end{tabular}}} &
  \multicolumn{1}{c|}{\multirow{-3}{*}{\textbf{\begin{tabular}[c]{@{}c@{}}Case 3:\\ $\mbox{\boldmath{$\tau$}}=[0.7,\;0.8,\;0.4,\;0.5]$\end{tabular}}}} &
  \textbf{Acc. 3} &
  \multicolumn{1}{c|}{321.795±468.590} &
  \multicolumn{1}{c|}{113.526±159.573} &
  \multicolumn{1}{c|}{107.020±158.098} &
  \multicolumn{1}{c|}{135.135±202.107} &
  \multicolumn{1}{c|}{173.765±264.545} &
  \multicolumn{1}{c|}{125.845±192.457} &
  \multicolumn{1}{c|}{364.684±497.318} &
  \multicolumn{1}{c|}{166.711±234.370} &
  \multicolumn{1}{c|}{169.202±263.134} &
  432.112±400.087 &
  \multicolumn{1}{c|}{100} \\ \hline    
\multicolumn{3}{|c|}{\textit{\begin{tabular}[c]{@{}c@{}}\textit{\textbf{{(\color{magenta}PM}±{\color{teal}PSD})}} of \textbf{\color{gray}AAMEJ} \end{tabular}}} &
  \multicolumn{1}{c|}{547.489±1117.015} &
  \multicolumn{1}{c|}{118.163±227.794} &
  \multicolumn{1}{c|}{111.022±212.411} &
  \multicolumn{1}{c|}{139.062±269.764} &
  \multicolumn{1}{c|}{188.857±383.313} &
  \multicolumn{1}{c|}{146.946±282.468} &
  \multicolumn{1}{c|}{871.401±1698.033} &
  \multicolumn{1}{c|}{269.502±491.390} &
  \multicolumn{1}{c|}{229.049±464.251} &
  \multicolumn{1}{c|}{N.A.} &
  \multicolumn{1}{c|}{3,250} 
  \\ \hline
  \multicolumn{3}{|c|}{\textit{\begin{tabular}[c]{@{}c@{}}\textbf{\color{violet}CV} \textbf{= {\color{teal}PSD}/{\color{magenta}PM}}\end{tabular}}} &
  \multicolumn{1}{c|}{2.040} &
  \multicolumn{1}{c|}{1.928} &
  \multicolumn{1}{c|}{1.913} &
  \multicolumn{1}{c|}{1.940} &
  \multicolumn{1}{c|}{2.030} &
  \multicolumn{1}{c|}{1.922} &
  \multicolumn{1}{c|}{1.949} &
  \multicolumn{1}{c|}{1.823} &
  \multicolumn{1}{c|}{2.027} &
  N.A. &
  \multicolumn{1}{c}{} \\ \cline{1-13}
\multicolumn{3}{|c|}{\textit{\textbf{\color{orange}PI} = \textbf{\color{magenta}PM}+\textbf{\color{violet}CV}}} &
  \multicolumn{1}{c|}{549.529} &
  \multicolumn{1}{c|}{120.091} &
  \multicolumn{1}{c|}{112.935} &
  \multicolumn{1}{c|}{141.002} &
  \multicolumn{1}{c|}{190.887} &
  \multicolumn{1}{c|}{148.868} &
  \multicolumn{1}{c|}{873.350} &
  \multicolumn{1}{c|}{271.325} &
  \multicolumn{1}{c|}{231.076} &
  N.A. & \multicolumn{1}{c}{}
  \\\customCline{1.7pt}{1}{13}
  \multicolumn{3}{|c|}{\textit{Avergae \textbf{\color{orange}PI} of  \textbf{\color{cyan}AAMEA} \& \textbf{\color{gray}AAMEJ} }} &
  \multicolumn{1}{c|}{309.898} &
  \multicolumn{1}{c|}{63.059} &
  \multicolumn{1}{c|}{58.970} &
  \multicolumn{1}{c|}{73.940} &
  \multicolumn{1}{c|}{101.633} &
  \multicolumn{1}{c|}{78.344} &
  \multicolumn{1}{c|}{657.430} &
  \multicolumn{1}{c|}{153.661} &
  \multicolumn{1}{c|}{125.742} &
  N.A. 
   &
  \multicolumn{1}{c}{} 
  \\
 \cline{1-13}
\multicolumn{3}{|c|}{{\color[HTML]{FF0000} \begin{tabular}{c}
\textit{\textbf{\underline{Joint Robustness \& Comfort Rank}}}  \\ {\color{black}\textit{based on Average \textbf{\color{orange}PI}  (Asce.)}}
\end{tabular}}} &
  \multicolumn{1}{c|}{8} &
  \multicolumn{1}{c|}{2} &
  \multicolumn{1}{c|}{1} &
  \multicolumn{1}{c|}{3} &
  \multicolumn{1}{c|}{5} &
  \multicolumn{1}{c|}{4} &
  \multicolumn{1}{c|}{9} &
  \multicolumn{1}{c|}{7} &
  \multicolumn{1}{c|}{6} &
  10 &
  \multicolumn{1}{c}{} \\ \cline{1-13}
\end{tabular}
}
\label{comfort_table}
\end{table*}
\setlength{\tabcolsep}{5pt}
\begin{table*}[]
\centering
\caption{Values of AAMEEI (see \eqref{AAIE}) metric for different scenarios under SCGVs (see \ref{SCGVs} and Fig. \ref{commonCGs}) and different common RCTs. The RCTs are illustrated in Figs. \ref{typicalu}-\ref{typicalb}. Also, Acc. 1, Acc. 2, and Acc. 3; leader vehicle's accelerations, are depicted in Fig. \ref{leader_av} where their functions are provided in \eqref{accvel}. The term 'Rank (Asce.)' implies that the communications are ranked, based on API, in an ascending order. The number of SCGVs; $\#\mathbf{K}_{s}$ (see Fig. \ref{commonCGs}), are referred to as Weight in the table.}
\renewcommand{\arraystretch}{1.4}
\resizebox{1\textwidth}{!}{%
\begin{tabular}{ccc|cccccccccc|c}
\cline{4-14}
\multicolumn{1}{c}{} &
  \multicolumn{1}{c}{} &
  \multicolumn{1}{c|}{} &
  \multicolumn{10}{c|}{\textit{Rigid Communication Topologies (RCTs)}} &
  \multicolumn{1}{c|}{} \\ \cline{4-13}
\multicolumn{1}{c}{\multirow{-2}{*}{}} &
  \multicolumn{1}{c}{\multirow{-2}{*}{}} &
  \multicolumn{1}{c|}{\multirow{-2}{*}{}} &
  \multicolumn{1}{c|}{\textbf{PF}} &
  \multicolumn{1}{c|}{\textbf{MPF}} &
  \multicolumn{1}{c|}{\textbf{TPFL}} &
  \multicolumn{1}{c|}{\textbf{PFL}} &
  \multicolumn{1}{c|}{\textbf{TPF}} &
  \multicolumn{1}{c|}{\textbf{BDL}} &
  \multicolumn{1}{c|}{\textbf{BD}} &
  \multicolumn{1}{c|}{\textbf{TBPF}} &
  \multicolumn{1}{c|}{\textbf{TPSF}} &
  \textbf{SPTF} &
  \multicolumn{1}{c|}{\multirow{-2}{*}{Weight}} \\ \Xhline{2pt}
\multicolumn{1}{|c|}{} &
  \multicolumn{1}{c|}{} &
  \textbf{Acc. 1} &
  \multicolumn{1}{c|}{124.736±153.624} &
  \multicolumn{1}{c|}{16.950±21.457} &
  \multicolumn{1}{c|}{14.450±20.519} &
  \multicolumn{1}{c|}{18.045±28.144} &
  \multicolumn{1}{c|}{31.370±46.284} &
  \multicolumn{1}{c|}{16.750±25.999} &
  \multicolumn{1}{c|}{262.150±238.954} &
  \multicolumn{1}{c|}{42.421±46.852} &
  \multicolumn{1}{c|}{35.118±48.221} &
  525.807±356.943 &
  \multicolumn{1}{c|}{245} \\ \cline{3-14} 
\multicolumn{1}{|c|}{} &
  \multicolumn{1}{c|}{} &
  \textbf{Acc. 2} &
  \multicolumn{1}{c|}{255.028±465.361} &
  \multicolumn{1}{c|}{36.343±62.724} &
  \multicolumn{1}{c|}{32.686±55.081} &
  \multicolumn{1}{c|}{42.853±73.635} &
  \multicolumn{1}{c|}{65.868±116.821} &
  \multicolumn{1}{c|}{44.651±74.563} &
  \multicolumn{1}{c|}{507.182±790.517} &
  \multicolumn{1}{c|}{99.917±148.184} &
  \multicolumn{1}{c|}{81.484±139.325} &
  N.A. &
  \multicolumn{1}{c|}{497} \\ \cline{3-14} 
\multicolumn{1}{|c|}{} &
  \multicolumn{1}{c|}{\multirow{-3}{*}{\textbf{\begin{tabular}[c]{@{}c@{}}Case 1:\\ $\mbox{\boldmath{$\tau$}}=[1,\;1,\;1,\;1]$\end{tabular}}}} &
  \textbf{Acc. 3} &
  \multicolumn{1}{c|}{169.510±338.688} &
  \multicolumn{1}{c|}{29.345±53.279} &
  \multicolumn{1}{c|}{27.334±48.481} &
  \multicolumn{1}{c|}{36.278±64.580} &
  \multicolumn{1}{c|}{50.708±95.863} &
  \multicolumn{1}{c|}{37.755±65.336} &
  \multicolumn{1}{c|}{447.671±756.506} &
  \multicolumn{1}{c|}{79.752±129.799} &
  \multicolumn{1}{c|}{62.438±114.530} &
  N.A. &
  \multicolumn{1}{c|}{480} \\ \cline{2-14} 
\multicolumn{1}{|c|}{} &
  \multicolumn{1}{c|}{} &
  \textbf{Acc. 1} &
  \multicolumn{1}{c|}{116.478±147.635} &
  \multicolumn{1}{c|}{16.193±23.100} &
  \multicolumn{1}{c|}{14.275±22.672} &
  \multicolumn{1}{c|}{18.886±31.761} &
  \multicolumn{1}{c|}{30.859±48.219} &
  \multicolumn{1}{c|}{17.912±28.708} &
  \multicolumn{1}{c|}{369.222±326.517} &
  \multicolumn{1}{c|}{47.923±50.841} &
  \multicolumn{1}{c|}{35.967±50.085} &
  1095.558±698.046 &
  \multicolumn{1}{c|}{271} \\ \cline{3-14} 
\multicolumn{1}{|c|}{} &
  \multicolumn{1}{c|}{} &
  \textbf{Acc. 2} &
  \multicolumn{1}{c|}{231.312±458.950} &
  \multicolumn{1}{c|}{37.503±69.433} &
  \multicolumn{1}{c|}{34.741±63.445} &
  \multicolumn{1}{c|}{49.088±94.026} &
  \multicolumn{1}{c|}{67.642±130.731} &
  \multicolumn{1}{c|}{51.925±96.958} &
  \multicolumn{1}{c|}{523.832±892.434} &
  \multicolumn{1}{c|}{106.757±175.901} &
  \multicolumn{1}{c|}{85.197±161.616} &
  N.A. &
  \multicolumn{1}{c|}{554} \\ \cline{3-14} 
\multicolumn{1}{|c|}{} &
  \multicolumn{1}{c|}{\multirow{-3}{*}{\textbf{\begin{tabular}[c]{@{}c@{}}Case 2:\\ $\mbox{\boldmath{$\tau$}}=[0.7,\;0.6,\;1,\;0.9]$\end{tabular}}}} &
  \textbf{Acc. 3} &
  \multicolumn{1}{c|}{165.282±348.684} &
  \multicolumn{1}{c|}{31.542±59.960} &
  \multicolumn{1}{c|}{29.815±55.778} &
  \multicolumn{1}{c|}{41.898±81.549} &
  \multicolumn{1}{c|}{54.394±108.900} &
  \multicolumn{1}{c|}{44.011±83.804} &
  \multicolumn{1}{c|}{454.433±828.795} &
  \multicolumn{1}{c|}{85.704±150.494} &
  \multicolumn{1}{c|}{67.365±133.404} &
  N.A. &
  \multicolumn{1}{c|}{539} \\ \cline{2-14} 
\multicolumn{1}{|c|}{} &
  \multicolumn{1}{c|}{} &
  \textbf{Acc. 1} &
  \multicolumn{1}{c|}{147.032±258.603} &
  \multicolumn{1}{c|}{32.407±58.419} &
  \multicolumn{1}{c|}{29.940±53.900} &
  \multicolumn{1}{c|}{35.769±63.946} &
  \multicolumn{1}{c|}{50.768±94.330} &
  \multicolumn{1}{c|}{36.727±64.752} &
  \multicolumn{1}{c|}{308.808±527.606} &
  \multicolumn{1}{c|}{73.157±124.547} &
  \multicolumn{1}{c|}{61.510±111.562} &
  681.158±982.504 &
  \multicolumn{1}{c|}{480} \\ \cline{3-14} 
\multicolumn{1}{|c|}{} &
  \multicolumn{1}{c|}{} &
  \textbf{Acc. 2} &
  \multicolumn{1}{c|}{53.170±57.984} &
  \multicolumn{1}{c|}{11.087±13.036} &
  \multicolumn{1}{c|}{9.886±12.465} &
  \multicolumn{1}{c|}{12.413±16.480} &
  \multicolumn{1}{c|}{17.700±23.217} &
  \multicolumn{1}{c|}{12.650±16.934} &
  \multicolumn{1}{c|}{150.012±139.222} &
  \multicolumn{1}{c|}{28.170±28.581} &
  \multicolumn{1}{c|}{21.441±26.953} &
  400.183±222.225 &
  \multicolumn{1}{c|}{84} \\ \cline{3-14} 
\multicolumn{1}{|c|}{\multirow{-9}{*}{\begin{tabular}[c]{@{}c@{}}\rotatebox[origin=c]{90}{$\mathbf{\color{brown}AAMEEI(25)}$} \end{tabular}}} &
  \multicolumn{1}{c|}{\multirow{-3}{*}{\textbf{\begin{tabular}[c]{@{}c@{}}Case 3:\\ $\mbox{\boldmath{$\tau$}}=[0.7,\;0.8,\;0.4,\;0.5]$\end{tabular}}}} &
  \textbf{Acc. 3} &
  \multicolumn{1}{c|}{40.401±53.889} &
  \multicolumn{1}{c|}{10.136±13.067} &
  \multicolumn{1}{c|}{9.208±12.460} &
  \multicolumn{1}{c|}{11.783±16.416} &
  \multicolumn{1}{c|}{16.006±22.876} &
  \multicolumn{1}{c|}{12.080±16.847} &
  \multicolumn{1}{c|}{104.792±107.265} &
  \multicolumn{1}{c|}{22.457±26.631} &
  \multicolumn{1}{c|}{18.515±26.099} &
  320.089±179.733 &
  \multicolumn{1}{c|}{100} \\ \hline
\multicolumn{3}{|c|}{\begin{tabular}[c]{@{}c@{}}\textit{\textbf{{(\color{magenta}PM}±{\color{teal}PSD})}} \textit{of \textbf{\color{brown}AAMEEI} } \end{tabular}} &
  \multicolumn{1}{c|}{174.325±346.221} &
  \multicolumn{1}{c|}{29.528±55.083} &
  \multicolumn{1}{c|}{27.143±50.275} &
  \multicolumn{1}{c|}{36.129±69.439} &
  \multicolumn{1}{c|}{51.499±100.156} &
  \multicolumn{1}{c|}{37.434±70.759} &
  \multicolumn{1}{c|}{411.596±696.639} &
  \multicolumn{1}{c|}{78.888±132.685} &
  \multicolumn{1}{c|}{63.232±120.761} &
  N.A. &
  \multicolumn{1}{c|}{3,250} \\ \Xhline{2pt}
\multicolumn{3}{|c|}{\textit{\begin{tabular}[c]{@{}c@{}}\textbf{\color{violet}CV} \textbf{= {\color{teal}PSD}/{\color{magenta}PM}}\end{tabular}}} &
  \multicolumn{1}{c|}{1.986} &
  \multicolumn{1}{c|}{1.865} &
  \multicolumn{1}{c|}{1.852} &
  \multicolumn{1}{c|}{1.922} &
  \multicolumn{1}{c|}{1.945} &
  \multicolumn{1}{c|}{1.890} &
  \multicolumn{1}{c|}{1.693} &
  \multicolumn{1}{c|}{1.682} &
  \multicolumn{1}{c|}{1.910} &
  N.A. &
  \multicolumn{1}{c}{} \\ \cline{1-13}
\multicolumn{3}{|c|}{\textit{\textbf{\color{orange}PI} = \textbf{\color{magenta}PM}+\textbf{\color{violet}CV}}} &
  \multicolumn{1}{c|}{176.311} &
  \multicolumn{1}{c|}{31.393} &
  \multicolumn{1}{c|}{28.995} &
  \multicolumn{1}{c|}{38.051} &
  \multicolumn{1}{c|}{53.444} &
  \multicolumn{1}{c|}{39.324} &
  \multicolumn{1}{c|}{413.289} &
  \multicolumn{1}{c|}{80.570} &
  \multicolumn{1}{c|}{65.142} &
  N.A. &
  \multicolumn{1}{c}{} \\ \cline{1-13}
\multicolumn{3}{|c|}{{\color[HTML]{FF0000} \begin{tabular}{c}
\textit{\textbf{\underline{Joint Robustness \& Energy Rank}}}  \\ {\color{black}\textit{based on \textbf{\color{orange}PI}  (Asce.)}}
\end{tabular}}} &
  \multicolumn{1}{c|}{8} &
  \multicolumn{1}{c|}{2} &
  \multicolumn{1}{c|}{1} &
  \multicolumn{1}{c|}{3} &
  \multicolumn{1}{c|}{5} &
  \multicolumn{1}{c|}{4} &
  \multicolumn{1}{c|}{9} &
  \multicolumn{1}{c|}{7} &
  \multicolumn{1}{c|}{6} &
  10 &
  \multicolumn{1}{c}{} \\ \cline{1-13}
\end{tabular}
}
\label{energy_table}
\end{table*}

4. \textbf{Results for Passenger Comfort}:
For this metric, we have defined AAMEA [see \eqref{AAAE}] and AAMEJ [see \eqref{AAJE}]. Using SCGVs [see \ref{SCGVs} and Fig. \ref{commonCGs}], the results for the AAMEA and AAMEJ metrics for each scenario are shown in Table \ref{comfort_table}. For example, under Case 1, Acc. 3, and the PF topology, the AAMEA and AAMEJ values are $65.219\pm 100.989$ and $376.922\pm 768.717$, respectively. In Table \ref{comfort_table}, the Performance Index (PI) for the AAMEA and AAMEJ metrics is calculated, and the Average PI (API) for the Passenger Comfort is determined. Based on the API, the rank of each RCT is provided in the last row of Table \ref{comfort_table}. This ranking compares the performance of the platoon under various RCTs in terms of passenger comfort and robustness for the platoon. We observe that TPFL, MPF, PFL, BDL, and TPF are the top 5 topologies providing greater comfort for the passengers, while SPTF, BD, PF, TBPF, and TPSF are the top 5 topologies providing less comfort. As an example, Fig. \ref{metrics_plt_ncumsum} shows the AMEA and AMEJ trajectories for Case 1 with Acc. 1. Also, Fig. \ref{metrics_plt_platoon} shows the corresponding AAMEA and AAMEJ trajectories.
\begin{figure}[htbp!]
\centering
\subfloat[APMTTC (see \eqref{APMTTC}), AMDRAC (see \eqref{ADRAC}), AMEEI (see \eqref{enCT}), AMEA (see \eqref{enJA1}), and AMEJ (see \eqref{enJA2}) metrics for Case 1 with Acc. 1. \label{metrics_plt_ncumsum}]{\includegraphics[width=1\linewidth]{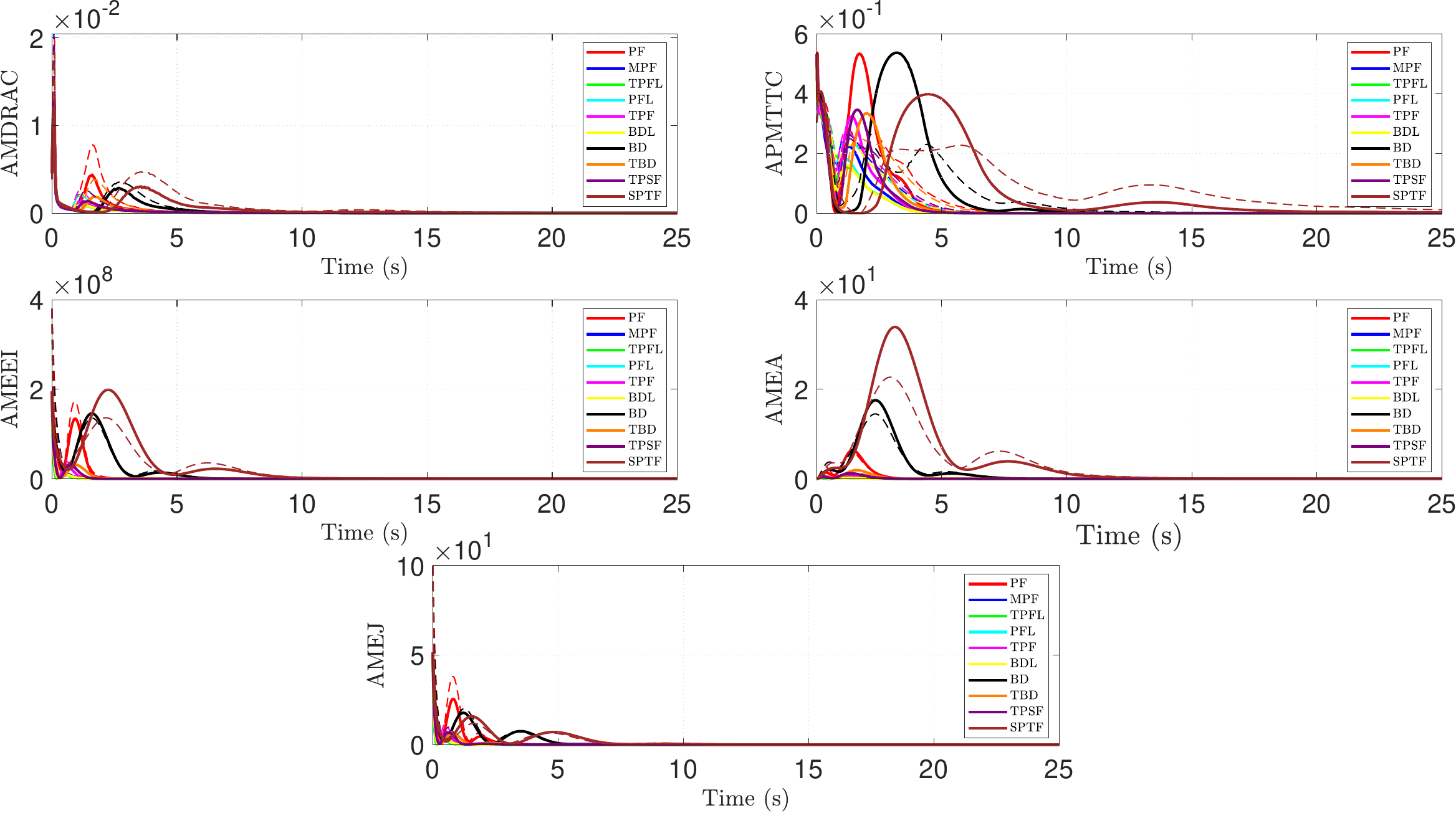}}
\hfill
\subfloat[AAPMTTC (see \eqref{AAPMTTC}), AAMDRAC (see \eqref{AADRAC}), AAMEEI (see \eqref{AAIE}), AAMEA (see \eqref{AAAE}), and AAMEJ (see \eqref{AAJE}) metrics for Case 1 with Acc. 1.\label{metrics_plt_platoon}]{\includegraphics[width=1\linewidth]{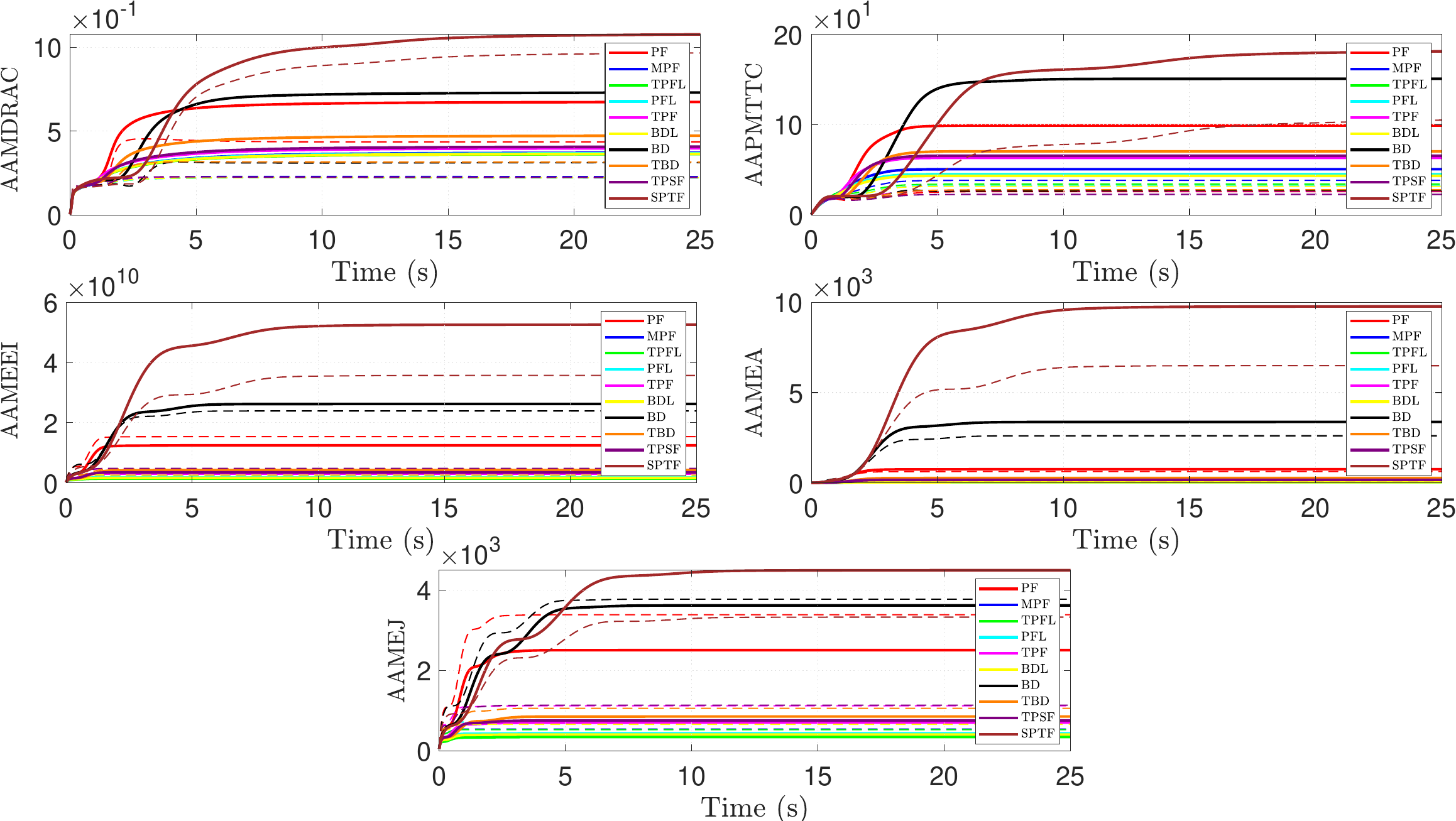}}
\caption{Platoon momentary and accumulated performance metrics over travel time under common RCTs, and under, as a sample, Case 1 \& Acc. 1 (see Table \ref{table1} and Fig. \ref{leader_av}). The dashed lines are standard deviations.}
\label{MEtrics}
\end{figure}
\subsection{\textbf{Overall Ranking}}
To integrate all metrics, we first normalize the PI values of the Energy Consumption  [see Table \ref{energy_table}] and SaCGDI  [see Table \ref{area_percentage}] metrics, and the API values of  Intervehicle Distance Safety [see Table \ref{safety_table}] and Passenger Comfort [see Table \ref{comfort_table}] metrics between 0 and 1. These normalized values are shown in Table \ref{overall_table}. Based on ANV (Average Normalized values) for each RCT, we obtain the overall ranking of the RCTs. We observe that the communications, from best to worst, in terms of their performance, are as follows: TPFL, MPF, PFL, BDL, TPF, TPSF, TBPF, PF, BD, and SPTF.     
\subsection{\textbf{Beneficial and Adversarial Features}}
The following sections elaborate on communication features that can positively or adversely impact platoon performance:
\setlength{\tabcolsep}{2.8pt}
\begin{table}[]
\centering
\caption{Normalized values (between $0$ and $1$) and overall ranking of common RCTs. For consistency with other metrics, for SaCGDI, normalization is obtained without considering SPTF Topology and so replaced with N.A. in the table.}
\renewcommand{\arraystretch}{1.3}
\resizebox{0.5\textwidth}{!}{%
\begin{tabular}{ccc|cccccccccc|}
\cline{4-13}
\multicolumn{1}{c}{} &
   &
   &
  \multicolumn{10}{c|}{\textit{Rigid Communication Topologies (RCTs)}}  \\ \cline{4-13}
\multicolumn{1}{c}{} &
   &
   &
  \multicolumn{1}{c|}{\textbf{PF}} &
  \multicolumn{1}{c|}{\textbf{MPF}} &
  \multicolumn{1}{c|}{\textbf{TPFL}} &
  \multicolumn{1}{c|}{\textbf{PFL}} &
  \multicolumn{1}{c|}{\textbf{TPF}} &
  \multicolumn{1}{c|}{\textbf{BDL}} &
  \multicolumn{1}{c|}{\textbf{BD}} &
  \multicolumn{1}{c|}{\textbf{TBPF}} &
  \multicolumn{1}{c|}{\textbf{TPSF}} &
  \textbf{SPTF} \\ \Xhline{2pt}
\multicolumn{3}{|c|}{\begin{tabular}[c]{@{}c@{}}\textit{Normalized  \textbf{\color{orange}PI} of}  \textit{\textbf{\color{bazaar}SaCGDI}} \end{tabular}} &
  \multicolumn{1}{c|}{0.573} &
  \multicolumn{1}{c|}{0.023} &
  \multicolumn{1}{c|}{0.000} &
  \multicolumn{1}{c|}{0.038} &
  \multicolumn{1}{c|}{0.113} &
  \multicolumn{1}{c|}{0.097} &
  \multicolumn{1}{c|}{1.000} &
  \multicolumn{1}{c|}{0.548} &
  \multicolumn{1}{c|}{0.157} &
  N.A. \\ \cline{1-13}
\multicolumn{3}{|c|}{\begin{tabular}[c]{@{}c@{}}\textit{Normalized \textbf{\color{orange}API} of} \\\textit{\textbf{\color{blue}AAPMMTTC}} \& \textit{\textbf{\color{olive}AAMDRAC}}\\\end{tabular}} &
  \multicolumn{1}{c|}{0.378} &
  \multicolumn{1}{c|}{0.010} &
  \multicolumn{1}{c|}{0.000} &
  \multicolumn{1}{c|}{0.010} &
  \multicolumn{1}{c|}{0.057} &
  \multicolumn{1}{c|}{0.009} &
  \multicolumn{1}{c|}{1.000} &
  \multicolumn{1}{c|}{0.175} &
  \multicolumn{1}{c|}{0.088} &
  N.A.  \\ \cline{1-13}
\multicolumn{3}{|c|}{\begin{tabular}[c]{@{}c@{}}\textit{Normalized \textbf{\color{orange}PI} of} \textit{\textbf{\color{brown}AAMEEI}} \end{tabular}} &
  \multicolumn{1}{c|}{0.383} &
  \multicolumn{1}{c|}{0.006} &
  \multicolumn{1}{c|}{0.000} &
  \multicolumn{1}{c|}{0.024} &
  \multicolumn{1}{c|}{0.064} &
  \multicolumn{1}{c|}{0.027} &
  \multicolumn{1}{c|}{1.000} &
  \multicolumn{1}{c|}{0.134} &
  \multicolumn{1}{c|}{0.094} &
  N.A. \\ \cline{1-13}
\multicolumn{3}{|c|}{\begin{tabular}[c]{@{}c@{}}\textit{Normalized \textbf{\color{orange}API} of} \\\textit{\textbf{\color{cyan}AAMEA}} \& \textit{\textbf{\color{gray}AAMEJ}}\end{tabular}}  &
  \multicolumn{1}{c|}{0.419} &
  \multicolumn{1}{c|}{0.007} &
  \multicolumn{1}{c|}{0.000} &
  \multicolumn{1}{c|}{0.025} &
  \multicolumn{1}{c|}{0.071} &
  \multicolumn{1}{c|}{0.032} &
  \multicolumn{1}{c|}{1.000} &
  \multicolumn{1}{c|}{0.158} &
  \multicolumn{1}{c|}{0.112} &
  N.A.  \\ \cline{1-13}
\multicolumn{3}{|c|}{\begin{tabular}[c]{@{}c@{}}\textit{\textbf{\color{orange}ANV} = Average  of Normalized Values}\end{tabular}}&
  \multicolumn{1}{c|}{0.438} &
  \multicolumn{1}{c|}{0.012} &
  \multicolumn{1}{c|}{0.000} &
  \multicolumn{1}{c|}{0.024} &
  \multicolumn{1}{c|}{0.076} &
  \multicolumn{1}{c|}{0.041} &
  \multicolumn{1}{c|}{1.000} &
  \multicolumn{1}{c|}{0.254} &
  \multicolumn{1}{c|}{0.113} &
  N.A. \\ \cline{1-13}
\multicolumn{3}{|c|}{{\color[HTML]{FF0000} \begin{tabular}{c}
\textit{\textbf{\underline{Overall Joint Robustness \& Metrics Rank}}} \\{\color{black}\textit{based on \textbf{\color{orange}ANV}  (Asce.)}}
\end{tabular}}} &
  \multicolumn{1}{c|}{8} &
  \multicolumn{1}{c|}{2} &
  \multicolumn{1}{c|}{1} &
  \multicolumn{1}{c|}{3} &
  \multicolumn{1}{c|}{5} &
  \multicolumn{1}{c|}{4} &
  \multicolumn{1}{c|}{9} &
  \multicolumn{1}{c|}{7} &
  \multicolumn{1}{c|}{6} &
  10\\ \cline{1-13}
\end{tabular}
}
\label{overall_table}
\end{table}

1. \textbf{Broadcasting the Leader's State}:
To highlight this effect, we investigate SaCGDI, Intervehicle Distance Safety (AAPMTTC \& AAMDRAC), Energy Consumption (AAMEEI), and Passenger Comfort (AAMEA \& AAMEJ) metrics, for which the results are obtained under SCGVs [see \ref{SCGVs} and Fig. \ref{commonCGs}]. The highlights (based on comparisons between PIs) are as follows:

\textbf{BD vs BDL}: The BDL topology is similar to BD, except that in BDL, FVs also receive the leader's state. Broadcasting the leader's state to the FVs significantly improved platoon performance. The SaCGDI metric decreased by 45.799\% (see Table \ref{area_percentage}). The AAPMTTC and AAMDRAC metrics decreased by 73.398\% and 49.005\%, respectively, indicating enhanced platoon safety (see Table \ref{safety_table}). The AAMEEI metric dropped by 90.485\%, showing a substantial reduction in the energy  consumption metric (see Table \ref{energy_table}). Lastly, the AAMEA and AAMEJ metrics decreased by 98.228\% and 82.954\%, respectively, demonstrating significant improvements in passenger comfort (see Table \ref{comfort_table}).

\textbf{PF vs PFL}: The PFL topology is similar to PF, except that in PFL, FVs also receive the leader's state. Broadcasting the leader's state to the FVs enhanced the platoon's performance. The SaCGDI metric decreased by 34.641\% (see Table \ref{area_percentage}). The AAPMTTC and AAMDRAC metrics were reduced by 50.377\% and 26.467\%, respectively, indicating improved platoon safety (see Table \ref{safety_table}). The AAMEEI metric dropped by 78.418\%, highlighting a reduction in the energy  consumption metric (see Table \ref{energy_table}). Finally, the AAMEA and AAMEJ metrics decreased by 90.210\% and 74.341\%, respectively, showing improvements in passenger comfort (see Table \ref{comfort_table}).

\textbf{TPF vs TPFL}: The TPFL topology is similar to TPF, except that in TPFL, FVs also receive the leader's state. Broadcasting the leader's state to the FVs improved the platoon's performance. The SaCGDI metric decreased by 10.455\% (see Table \ref{area_percentage}).  The AAPMTTC and AAMDRAC metrics saw reductions of 15.328\% and 3.836\%, respectively, indicating enhanced platoon safety (see Table \ref{safety_table}). The AAMEEI metric dropped by 45.746\%, highlighting a reduction in the energy  consumption metric (see Table \ref{energy_table}). Lastly, the AAMEA and AAMEJ metrics decreased by 59.571\% and 40.836\%, respectively, showing improvements in passenger comfort (see Table \ref{comfort_table}).

2. \textbf{Receiving States of Vehicles Ahead}:
To highlight this effect, we investigate SaCGDI, Intervehicle Distance Safety (AAPMTTC and AAMDRAC), Energy Consumption (AAMEEI), and Passenger Comfort (AAMEA and AAMEJ) metrics, for which the results are obtained under SCGVs [see \ref{SCGVs} and Fig. \ref{commonCGs}]. The highlights (based on comparisons between PIs) are as follows:

\textbf{PF vs TPF \& MPF}: In the PF, TPF, and MPF topologies, FVs receive information from one, two, and three immediate vehicles ahead, respectively. The SaCGDI metric decreased by 29.794\% (PF vs TPF) and 35.629\% (PF vs MPF) (see Table \ref{area_percentage}). The AAPMTTC and AAMDRAC metrics are reduced by 42.899\% and 24.480\%, respectively, when FVs receive data from one more vehicle ahead (PF vs TPF). These reductions increase to 50.014\% and 26.896\% when receiving data from two more vehicles ahead (PF vs MPF) (see Table \ref{safety_table}). The AAMEEI metric drops by 69.887\% (PF vs TPF), increasing to 82.194\% (PF vs MPF) (see Table \ref{energy_table}). Lastly, the AAMEA and AAMEJ metrics for passenger comfort decrease by 82.381\% and 65.263\% (PF vs TPF), increasing to 91.421\% and 78.146\% (PF vs MPF) (see Table \ref{comfort_table}).

\textbf{BD vs TPSF}: The TPSF topology is similar to the BD, except in TPSF, each FV also receives information from the vehicle two positions ahead. The SaCGDI metric decreased by 42.770\% (see Table \ref{area_percentage}). The AAPMTTC and AAMDRAC metrics decrease by 66.416\% and 46.782\%, respectively, when FVs receive data from one more vehicle ahead (see Table \ref{safety_table}). The AAMEEI metric drops by 84.238\% under the same conditions (see Table \ref{energy_table}). Finally, passenger comfort improves, with the AAMEA and AAMEJ metrics reducing by 95.377\% and 73.541\%, respectively (see Table \ref{comfort_table}).
\begin{figure}[htbp!]
\begin{center}
\resizebox{0.95\hsize}{!}{\includegraphics*{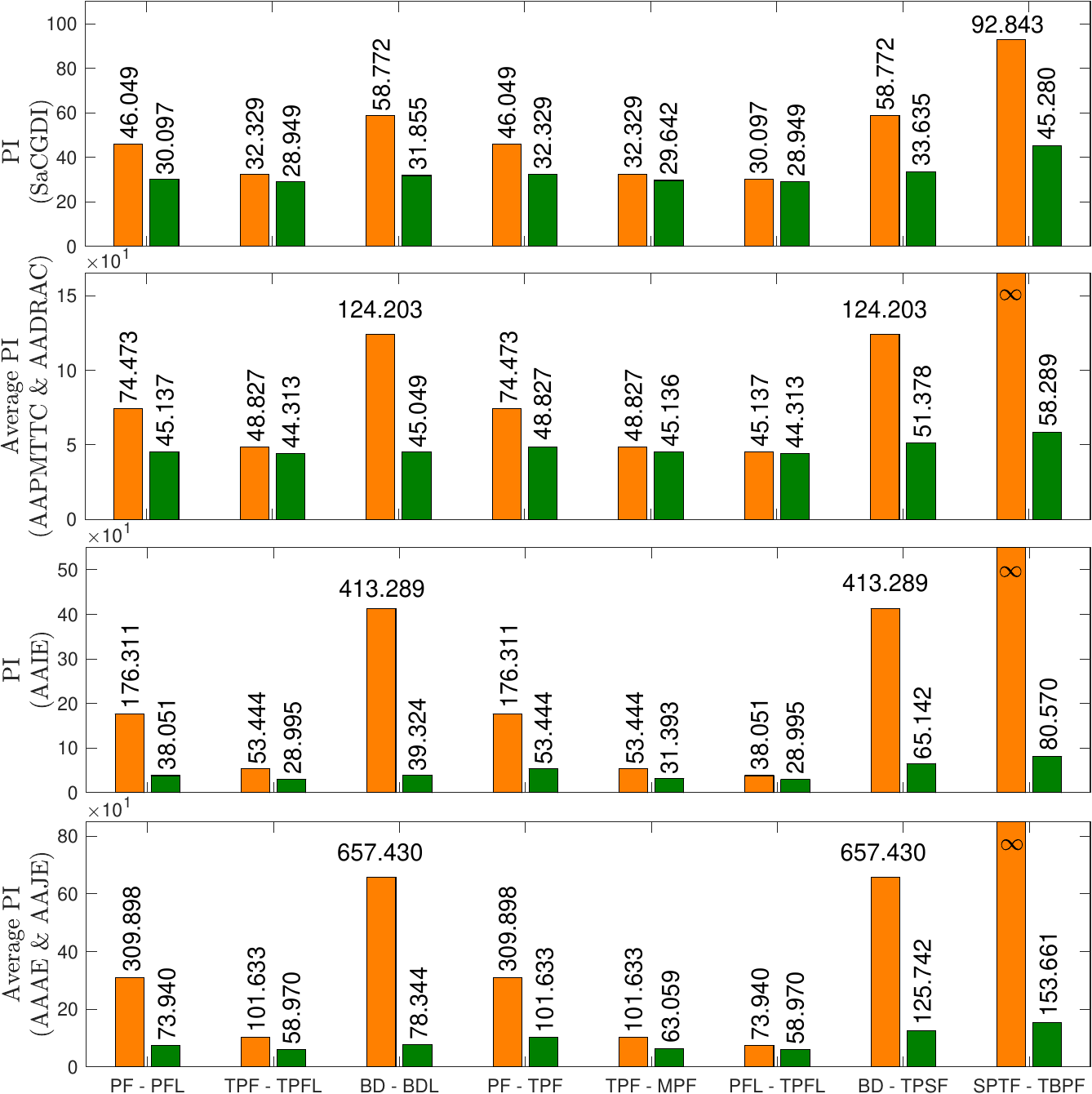}}
\caption{The effect of broadcasting additional information from vehicles ahead to follower vehicles (FVs) on platoon performance is shown. Each comparison (e.g., PF vs. PFL) involves the second topology providing state information from an extra vehicle further ahead. This additional information improves performance, highlighted by green bars. See Figs. \ref{typicalu}-\ref{typicalb} and Tables \ref{area_percentage}, \ref{safety_table}, \ref{energy_table}, and \ref{comfort_table} for detailed values.}
\label{barplots_aheadeffect}
\end{center}
\end{figure}
\begin{remark}    
To illustrate the effectiveness of adding an additional communication avenue for each FV to receive information from more vehicles ahead, Fig. \ref{barplots_aheadeffect} presents comparisons between pairs of communication topologies, such as (PF - PFL) and (TPF - MPF). In each pair, the second communication (PFL and MPF) includes an extra communication link, allowing vehicles to receive data from an additional vehicle ahead, compared to the first one (PF and TPF). We observe that these second communications result in smaller performance metric values, as indicated by the PIs for SaCGDI and AAMEEI, and the average PIs for intervehicle distance safety (AAPMTTC \& AAMDRAC) and passenger comfort (AAMEA \& AAMEJ). The superior communications are shown in green bars.
\end{remark}
3. \textbf{Receiving States of Vehicles Behind}
To highlight this effect, we analyze the metrics SaCGDI, Intervehicle Distance Safety (AAPMTTC \& AAMDRAC), Energy Consumption (AAMEEI), and Passenger Comfort (AAMEA \& AAMEJ). The results are derived considering SCGVs [see \ref{SCGVs} and Fig. \ref{commonCGs}], and the main findings (based on comparisons between PIs) are as follows:

\textbf{PF vs BD}: The BD topology is similar to the PF topology, except in BD, each FV also receives information from its immediate FV. The SaCGDI metric increased by 27.629\% (see Table \ref{area_percentage}). The AAPMTTC and AAMDRAC metrics increase by 86.257\% and 43.862\%, respectively, when FVs receive data from their following vehicle (see Table \ref{safety_table}). The AAMEEI metric for  input energy expands by 134.409\% (see Table \ref{energy_table}). Finally, passenger comfort degrades, with the AAMEA and AAMEJ metrics increasing by 528.340\% and 58.927\%, respectively (see Table \ref{comfort_table}).

\textbf{TPF vs TPSF}: The TPSF topology is similar to TPF, except in TPSF, each FV also receives information from its immediate following vehicle. The SaCGDI metric increased by 4.039\% (see Table \ref{area_percentage}). The AAPMTTC and AAMDRAC metrics increase by 9.548\% and 1.377\% when FVs receive data from the vehicle behind them (see Table \ref{safety_table}). The AAMEEI metric for energy  consumption increases by 21.888\% (see Table \ref{energy_table}). Passenger comfort degrades, with the AAMEA and AAMEJ metrics increasing by 64.846\% and 21.053\%, respectively (see Table \ref{comfort_table}).

\textbf{TPSF vs TBPF}: The TBPF topology is similar to TPSF, except in TBPF, each FV also receives information from its second immediate following vehicle. The SaCGDI metric increased by 34.621\% (see Table \ref{area_percentage}). The AAPMTTC and AAMDRAC metrics increase by 15.638\% and 11.349\%, respectively, when FVs receive data from one more vehicle behind (see Table \ref{safety_table}). The AAMEEI metric for energy  consumption increases by 23.683\% under the same conditions (see Table \ref{energy_table}). Finally, passenger comfort degrades, with the AAMEA and AAMEJ metrics increasing by 76.386\% and 17.418\%, respectively (see Table \ref{comfort_table}).

\textbf{BD vs SPTF}: The SPTF topology is similar to BD, except in SPTF, each FV also receives information from its second immediate following vehicle. This significantly degrades the performance of the platoon. The SaCGDI metric increased by 57.971\% (see Table \ref{area_percentage}). Or, for instance, under Case 3 and Acc. 1, the AAPMTTC and AAMDRAC metrics increase by 49.383\% and 58.148\%, respectively, when FVs receive data from one more vehicle behind (see Table \ref{safety_table}). The AAMEEI metric for energy  consumption increases by 120.576\% (see Table \ref{energy_table}). Finally, passenger comfort degrades, with the AAMEA and AAMEJ metrics increasing by 219.642\% and 15.577\%, respectively (see Table \ref{comfort_table}).
\begin{figure}[htbp!]
\begin{center}
\resizebox{0.95\hsize}{!}{\includegraphics*{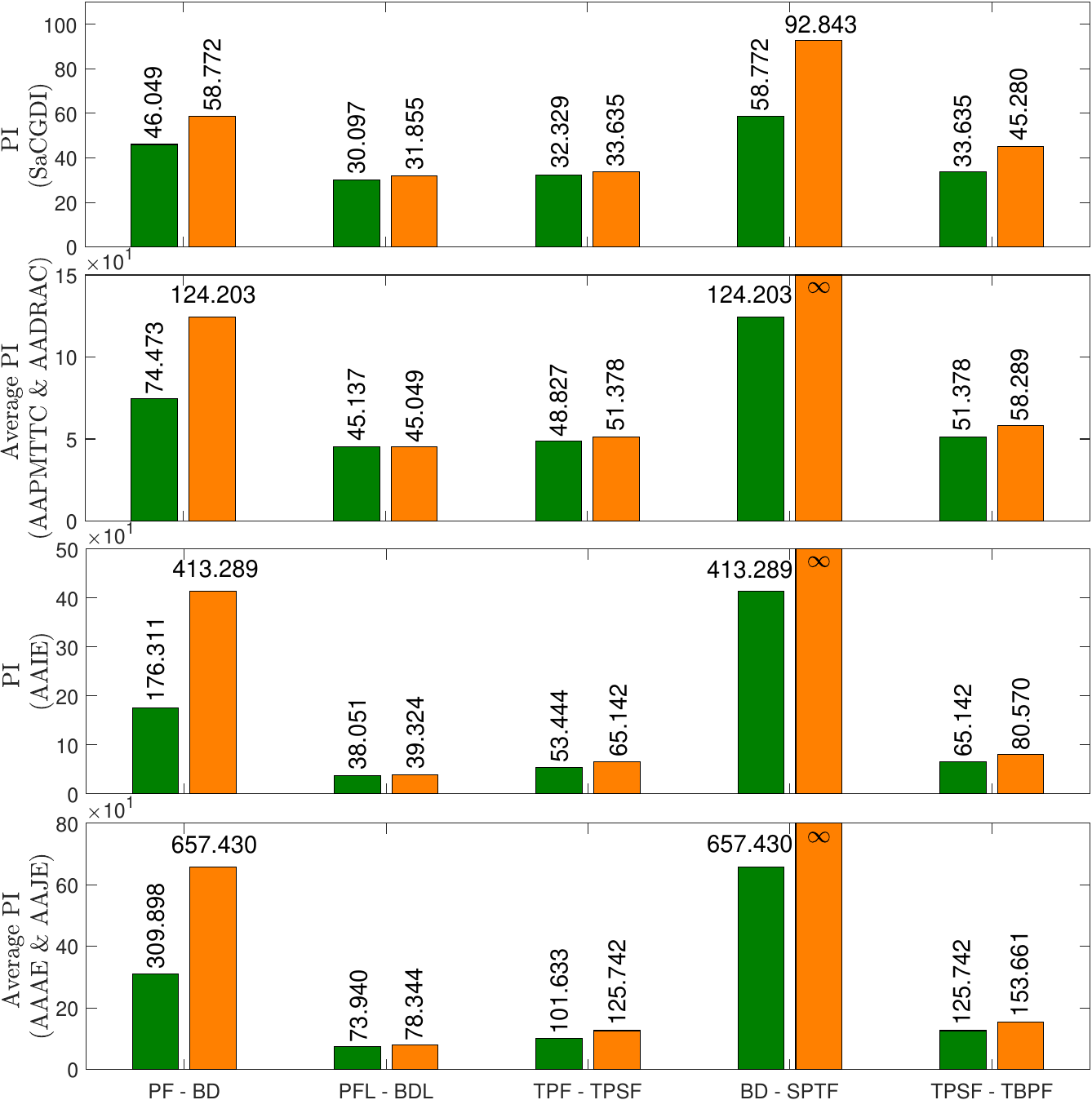}}
\caption{The effect of broadcasting information from additional vehicles behind to follower vehicles (FVs) on platoon performance is illustrated. Each comparison (e.g., PF vs. BD) shows that the second topology includes state information from an extra vehicle further behind, which degrades performance. This is indicated by orange bars. Refer to Figs. \ref{typicalu}-\ref{typicalb} and Tables \ref{area_percentage}, \ref{safety_table}, \ref{energy_table}, and \ref{comfort_table} for detailed values.}
\label{barplots_behindeffect}
\end{center}
\end{figure}
\begin{remark}    
To demonstrate the detrimental impact of adding an additional communication link for each following vehicle (FV) to receive information from more vehicles behind, Fig. \ref{barplots_behindeffect} presents comparisons between pairs of communication topologies, such as (PF - BD) and (TPSF - TBPF). In each pair, the second communication (BD and TBPF) includes an extra link, enabling vehicles to receive data from an additional vehicle behind compared to the first communication (PF and TPSF). We observe that these second communications lead to larger performance metric values, as indicated by the PIs for SaCGDI and AAMEEI, and the average PIs for intervehicle distance safety (AAPMTTC and AAMDRAC) and passenger comfort (AAMEA and AAMEJ). The inferior communications in each pair are represented by the orange bars.
\end{remark}

4. \textbf{A Vehicle Ahead vs. the Leader Vehicle}: 
To highlight this effect, we analyze the metrics SaCGDI, Intervehicle Distance Safety (AAPMTTC and AAMDRAC), Energy Consumption (AAMEEI), and Passenger Comfort (AAMEA and AAMEJ). The main findings (based on comparisons between PIs) are:

\textbf{TPF vs PFL}: In PFL, each FV receives the state of the leader instead of the second immediate preceding vehicle, which improves platoon performance. The SaCGDI metric decreased by 6.904\% (see Table \ref{area_percentage}). The AAPMTTC and AAMDRAC metrics decrease by 13.096\% and 2.631\%, respectively (see Table \ref{safety_table}). The AAMEEI metric for engine input energy decreases by 28.802\% (see Table \ref{energy_table}). Passenger comfort improves, with the AAMEA and AAMEJ metrics decreasing by 44.434\% and 26.133\%, respectively (see Table \ref{comfort_table}).

\textbf{MPF vs TPFL}: In TPFL, each FV receives the state of the leader vehicle instead of the third immediate preceding vehicle, which enhances the platoon performance. The SaCGDI metric decreased by 2.337\% (see Table \ref{area_percentage}). The AAPMTTC and AAMDRAC metrics decrease by 3.275\% and 0.657\% (see Table \ref{safety_table}). The AAMEEI metric for engine input energy decreases by 7.638\% (see Table \ref{energy_table}). Passenger comfort improves, with the AAMEA and AAMEJ metrics decreasing by 16.970\% and 5.958\% (see Table \ref{comfort_table}).

\textbf{TPSF vs BDL}: In BDL, each FV receives the state of the leader vehicle instead of the second immediate preceding vehicle, which improves platoon performance. The SaCGDI metric decreased by 5.292\% (see Table \ref{area_percentage}). The AAPMTTC and AAMDRAC metrics decrease by 20.789\% and 4.177\% (see Table \ref{safety_table}). The AAMEEI metric for energy  consumption decreases by 39.633\% (see Table \ref{energy_table}). Passenger comfort improves, with the AAMEA and AAMEJ metrics decreasing by 61.681\% and 35.576\% (see Table \ref{comfort_table}).
\begin{figure}[htbp!]
\begin{center}
\resizebox{0.85\hsize}{!}{\includegraphics*{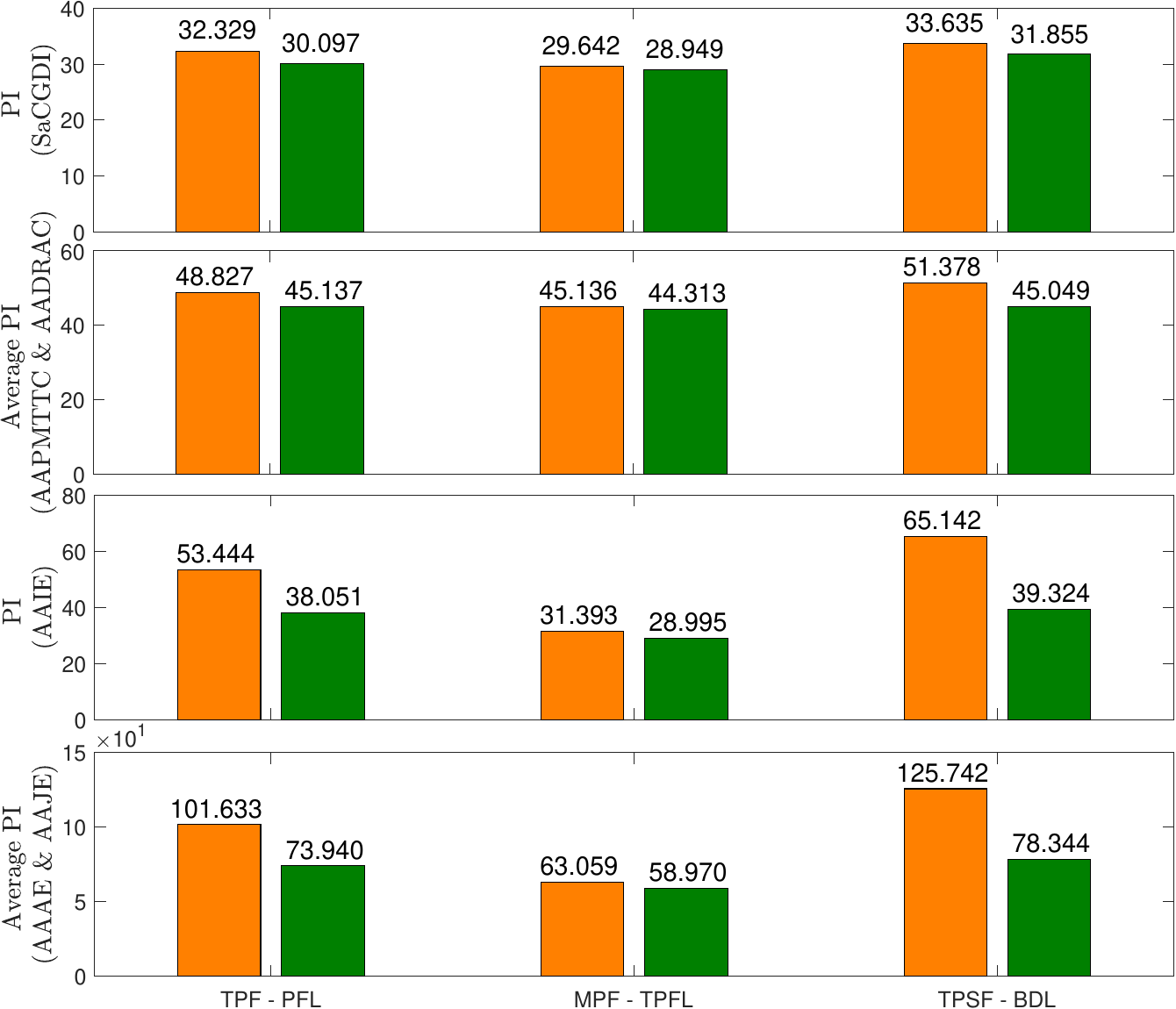}}
\caption{Comparison between broadcasting information from an additional vehicle ahead and broadcasting the leader's state to follower vehicles (FVs) is shown. Each comparison (e.g., TPF vs. PFL) highlights that broadcasting the leader's state improves performance. See Figs. \ref{typicalu}–\ref{typicalb} and Tables \ref{area_percentage}, \ref{safety_table}, \ref{energy_table}, and \ref{comfort_table} for detailed values.}
\label{barplots_leader_vs_ahead}
\end{center}
\end{figure}
\begin{remark}    
Fig. \ref{barplots_leader_vs_ahead} compares communication topologies such as (TPF - PFL) and (MPF - TPFL), where the second topology (PFL and TPFL) replaces the link to a farther vehicle with a direct link to the leader. This change leads to better performance, shown by lower PI values for SaCGDI and AAMEEI, and improved average PIs for safety (AAPMTTC and AAMDRAC) and comfort (AAMEA and AAMEJ).
\end{remark}
\begin{remark}
Since receiving information from vehicles ahead, especially the leader, greatly improves platoon performance, larger platoons will likely benefit most from forward-looking and leader-centric communication topologies.
\end{remark}
\section{\textbf{Conclusion}}
\label{conclusions}
In this paper, we investigated the impact of rigid communication topologies (RCTs) on vehicular platoon performance, focusing on four key metrics: Safe Control Gain Deficiency Index (SaCGDI), Intervehicle Distance Safety, Energy Consumption, and Passenger Comfort. 

We developed platoon dynamics based on distance errors, relative velocities, and accelerations between neighboring vehicles, considering the leader’s acceleration trajectory and initial conditions. Our results show that broadcasting the leader’s state to follower vehicles significantly improves safety, energy efficiency, and passenger comfort. Additionally, receiving information from more vehicles ahead enhances performance, while receiving data from vehicles behind degrades it. 

The RCTs were ranked from best to worst as: TPFL, MPF, PFL, BDL, TPF, TPSF, TBPF, PF, BD, and SPTF. Communication from vehicles ahead enables proactive adjustments, improving performance, whereas communication from vehicles behind leads to poorer outcomes. These findings highlight the critical role of communication topologies in improving platoon efficiency and safety. Future work will explore the impact of communication delays and identify topologies that are more robust against such delays.
\label{section6}
\bibliographystyle{elsarticle-num}        
\bibliography{name,IEEEabrv}           

\section{Acknowledgment}
Funding for this project was provided by the Government of Alberta through Grant RCP-19-001-MIF to the Centre for Autonomous Systems in Strengthening Future Communities.
\appendices
\section{Proof of Theorem 1}
\label{appendixA}
First, let's define $\Tilde{\mathbf{X}}_{j}^{i-1}\triangleq\Tilde{\mathbf{X}}_{j}-\Tilde{\mathbf{X}}_{i-1}$ and $\Tilde{\mathbf{X}}_{i}^{j}\triangleq\Tilde{\mathbf{X}}_{i}-\Tilde{\mathbf{X}}_{j}$. For $i=2,\dots,n$ and using \it{the equation (12) of the paper}, \normalfont we have\begin{equation}
\begin{cases}
\begin{split}
    \dddot{\Tilde{x}}_{i-1}=-\frac{1}{\tau_{i-1}}\mathbb{K}_{i-1}\Tilde{\mathbf{X}}_{i-1}+\frac{1}{\tau_{i-1}}\sum_{j\in\mathbb    {I}_{i-1}}\mathbf{K}\Tilde{\mathbf{X}}_{j}+\epsilon_{i-1}
    \end{split}
    \\
    \begin{split}      
    \dddot{\Tilde{x}}_{i}=-\frac{1}{\tau_{i}}\mathbb{K}_{i}\Tilde{\mathbf{X}}_{i}+\frac{1}{\tau_{i}}\sum_{j\in\mathbb    {I}_{i}}\mathbf{K}\Tilde{\mathbf{X}}_{j}+\epsilon_{i}
    \end{split}
    \end{cases}
    \label{st-spa-f}
\end{equation}
Reminding that $\Tilde{\mathbb{j}}_{i-1}^{i}\triangleq \dddot{\Tilde{x}}_{i-1}-\dddot{\Tilde{x}}_{i}$, we get
\begin{equation}
\begin{split}
\Tilde{\mathbb{j}}_{i-1}^{i}=&
-\frac{1}{\tau_{i-1}}\mathbb{K}_{i-1}\Tilde{\mathbf{X}}_{i-1}^{i}-\left(\frac{1}{\tau_{i-1}}\mathbb{K}_{i-1}-\frac{1}{\tau_{i}}\mathbb{K}_{i}\right)\Tilde{\mathbf{X}}_{i}\\&+\sum_{j\in\mathbb    {I}_{i-1}}\frac{1}{\tau_{i-1}}
\mathbf{K}\Tilde{\mathbf{X}}_{j}-\sum\limits_{\mathclap{j\in\mathbb    {I}_{i}}}\frac{1}{\tau_{i}}
\mathbf{K}\Tilde{\mathbf{X}}_{j}+\frac{\tau_{i-1}-\tau_{i}}{\tau_{i-1}\tau_{i}}a_{0}
\end{split}
\label{xx01}
\end{equation}
Since $-\left(\frac{\mathbb{K}_{i-1}}{\tau_{i-1}}-\frac{\mathbb{K}_{i}}{\tau_{i}}
\right)=\frac{\mbox{\boldmath$\tau$}_{i-1}^{i}}{\tau_{i-1}}+\sum_{j\in\mathbb    {I}_{i}}\frac{\mathbf{K}}{\tau_{i}}
-\sum_{j\in\mathbb    {I}_{i-1}}\frac{\mathbf{K}}{\tau_{i-1}}
$, 
then \eqref{xx01} can be expressed as 
\begin{equation}
\begin{split}
\Tilde{\mathbb{j}}_{i-1}^{i}=&
-\frac{1}{\tau_{i-1}}\left(\mathbb{K}_{i-1}+\mbox{\boldmath$\tau$}_{i-1}^{i}\right)\Tilde{\mathbf{X}}_{i-1}^{i}+\sum_{j\in\mathbb    {I}_{i-1}}\frac{1}{\tau_{i-1}}
\mathbf{K}\Tilde{\mathbf{X}}_{j}^{i}\\&-\sum\limits_{\mathclap{j\in\mathbb    {I}_{i}}}\frac{1}{\tau_{i}}
\mathbf{K}\Tilde{\mathbf{X}}_{j}^{i}-\frac{1}{\tau_{i-1}}\mbox{\boldmath$\tau$}_{i-1}^{i}\sum_{\kappa=1}^{i-1}\Tilde{\mathbf{X}}_{\kappa-1}^{\kappa}+\frac{\tau_{i-1}-\tau_{i}}{\tau_{i-1}\tau_{i}}a_{0}
\end{split}
\label{xx02}
\end{equation}
which derived utilizing the fact that $\tilde{\mathbf{X}}_{0}=[0;0;0]$ and 
\begin{equation}
\frac{\mbox{\boldmath$\tau$}_{i-1}^{i}}{\tau_{i-1}}\tilde{\mathbf{X}}_{i}=-\frac{\mbox{\boldmath$\tau$}_{i-1}^{i}}{\tau_{i-1}}\left(\tilde{\mathbf{X}}_{0}-\tilde{\mathbf{X}}_{i}\right)=-\frac{\mbox{\boldmath$\tau$}_{i-1}^{i}}{\tau_{i-1}}\sum_{\kappa=1}^{i}\tilde{\mathbf{X}}_{\kappa-1}^{\kappa}  
\end{equation}
Now, since $\Tilde{\mathbf{X}}_{j}-\Tilde{\mathbf{X}}_{i}=\Tilde{\mathbf{X}}_{j}-\Tilde{\mathbf{X}}_{i-1}+\Tilde{\mathbf{X}}_{i-1}-\Tilde{\mathbf{X}}_{i}$, \eqref{xx02} becomes
\begin{equation}
\begin{split}
\Tilde{\mathbb{j}}_{i-1}^{i}=&-\left(\mbox{\boldmath$\bar{\tau}$}_{i-1}^{i}+\frac{|\mathbb    {I}_{i}|}{\tau_{i}}\mathbf{K}\right)\Tilde{\mathbf{X}}_{i-1}^{i}+\sum_{j\in\mathbb    {I}_{i-1}}\frac{1}{\tau_{i-1}}
\mathbf{K}\Tilde{\mathbf{X}}_{j}^{i-1}\\&-\sum\limits_{\mathclap{j\in\mathbb    {I}_{i}}}\frac{1}{\tau_{i}}
\mathbf{K}\Tilde{\mathbf{X}}_{j}^{i-1}-\frac{1}{\tau_{i-1}}\mbox{\boldmath$\tau$}_{i-1}^{i}\sum_{\kappa=1}^{i-1}\Tilde{\mathbf{X}}_{\kappa-1}^{\kappa}+\frac{\tau_{i-1}-\tau_{i}}{\tau_{i-1}\tau_{i}}a_{0}
\end{split}
\label{xx03}
\end{equation}
in which $\mbox{\boldmath$\bar{\tau}$}_{i-1}^{i}=[0,\;0,\;\frac{1}{\tau_{i}}]$. Note that $\mbox{\boldmath$\bar{\tau}$}_{i-1}^{i}+\frac{|\mathbb    {I}_{i}|}{\tau_{i}}\mathbf{K}=\frac{1}{\tau_{i}}\mathbb{K}_{i}$ [\it{see  equation (13) of the paper}]. \normalfont Now, based on how the connection is between the neighboring vehicles [\it{see  subsection `E. Connection Types Between Neighboring Vehicles' of the paper}], \normalfont equation \eqref{xx03} can be reformulated as:
\begin{equation}
\begin{split}
\Tilde{\mathbb{j}}_{i-1}^{i}=&-\frac{1}{\tau_{i}}\left(\mathbb{K}_{i}+\frac{\zeta_{i-1}^{i}\tau_{i}}{\tau_{i-1}}\mathbf{K}\right)\Tilde{\mathbf{X}}_{i-1}^{i}+\sum_{j\in\mathbb    {R}_{i-1}}\frac{1}{\tau_{i-1}}
\mathbf{K}\Tilde{\mathbf{X}}_{j}^{i-1}\\&-\sum\limits_{\mathclap{j\in\mathbb    {R}_{i}}}\frac{1}{\tau_{i}}
\mathbf{K}\Tilde{\mathbf{X}}_{j}^{i-1}-\frac{1}{\tau_{i-1}}\mbox{\boldmath$\tau$}_{i-1}^{i}\sum_{\kappa=1}^{i-1}\Tilde{\mathbf{X}}_{\kappa-1}^{\kappa}+\frac{\tau_{i-1}-\tau_{i}}{\tau_{i-1}\tau_{i}}a_{0}
\end{split}
\label{po898}
\end{equation}
Splitting $j\in\mathbb{R}_{i-1}$ and $j\in\mathbb{R}_{i}$ into parts $j<i-1$ and $j>i$, and utilizing the fact that for $j>i$, we have $\mathbf{X}_{j}-\Tilde{\mathbf{X}}_{i-1}=\mathbf{X}_{j}-\Tilde{\mathbf{X}}_{i}-\left(\Tilde{\mathbf{X}}_{i-1}-\Tilde{\mathbf{X}}_{i}\right)$, we can reformulate \eqref{po898} as:
\begin{equation}
\begin{split}
\Tilde{\mathbb{j}}_{i-1}^{i}=&\underbrace{\left(-\frac{1}{\tau_{i}}\mathbb{K}_{i}-\frac{z_{i-1}^{i}}{\tau_{i-1}}\mathbf{K}-\frac{1}{\tau_{i-1}}\sum\limits_{\mathclap{j\in\mathbb    {R}_{i-1}^{>i}}}
\mathbf{K}+\frac{1}{\tau_{i}}\sum\limits_{\mathclap{j\in\mathbb    {R}_{i}^{>i}}}
\mathbf{K}\right)}_{\triangleq\mathbf{A}_{i-1,i}^{31}}\Tilde{\mathbf{X}}_{i-1}^{i}\\&+\sum\limits_{j\in\mathbb    {R}_{i-1}^{<i-1}}\frac{1}{\tau_{i-1}}
\mathbf{K}\Tilde{\mathbf{X}}_{j}^{i-1}-\sum\limits_{j\in\mathbb    {R}_{i}^{<i-1}}
\frac{1}{\tau_{i}}\mathbf{K}\Tilde{\mathbf{X}}_{j}^{i-1}+\sum\limits_{j\in\mathbb    {R}_{i}^{>i}}
\frac{1}{\tau_{i}}\mathbf{K}\Tilde{\mathbf{X}}_{i}^{j}\\&-\sum\limits_{j\in\mathbb    {R}_{i-1}^{>i}}
\frac{1}{\tau_{i-1}}\mathbf{K}\Tilde{\mathbf{X}}_{i}^{j}-\frac{1}{\tau_{i-1}}\mbox{\boldmath$\tau$}_{i-1}^{i}\sum_{\kappa=1}^{i-1}\Tilde{\mathbf{X}}_{\kappa-1}^{\kappa}+\frac{\tau_{i-1}-\tau_{i}}{\tau_{i-1}\tau_{i}}a_{0}
\end{split}
\label{784poaq}
\end{equation} 
in which $\mathbf{A}_{i-1,i}^{31}$ is the third element of $\mathbf{A}_{i-1}^{i}$ [see \it{equation (14) of the paper}], \normalfont if we assume $\mathbf{A}_{i-1}^{i}=\left[\mathbf{A}_{i-1,i}^{11};\mathbf{A}_{i-1,i}^{21};\mathbf{A}_{i-1,i}^{31}\right]$, and is equivalent to [see \it{subsection `H. Accumulative Control Gains of Neighboring Vehicles}' of the paper]
\begin{equation}
\begin{split}
\mathbf{A}_{i-1,i}^{31}&=-\frac{1}{\tau_{i}}\mathbb{K}_{i}^{j\leq i-1}-\frac{1}{\tau_{i-1}}\mathbf{K}_{i-1}^{j\geq i}=-\left[\bar{k}_{i-1}^{i},\;\bar{b}_{i-1}^{i},\;\bar{h}_{i-1}^{i}\right]   
\end{split}
\end{equation}
\normalfont Now, by expressing $\Tilde{\mathbf{X}}_{j}^{i-1}$ and $\Tilde{\mathbf{X}}_{i}^{j}$ in the following forms:
\begin{equation}
\begin{cases}
\Tilde{\mathbf{X}}_{j}^{i-1}=\sum_{\kappa=j+1}^{i-1}\Tilde{\mathbf{X}}_{\kappa-1}^{\kappa}; & \quad j<i-1
\\
\Tilde{\mathbf{X}}_{i}^{j}=\sum_{\kappa=i+1}^{j}\Tilde{\mathbf{X}}_{\kappa-1}^{\kappa}; & \quad j>i
\end{cases}
\label{er4455}
\end{equation}
and substituting  which into  \eqref{784poaq} yields
\begin{equation}
\begin{split}
\Tilde{\mathbb{j}}&_{i-1}^{i}
=\left(-\frac{1}{\tau_{i}}\mathbb{K}_{i}^{j\leq i-1}-\frac{1}{\tau_{i-1}}\mathbf{K}_{i-1}^{j\geq i}   \right)\Tilde{\mathbf{X}}_{i-1}^{i}-\frac{1}{\tau_{i-1}}\mbox{\boldmath$\tau$}_{i-1}^{i}\sum_{\kappa=1}^{i-1}\Tilde{\mathbf{X}}_{\kappa-1}^{\kappa}\\&+\frac{1}{\tau_{i-1}}
\sum_{j\in\mathbb{R}_{i-1}^{<i-1}}\sum_{\kappa=j+1}^{i-1}\mathbf{K}\Tilde{\mathbf{X}}_{\kappa-1}^{\kappa}-\frac{1}{\tau_{i-1}}
\sum_{j\in\mathbb{R}_{i-1}^{>i}}\sum_{\kappa=i+1}^{j}\mathbf{K}\Tilde{\mathbf{X}}_{\kappa-1}^{\kappa}\\&+\frac{1}{\tau_{i}}
\sum_{j\in\mathbb{R}_{i}^{>i}}\sum_{\kappa=i+1}^{j}\mathbf{K}\Tilde{\mathbf{X}}_{\kappa-1}^{\kappa}
-\frac{1}{\tau_{i}}
\sum_{j\in\mathbb{R}_{i}^{<i-1}}\sum_{\kappa=j+1}^{i-1}\mathbf{K}\Tilde{\mathbf{X}}_{\kappa-1}^{\kappa}+\frac{\tau_{i-1}-\tau_{i}}{\tau_{i-1}\tau_{i}}a_{0} 
\end{split}
\label{BCTsT}
\end{equation}


On the other hand, for $i=1$, i.e., the pair $(0,1)$, since $\Tilde{\mathbf{X}}_{0}=0$, utilizing equation \eqref{st-spa-f} gives:
\begin{equation}
\begin{split}
\Tilde{\mathbb{j}}_{i-1}^{i} &= \frac{1}{\tau_{i}}\mathbb{K}_{i}\tilde{\mathbf{X}}_{i}-\frac{1}{\tau_{i}}\sum_{j\in\mathbb{R}_{i}}\mathbf{K}\tilde{\mathbf{X}}_{j} -\epsilon_{1}  
\\&=-\frac{1}{\tau_{i}}\mathbb{K}_{i}\tilde{\mathbf{X}}_{i-1}^{i}+\frac{1}{\tau_{i}}\sum_{j\in\mathbb{R}_{i}}\mathbf{K}\left(\tilde{\mathbf{X}}_{i-1}^{i}+\tilde{\mathbf{X}}_{i}^{j}\right)-\epsilon_{1}
\\&=-\frac{1}{\tau_{i}}\mathbb{K}_{i}^{j\leq i-1}\tilde{\mathbf{X}}_{i-1}^{i}+\frac{1}{\tau_{i}}\sum_{j\in\mathbb{R}_{i}^{>i}}\mathbf{K}\tilde{\mathbf{X}}_{i}^{j}-\epsilon_{1}
\end{split}
\label{01pair}
\end{equation}
Given equation \eqref{er4455} and the fact that the leader does not receive any information, we rewrite \eqref{01pair} in the following form:
\begin{equation}
\Tilde{\mathbb{j}}_{i-1}^{i} = -\frac{1}{\tau_{i}}\mathbb{K}_{i}^{j\leq i-1}\tilde{\mathbf{X}}_{i-1}^{i} +\frac{1}{\tau_{i}}
\sum_{j\in\mathbb{R}_{i}^{>i}}\sum_{\kappa=i+1}^{j}\mathbf{K}\Tilde{\mathbf{X}}_{\kappa-1}^{\kappa}-\epsilon_{i}
\label{pair01_}
\end{equation}
For the pair $(0,1)$, since the sets $\mathbb{R}_{i-1}^{<i-1}$, $\mathbb{R}_{i-1}^{>i}$, and $\mathbb{R}_{i}^{<i-1}$ are all empty, and the leader does not receive any information, \eqref{BCTsT} holds for the pair $(0,1)$ and is equivalent to \eqref{pair01_}. Therefore, utilizing \eqref{BCTsT} and noting that $d/dt\{p_{i-1}^{i}\}=v_{i-1}^{i}$, $d/dt\{v_{i-1}^{i}\}=a_{i-1}^{i}$, and $\Tilde{y}_{i-1}^{i}=a_{i-1}^{i}$, we confirm that the state-space representation, \it{the equation (14) of the paper}, \normalfont is valid. Thus, the proof of \it{Theorem 1} \normalfont is complete.

\section{Proof of Theorem 2}
\label{appendixB}
Given the coupled dynamics [\it{the equation (14) of the paper}], \normalfont $a_{i-1}^{i}(s)$ would be the summation of zero state; $\mathbf{\Xi}_{i-1}^{i}=[0;\;0;\;0]$, and zero input; $\Tilde{u}_{i-1}^{i}=0$,  responses. As such, the zero input response, denoted as $a_{i-1}^{i,zi}(s)$, is 
\begin{equation}
\begin{split}
 a_{i-1}^{i,zi}(s)&=\mathbf{C}_{i-1}^{i}\left(s\mathbf{I}_{3}-\mathbf{A}_{i-1}^{i}\right)^{-1} \mathbf{\Xi}_{i-1}^{i}=\mathbf{K}_{i-1,i}^{ini}\mathbf{T}_{1}s^{2}\Upsilon_{i-1}^{i}(s)
 \end{split}  
 \label{eq49c}
\end{equation}
where $\mathbf{K}_{i-1,i}^{ini}\triangleq \left[-\nu_{i-1}^{i}\bar{k}_{i-1}^{i},\; -\theta_{i-1}^{i}\bar{k}_{i-1}^{i} -\nu_{i-1}^{i}\bar{b}_{i-1}^{i},\;\varphi_{i-1}^{i}\right]$, and $\mathbf{I}_{3}$ is the identity matrix of size $3$. 
Also, the zero state response, denoted by $a_{i-1}^{i,zs}(s)$, would be $a_{i-1}^{i,zs}(s)=\mathcal{G}_{i-1}^{i}(s)\Tilde{u}_{i-1}^{i}(s)$ which, considering \it{the equation (17) of the paper}, \normalfont can be explored as
\begin{equation}
\begin{split}
a_{i-1}^{i,zs}(s)=&(hs^2+bs+k)\Pi_{i-1}^{i}(s)\Upsilon_{i-1}^{i}(s)+\left(\Theta_{i-1}^{i}s+\Gamma_{i-1}^{i}\right)\Upsilon_{i-1}^{i}(s)\\&-\frac{U(i-\gamma)(\tau_{i-1}-\tau_{i})}{\tau_{i-1}\tau_{i}}(\mathcal{M}_{i-1}^{i}(s)-a_{0}(s))s^{2}\Upsilon_{i-1}^{i}(s)\\&+\frac{U(\gamma-i)(1+\tau_{i} s)a_{0}(s)s^{2}\Upsilon_{i-1}^{i}(s)}{\tau_{i}}
\end{split}  
\label{kj78q}
\end{equation}
Since $a_{i-1}^{i}(s)=a_{i-1}^{i,zi}(s)+a_{i-1}^{i,zs}(s)$, using \eqref{eq49c}-\eqref{kj78q} and noting that $\mathcal{G}_{i-1}^{i}(s)=s^{2}\Upsilon_{i-1}^{i}(s)$, results in \it{the equation (21) of the paper}. \normalfont Thus, the proof of \it{the Theorem 2} \normalfont completed.
\end{document}